\documentclass[twocolumn, aps, prl, nofootinbib, superscriptaddress]{revtex4}
\usepackage{graphicx}
\usepackage{amsfonts}
\usepackage{amssymb}
\usepackage{amsthm}
\usepackage{array}
\usepackage{amsmath}
\usepackage{verbatim} 
\usepackage{hyperref}
\usepackage{color}
\usepackage{bbold}
\usepackage{epstopdf}
\usepackage{mathtools}
\usepackage[normalem]{ulem}

\newtheorem{lemma}{Lemma}
\newtheorem{proposition}{Proposition}

\newtheorem{theorem}{Theorem}

\newcommand{\ket}[1]{\left\vert#1\right\rangle}
\newcommand{\bra}[1]{\left\langle#1\right\vert}

\def\bra#1{\langle #1|}
\def\ket#1{ |#1 \rangle}

\def\Tr{\mbox{\rm Tr}}

\begin{document}
\title{Condition on the R\'enyi Entanglement Entropy under Stochastic Local Manipulation}
\author{Hyukjoon Kwon}
%\email{h.kwon@imperial.ac.uk}
\affiliation{QOLS, Blackett Laboratory, Imperial College London, London SW7 2AZ, United Kingdom}
\author{A. J. Paige}
\affiliation{QOLS, Blackett Laboratory, Imperial College London, London SW7 2AZ, United Kingdom}
\author{M. S. Kim}
\affiliation{QOLS, Blackett Laboratory, Imperial College London, London SW7 2AZ, United Kingdom}
\affiliation{Korea Institute for Advanced Study, Seoul, 02455, South Korea}
\begin{abstract}
The R\'enyi entanglement entropy (REE) is an entanglement quantifier considered as a natural generalisation of the entanglement entropy. When it comes to stochastic local operations and classical communication (SLOCC), however, only a limited class of the REEs satisfy the monotonicity condition, while their statistical properties beyond mean values have not been fully investigated. Here, we establish a general condition that the probability distribution of the REE of any order obeys under SLOCC. The condition is obtained by introducing a family of entanglement monotones that contain the higher-order moments of the REEs. The contribution from the higher-order moments imposes a strict limitation on entanglement distillation via SLOCC. We find that the upper bound on success probabilities for entanglement distillation exponentially decreases as the amount of raised entanglement increases, which cannot be captured from the monotonicity of the REE. Based on the strong restriction on entanglement transformation under SLOCC, we design a new method to estimate entanglement in quantum many-body systems from experimentally observable quantities.
\end{abstract}
\pacs{}
\maketitle

Quantum entanglement is an essential resource to achieve quantum advantages in various nonclassical tasks, including quantum teleportation \cite{Bennet93} and communication \cite{Horodecki05}. The fields of many-body physics have recognised entanglement as a useful quantity to characterise quantum ground states \cite{Eisert10} and to witness quantum phase-transitions \cite{Osterloh02, Bayat17}. For a bipartite pure state $\ket{\Psi}_{AB}$, the most widely-studied measure to quantify entanglement is the entanglement entropy, $E_S (\Psi) = S(\rho_B) = - \Tr \left[ \rho_B \log \rho_B \right]$, given by the von Neumann entropy of  the local quantum state $\rho_{B} = \Tr_{A} \ket{\Psi}_{AB} \bra{\Psi} $. More generally, the R\'enyi entanglement entropy (REE) \cite{Vidal00},
\begin{equation}
\label{Eq:ReeDef}
E_\alpha (\Psi) = S_\alpha (\rho_B) = \frac{1}{1-\alpha} \log \Tr[\rho^\alpha_B],
\end{equation}
has been studied as an extended class of entanglement quantifiers. The entanglement entropy $E_S$ can be retrieved in the limit $\alpha \rightarrow 1$. The REEs of low and high $\alpha$'€™s behave differently given changes of the entanglement spectrum \cite{Li08, Calbrese08}, which has made them useful to classify quantum phases \cite{Flammia09, Cui12, Franchini14}.  The REE and R\'enyi entropy of order $\alpha <1$, especially $\alpha = 1/2$, have been attracting attention for quantifying quantum correlations in many-body systems \cite{Calabrese13, Alba19, Camilo19}. Meanwhile, the REE of order $\alpha > 1$ can be estimated without quantum-state tomography \cite{Ekert02, Calabrese04, Mintert07}, and for $\alpha =2$, it has recently been measured experimentally in quantum many-body systems \cite{Islam15, Kaufman16, Linke18, Brydges19}.

One of the most important properties of entanglement is that we cannot increase it deterministically by any local operation and classical communication (LOCC) protocols. Nevertheless, it is possible to distill the maximally entangled state from a partially entangled state by allowing the success probability to be less than unity \cite{Jonathan99, Lo01}. The monotonicity condition for entanglement measures ensures that the degree of average entanglement for the total system does not increase by any stochastic LOCC (SLOCC) protocols. The REEs of order $0 \leq \alpha \leq 1$ satisfy such the condition \cite{Vidal00}, and they serve as reliable measures of entanglement. However, despite the advantages to be experimentally measurable, the REEs of order $\alpha > 1$ have a limitation as they do not satisfy the monotonicity condition \cite{Horodecki09}. Here, recalling that the monotonicity condition is based on the average entanglement, we ask the question whether the REEs of any order can be of use to characterise entanglement under SLOCC by taking into account their higher-order moments. There have been attempts to find refined conditions on the statistical properties of entanglement beyond its mean value \cite{Jonathan99, Lo01}. These have, however, been limited to the study of the success probability of nondeterministic transformations when the exact form, i.e., the Schmidt decompositions of outcome states, is given.

In this Letter, we explore a general condition on entanglement transformation through SLOCC by focusing only on the REE of the outcome states, without characterising their Schmidt decompositions. We introduce a generalised entanglement entropy (GEE) as an entanglement monotone, and based on this, establish a condition on the distribution of the REE under any SLOCC protocols. From this condition, we demonstrate that the success probability of raising entanglement exponentially decreases as the entanglement required to distill increases. This provides a strong limitation on entanglement manipulation under SLOCC protocols, for instance in distilling a moderate amount of entanglement. Our results can be applied to the estimation of entanglement in quantum many-body systems, as a lower bound on the REE is obtained from the higher-order moments of $E_2$ after applying an SLOCC protocol. Finally, we discuss how our results can be extended for mixed states.

{\it Condition on the REE distribution under SLOCC}.---Let us suppose that the initial bipartite pure state $\ket{\Psi}_{AB}$ transforms through an SLOCC protocol as 
\begin{equation}
\label{Eq:LOCC}
\ket{\Psi}_{AB} \xrightarrow{{\cal E}_{\rm SLOCC}} \{ p_m, \ket{\Psi_m}_{AB} \},
\end{equation}
with outcome probabilities $0 \leq p_m \leq 1$ satisfying $\sum_m p_m = 1$.
%For simplicity, we shall denote bipartite states without the subscript $AB$, unless further specification is needed.
Entanglement of the outcome state can increase depending on the SLOCC protocol, while the monotonicity of the REE of order $0 \leq \alpha \leq 1$ guarantees that average entanglement does not increase through any SLOCC protocol, i.e., $\langle \Delta E_\alpha \rangle \leq 0$. Here, $\langle O \rangle := \sum_m p_m O(\Psi_m)$ and $\Delta E_{\alpha}(\Psi_m) := E_{\alpha} (\Psi_m) - E_{\alpha}(\Psi)$. However, the REE of order $\alpha > 1$ does not obey such condition \cite{Horodecki09}. Nevertheless, we find the following statistical property that holds for the REE of any order.
\begin{theorem}
\label{Thm:REE}
Under any SLOCC protocol, the outcome statistics of the REE obey the following:
\begin{equation}
\label{Eq:22}
\langle e^{ s (1-\alpha) \Delta E_{\alpha}} \rangle 
\begin{cases}
  \leq 1 & (0 < \alpha < 1~{\rm and}~  s \leq \frac{1}{\alpha} ) \\
  \geq 1 & ( \alpha > 1~{\rm and}~ s \geq \frac{1}{\alpha}) \qquad.
\end{cases}
\end{equation}
\end{theorem}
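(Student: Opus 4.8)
The plan is to recognise that the exponentiated quantity is nothing but a power of the ``$\alpha$-purity'' $P_\alpha := \Tr[\rho_B^\alpha]$, and thereby convert the statistical statement into the (anti)monotonicity of a single entanglement functional. Since $e^{(1-\alpha)E_\alpha(\Psi)} = P_\alpha(\rho_B)$ by definition, the outcome variable factorises as $e^{s(1-\alpha)\Delta E_\alpha(\Psi_m)} = P_\alpha((\rho_m)_B)^s / P_\alpha(\rho_B)^s$, so that $\langle e^{s(1-\alpha)\Delta E_\alpha}\rangle = \langle P_\alpha^s\rangle / P_\alpha(\rho_B)^s$. Hence Theorem~\ref{Thm:REE} is equivalent to the pair of statements $\langle \mu_s\rangle \le \mu_s(\Psi)$ for $0<\alpha<1$ and $\langle\mu_s\rangle\ge\mu_s(\Psi)$ for $\alpha>1$, where $\mu_s(\Psi) := (\Tr[\rho_B^\alpha])^s$ is the candidate GEE. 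First I would therefore isolate $\mu_s$ as the object to control and drop the logarithms entirely.

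Next I would prove that $\mu_s$ is an entanglement monotone (and $-\mu_s$ in the $\alpha>1$ regime), along the lines used for the REE with $0\le\alpha\le1$. Using that any pure-state SLOCC protocol decomposes into rounds of one-sided generalised measurements interspersed with local unitaries, and that $\mu_s$ depends only on the Schmidt spectrum (so local unitaries leave it invariant), it suffices to treat a single measurement $\{E_m\}$ with $\sum_m E_m = I$ on one party. The key averaging lemma is that measuring one party leaves the \emph{other} party's reduced state unchanged on average: writing $\rho_B = D^2$ with $D=\mathrm{diag}(\sqrt{\lambda_i})$ in the Schmidt basis, a short computation gives $(\rho_m)_B = D E_m^{T} D / p_m$, and hence $\sum_m p_m (\rho_m)_B = D\,(\sum_m E_m)^{T} D = \rho_B$. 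Because the monotone is symmetric between $A$ and $B$, I can always evaluate it on the non-measuring side, so Jensen's inequality applied to the average-preserving ensemble $\{p_m,(\rho_m)_B\}$ reduces everything to a convexity property: $\langle\mu_s\rangle\le\mu_s(\Psi)$ holds once $\mu_s$ is a concave function of the eigenvalues of $\rho_B$, and the reversed inequality once $\mu_s$ is convex. Chaining the rounds by the tower property then yields the claim for the full protocol.

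The analytic heart of the argument is thus the (non)convexity of $\mu_s$, and here the threshold $s=1/\alpha$ appears naturally. Writing $\mu_s = \norm{\vec\lambda}{\alpha}^{\alpha s} = N^{t}$ with $N:=\norm{\vec\lambda}{\alpha}=(\sum_i\lambda_i^\alpha)^{1/\alpha}$ the $\ell_\alpha$ (quasi)norm and $t:=\alpha s$, I would invoke two facts: for $0<\alpha<1$, $N$ is \emph{concave} on the positive orthant (reverse Minkowski inequality plus degree-one homogeneity), while for $\alpha>1$, $N$ is a genuine norm and hence convex. Composing with the power map $x\mapsto x^{t}$, which is concave nondecreasing for $0\le t\le1$ and convex nondecreasing for $t\ge1$, gives exactly what is needed: $\mu_s$ is concave when $0<\alpha<1$ and $t\le1$ (that is, $s\le1/\alpha$), and convex when $\alpha>1$ and $t\ge1$ (that is, $s\ge1/\alpha$). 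At the threshold itself, $\mu_{1/\alpha}=\norm{\vec\lambda}{\alpha}$ reduces to the bare $\ell_\alpha$ norm, making the boundary transparent.

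I expect the main obstacle to be the concavity of the sub-unit $\ell_\alpha$ norm together with the correct bookkeeping of the composition rule across $t=1$, since the naive ``concave $\circ$ concave'' reasoning fails once $x^t$ turns convex; the clean reformulation $\mu_s=N^{\alpha s}$ is what makes this tractable and pins the exponent threshold to $1/\alpha$. A secondary point that must be handled is the range of $s$: the composition argument needs $t\ge0$, so the first case genuinely requires $0\le s\le1/\alpha$ rather than all $s\le1/\alpha$. This is not a mere technicality---taking $\ket\Psi$ maximally entangled and applying a projective measurement in the Schmidt basis sends every outcome to a product state, giving $\langle e^{s(1-\alpha)\Delta E_\alpha}\rangle = 2^{-s/2}>1$ for $s<0$---so positivity of $s$ is essential and should be retained as a standing assumption.
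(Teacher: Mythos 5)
Your proof follows the same analytic core as the paper's: the paper likewise reduces Theorem~\ref{Thm:REE} to concavity/convexity of $\| \rho_B \|_\alpha^{\alpha s}$ composed from the $\alpha$-norm and the power map $t \mapsto t^{\alpha s}$ (its Lemma~1 in the Supplemental Material), and then converts monotonicity of the resulting functional into Eq.~\eqref{Eq:22}. The genuine differences are organisational. Where you re-derive the measurement-averaging step yourself --- the identity $(\rho_m)_B = D E_m^T D / p_m$, hence $\sum_m p_m (\rho_m)_B = \rho_B$, followed by Jensen and the tower property --- the paper simply invokes Vidal's theorem that concave, unitarily invariant functions of $\rho_B$ define entanglement monotones; your computation is correct and is essentially the proof of that cited result. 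And where you work with the bare purity powers $\mu_s = (\Tr \rho_B^\alpha)^s$, the paper keeps the affine normalisation $\frac{1}{s(1-\alpha)}\left[ e^{s(1-\alpha)E_\alpha} - 1 \right]$ defining the GEE; your normalisation-free bookkeeping is equivalent and makes the direction of each inequality more transparent, since it is read off from concavity versus convexity rather than from the sign of $s(1-\alpha)$.

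Two points deserve attention. First, a real though fixable gap: Jensen's inequality is applied to the matrix convex combination $\sum_m p_m (\rho_m)_B = \rho_B$, whose members do not commute in general, so concavity of $\mu_s$ as a symmetric function of the eigenvalue vector on the positive orthant is not by itself sufficient --- you need concavity on the cone of positive semidefinite matrices. Your reverse-Minkowski argument must therefore be run at the Schatten-norm level (which is exactly what the paper's Lemma~1, following Bourin--Hiai, provides), or the vector statement must be lifted by Davis's theorem on unitarily invariant spectral functions; as written, ``concave function of the eigenvalues of $\rho_B$'' elides this step. Second, your insistence on $s \geq 0$ is not a weakness of your proof but a correction to the paper: the set $\Omega$ and the first case of Eq.~\eqref{Eq:22} formally admit $s < 0$, yet the paper's own rearrangement (which tracks only ``the sign of $(1-\alpha)$'') really depends on the sign of $s(1-\alpha)$. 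For $0 < \alpha < 1$ and $s < 0$ the GEE monotonicity yields $\langle e^{s(1-\alpha)\Delta E_\alpha} \rangle \geq 1$, in agreement with your Bell-state example (whose value $2^{-s/2}$ corresponds to $\alpha = 1/2$; in general it is $2^{-s(1-\alpha)} > 1$). The first case of the theorem should therefore be read with $0 < s \leq 1/\alpha$, as you say.
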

\noindent We denote $\Omega := \{ (\alpha, s) | 0 < \alpha < 1~{\rm and}~  s \leq 1/\alpha \} \cup \{ (\alpha, s) | \alpha > 1~{\rm and}~ s \geq 1/\alpha \}$, where Eq.~\eqref{Eq:22} holds. We note that the condition is stronger than the monotonicity condition of the REEs of order $0<\alpha < 1$ due to the convexity of the exponential function. It also provides more information about the higher-order moments $\langle (\Delta E_\alpha)^k \rangle$ beyond the mean value for any $\alpha \neq 1$. This result can be compared to entanglement fluctuation theorems \cite{Alhambra19, Kwon19} showing the equality relations regarding the higher-order moments of outcome entanglement statistics. While previous works focus on a specific LOCC protocol \cite{Alhambra19} and the fluctuation of the Schmidt coefficients \cite{Kwon19}, our results do not depend on the form of LOCC protocol while dealing with the spectrum of entanglement measures given by the REE.

We sketch the proof of Theorem~\ref{Thm:REE} by introducing a family of entanglement monotones, GEEs, based on the generalised quantum entropy \cite{Hu06, Rastegin11, Bourin11}.
%}, where various properties including the concavity on positive semi-definite matrices have been studied \cite{Hu06, Rastegin11, Bourin11}. Vidal's work \cite{Vidal00} then allows us to define the following entanglement monotone.
\begin{proposition} 
\label{Prop:AntiNorm}
The GEE defined as
\begin{equation}
E_{(\alpha, s)} (\Psi) := \frac{1}{s(1-\alpha)} \left[ e^{ s(1-\alpha) E_\alpha (\Psi) } -1 \right]
\end{equation}
is an entanglement monotone for $(\alpha, s) \in \Omega$, satisfying the following conditions:
(i) $E_{(\alpha,s)} (\Psi) = 0$ \textit{if and only if} $\ket{\Psi}_{AB}$ is separable, and (ii) $\langle \Delta E_{(\alpha,s)} \rangle \leq  0$ via SLOCC. When $s \rightarrow 0$, $E_{(\alpha,s)}$ becomes $E_\alpha$, whereas $s=1$ gives the entanglement measure based on the Tsallis entropy \cite{Tsallis88}.
\end{proposition}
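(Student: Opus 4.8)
The plan is to recognise $E_{(\alpha,s)}$ as a symmetric, concave function of the Schmidt spectrum and to invoke the standard characterisation of pure-state entanglement monotones. Writing $\{\lambda_i\}$ for the eigenvalues of $\rho_B$ and using $e^{s(1-\alpha)E_\alpha}=(\Tr[\rho_B^\alpha])^s=(\sum_i\lambda_i^\alpha)^s$, the GEE becomes $E_{(\alpha,s)}=\frac{1}{s(1-\alpha)}[(\sum_i\lambda_i^\alpha)^s-1]$, which depends only on $\{\lambda_i\}$ and is invariant under local unitaries. Since any local generalised measurement preserves the average reduced state, $\sum_m p_m\rho_B^{(m)}=\rho_B$, concavity of $E_{(\alpha,s)}$ (as a symmetric spectral function, equivalently as a unitarily invariant function on density matrices) immediately yields $\sum_m p_m E_{(\alpha,s)}(\Psi_m)\le E_{(\alpha,s)}(\Psi)$ by Jensen, i.e.\ condition (ii); building a general SLOCC protocol out of such steps then gives monotonicity \cite{Vidal00}. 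The whole of (ii) therefore reduces to establishing concavity on the probability simplex for $(\alpha,s)\in\Omega$.

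To prove concavity I would route through the $\ell_\alpha$ (anti-)norm $N(\lambda):=\norm{\vec{\lambda}}{\alpha}=(\sum_i\lambda_i^\alpha)^{1/\alpha}$, so that $E_{(\alpha,s)}=\frac{1}{s(1-\alpha)}(N^{s\alpha}-1)$. The key structural input is that for $\alpha>1$ the quantity $N$ is the ordinary $\ell_\alpha$ norm and hence convex, whereas for $0<\alpha<1$ it is an anti-norm that is concave on the positive orthant, being positively homogeneous of degree one and superadditive by the reverse Minkowski inequality \cite{Bourin11}. Concavity of $E_{(\alpha,s)}$ then follows from elementary composition rules for the power $t\mapsto t^{s\alpha}$, organised by cases. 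For $0<\alpha<1$: when $0<s\le1/\alpha$ the exponent $s\alpha\in(0,1]$ makes $N^{s\alpha}$ a concave increasing function of the concave $N$, and the prefactor is positive; when $s<0$ the exponent $s\alpha<0$ makes $N^{s\alpha}$ convex (a decreasing power of the strictly positive concave $N$, using $N\ge\norm{\vec{\lambda}}{1}=1$), and the prefactor is negative, so either way $E_{(\alpha,s)}$ is concave. For $\alpha>1$, $s\ge1/\alpha$ forces $s\alpha\ge1$, so $N^{s\alpha}$ is a convex increasing power of the convex $N$ while $\frac{1}{s(1-\alpha)}<0$, again giving concavity.

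The remaining conditions are quick. For (i), injectivity of $x\mapsto x^s$ on $x>0$ gives $E_{(\alpha,s)}=0\Leftrightarrow\sum_i\lambda_i^\alpha=1$, which for $\alpha\ne1$ forces a single unit eigenvalue, i.e.\ $\rho_B$ pure and $\ket{\Psi}$ separable; the same case analysis shows $E_{(\alpha,s)}\ge0$ with equality only there. The limits are immediate: Taylor expanding $e^{s(1-\alpha)E_\alpha}$ at $s=0$ returns $E_\alpha$, and $s=1$ gives the Tsallis form $\frac{1}{1-\alpha}(\sum_i\lambda_i^\alpha-1)$. I expect the \emph{main obstacle} to be precisely the regime $0<\alpha<1$ with $1<s\le1/\alpha$, where $(\sum_i\lambda_i^\alpha)^s$ is a super-linear power of a concave function and concavity is not automatic; factoring through the concave anti-norm $N$ to lower the outer exponent to $s\alpha\le1$ is what rescues it, and the threshold $s\alpha=1$, i.e.\ $s=1/\alpha$, is exactly the boundary of $\Omega$. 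As a cross-check one can compute $\frac{1}{s(1-\alpha)}\nabla^2E_{(\alpha,s)}=-\alpha P^{s-1}\mathrm{diag}(\lambda_i^{\alpha-2})+\tfrac{(s-1)\alpha^2}{1-\alpha}P^{s-2}vv^{\mathsf T}$ with $P=\sum_i\lambda_i^\alpha$ and $v_i=\lambda_i^{\alpha-1}$, whose negative semidefiniteness reduces to $\frac{(s-1)\alpha}{1-\alpha}\le1$, equivalent to $(\alpha,s)\in\Omega$.
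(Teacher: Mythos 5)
Your proposal is correct, and it lands on the same skeleton as the paper's proof --- reduce monotonicity to concavity via Vidal's criterion, then obtain concavity by composing an $\alpha$-norm/anti-norm with the power $t \mapsto t^{\alpha s}$ while tracking the sign of the prefactor $\frac{1}{s(1-\alpha)}$ --- but you run the argument at the level of the Schmidt vector, whereas the paper runs it at the level of the density matrix, and that difference is where the real work hides. The paper's key lemma is concavity ($0<\alpha\leq 1$) / convexity ($\alpha\geq 1$) of the Schatten norm $\| X \|_\alpha$ on positive semidefinite matrices, proved from homogeneity plus concavity/convexity of the trace function $\Tr X^\alpha$, after which Vidal's criterion applies directly. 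Your key lemma is the scalar (reverse) Minkowski inequality on the positive orthant, and you bridge back to matrices by asserting that concavity of the symmetric spectral function on the simplex is ``equivalently'' concavity of the unitarily invariant function on density matrices. That bridge is the load-bearing step: in the Jensen argument the post-measurement reduced states $\rho_B^{(m)}$ generally do not commute, so simplex concavity alone proves nothing. The equivalence you invoke is a true classical result (Davis' theorem on unitarily invariant convex functions, provable from Schur--Horn plus Birkhoff--von Neumann), so there is no error, but it must be stated and justified as a lemma --- it does precisely the job of the paper's matrix-level Schatten lemma; relatedly, your claim that any local measurement preserves the average reduced state $\sum_m p_m \rho_B^{(m)} = \rho_B$ holds literally only for measurements on $A$, the $B$ side following from the Schmidt symmetry of pure states as in Vidal's treatment. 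In exchange, your route is more elementary and delivers three things the paper does not spell out: a correct explicit treatment of $s<0$ (the paper's stated reason, that $t^{\alpha s}$ is concave whenever $\alpha s \leq 1$, fails for $\alpha s < 0$ and is rescued exactly as you say by the negative prefactor converting convexity into concavity); an actual proof of property (i); and the Hessian computation showing that $\Omega$ is exactly the concavity region, i.e.\ that the parameter range is sharp. One slip in that cross-check: the matrix you display is $\nabla^2 E_{(\alpha,s)}$ itself, not $\frac{1}{s(1-\alpha)}\nabla^2 E_{(\alpha,s)}$, though the resulting criterion $\frac{(s-1)\alpha}{1-\alpha} \leq 1 \Leftrightarrow (\alpha,s)\in\Omega$ (via Cauchy--Schwarz, with equality at $x_i \propto \lambda_i$) is correct.
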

\noindent The monotonicity of the GEE can be obtained from the concavity of the generalised entropy on positive semi-definite matrices \cite{Hu06, Rastegin11, Bourin11} combined with Vidal's work \cite{Vidal00}, from which Theorem~\ref{Thm:REE} naturally follows. Detailed proofs of the Theorem and Propositions throughout this Letter can be found in the Supplemental Material \cite{Suppl}.

We focus on the case $s=1/\alpha$, where $E_{(\alpha, s)}$ becomes an entanglement monotone for all $\alpha \in (0, 1) \cup (1, \infty)$. When $\alpha$ approaches $1$, the measure converges to the entanglement entropy, i.e., $\displaystyle \lim_{\alpha \rightarrow 1} E_{(\alpha,1/\alpha)} = E_S$, and Eq.~\eqref{Eq:22} becomes $\langle \Delta E_S \rangle \leq 0$. We also note that $E_{(\alpha, 1/\alpha)}$ for $\alpha = 1/2$ has a direct connection to the entanglement negativity for pure bipartite states as $E_{(1/2,2)} (\Psi)  = \| (\ket\Psi_{AB} \bra\Psi)^{T_B} \|_1 -1 $, where $T_B$ denotes the partial transpose on $B$ \cite{Vidal02}. We will show that improved bounds on entanglement distillation and estimation tasks can be obtained from Theorem~\ref{Thm:REE}, and $s=1/\alpha$ gives the tightest bounds for both cases.

{\it Probability bound on entanglement distillation via SLOCC}.---We first demonstrate that Theorem~\ref{Thm:REE} leads to a strong restriction on entanglement distillation via SLOCC. Let us clarify the problem by defining the accumulated success probability $P \left(  E_\alpha  \geq E_{\rm target} \right) = \sum_{m \in \Gamma}  p_m$, where $\Gamma = \left\{ m |  E_\alpha (\Psi_m)  \geq E_{\rm target} \right\}$ is a set where the outcome states $\ket{\Psi_m}_{AB}$ have entanglement $E_\alpha(\Psi_m)$ higher than the desired value $E_{\rm target}$.

Finding the highest $P \left(  E_\alpha  \geq E_{\rm target} \right)$ can be reformulated by using the necessary and sufficient condition for a pure state transition under SLOCC \cite{Jonathan99}. Furthermore, for \textit{any entanglement monotone} $E$, we show that the optimal success probability is given by
$$
\begin{aligned}
&\sup_{{\cal E}_{\rm SLOCC}} P \left(  E  \geq E_{\rm target} \right) \\
&~ = \max_{\Psi'} \left[ \min_{l \in \{ 1,2, \cdots, d\} } \left( \frac{\sum_{i=l}^d \lambda_i^\downarrow(\Psi)} {\sum_{i=l}^d \lambda_i^\downarrow(\Psi')} \right) \vline E(\Psi') = E_{\rm target} \right],
\end{aligned}
$$
where $\lambda^\downarrow_{i} (\Psi)$ are the Schmidt coefficients of $\ket{\Psi}_{AB}$ in descending order with the Schmidt rank $d$ (see the Supplemental Material \cite{Suppl} for the proof). However, this optimisation problem is computationally challenging as it does not belong to a convex optimisation problem \cite{note1}. Evaluating the highest success probability requires searching all possible outcome states satisfying the constraint, $E(\Psi') = E_{\rm target}$, and the nonlinearity of entanglement monotones, including the REE, makes the problem even more complicated especially for large $d$. Furthermore, no such optimisation can be applied to $E_\alpha$ for $\alpha >1$, since it is not an entanglement monotone.

Instead of finding the exact solution to this problem, we investigate an upper bound on the success probability for entanglement distillation. We find the following bound based on Theorem~\ref{Thm:REE}, or equivalently the monotonicity of the GEE:
\begin{proposition} 
$\forall \alpha >0$, the accumulated success probability for entanglement distillation is upper bounded by
\begin{equation}
\label{Eq:PalphaBound}
P\left(  E_\alpha  \geq E_{\rm target}\right) \leq \frac{ e^{ \left( \frac{1-\alpha}{\alpha} \right) E_\alpha(\Psi) } -1 } { e^{  \left( \frac{1-\alpha}{\alpha} \right) E_{\rm target} }-1} =: P^{\rm GEE}_\alpha.
\end{equation}
\end{proposition}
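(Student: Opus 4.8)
The plan is to read the bound directly off the monotonicity of the single monotone $E_{(\alpha,1/\alpha)}$, treating the statement as a Markov-type (thresholding) inequality for the exponentiated REE. First I would specialise Theorem~\ref{Thm:REE}, equivalently Proposition~\ref{Prop:AntiNorm}, to $s=1/\alpha$, the one choice that makes the relation hold for every $\alpha\in(0,1)\cup(1,\infty)$. Writing $c:=(1-\alpha)/\alpha$ and using $\sum_m p_m=1$, the monotonicity $\langle \Delta E_{(\alpha,1/\alpha)}\rangle\le 0$ is equivalent to
\[
\sum_m p_m\bigl(e^{c\,E_\alpha(\Psi_m)}-1\bigr)\ \lessgtr\ e^{c\,E_\alpha(\Psi)}-1,
\]
with $\le$ when $0<\alpha<1$ (so $c>0$) and $\ge$ when $\alpha>1$ (so $c<0$); the reversal is precisely the sign change of the prefactor $1/c=\alpha/(1-\alpha)$ picked up when clearing it from the definition of $E_{(\alpha,1/\alpha)}$.

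Next I would truncate the sum to the success set $\Gamma=\{m:E_\alpha(\Psi_m)\ge E_{\rm target}\}$. Since every $E_\alpha(\Psi_m)\ge 0$, each summand $e^{c\,E_\alpha(\Psi_m)}-1$ has a fixed sign (nonnegative for $c>0$, nonpositive for $c<0$), so the $\Gamma^{c}$ contribution may be discarded in the favourable direction, leaving the same inequality with the sum restricted to $\Gamma$. Within $\Gamma$ the threshold condition gives $e^{c\,E_\alpha(\Psi_m)}-1\ge e^{c\,E_{\rm target}}-1$ for $c>0$ and the reverse for $c<0$, as $x\mapsto e^{cx}$ is increasing or decreasing according to the sign of $c$. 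Replacing each summand by the threshold value and factoring out $P(E_\alpha\ge E_{\rm target})=\sum_{m\in\Gamma}p_m$ places $P(E_\alpha\ge E_{\rm target})\,(e^{c\,E_{\rm target}}-1)$ on one side of $e^{c\,E_\alpha(\Psi)}-1$.

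Finally I would divide through by $e^{c\,E_{\rm target}}-1$. For a nontrivial target $E_{\rm target}>0$ this factor is positive when $0<\alpha<1$ and negative when $\alpha>1$, and in each regime the division restores the single inequality $P(E_\alpha\ge E_{\rm target})\le(e^{c\,E_\alpha(\Psi)}-1)/(e^{c\,E_{\rm target}}-1)=P^{\rm GEE}_\alpha$. The one delicate point — where a careless argument goes wrong — is the sign bookkeeping for $\alpha>1$: there the REE itself is not a monotone, the prefactor $1/c$, the summands, and the final divisor all change sign, and I must verify that these three reversals conspire to leave the stated direction of the bound unchanged. Everything else is the elementary thresholding step, and choosing $s=1/\alpha$ rather than any other admissible $s$ is what makes the exponent, and hence $P^{\rm GEE}_\alpha$, tightest.
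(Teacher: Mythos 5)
Your proof is correct and follows essentially the same route as the paper's: both obtain the bound from the monotonicity of the GEE at $s=1/\alpha$ (Theorem 1) via a Markov-type thresholding over the success set $\Gamma$, using $E_\alpha(\Psi_m)\ge 0$ to control the complement and carefully tracking the sign reversals of the prefactor, the summands, and the divisor $e^{c E_{\rm target}}-1$ in the regime $\alpha>1$. The only (cosmetic) difference is that the paper proves the bound for all $(\alpha,s)\in\Omega$ and then separately shows $s=1/\alpha$ minimises it --- a tightness claim you assert but do not prove, though it is not needed for the statement as given.
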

\noindent For $0 < \alpha < 1$, this bound is tighter than the probability bound derived from the monotonicity of the REE, 
\begin{equation}
\label{Eq:MonotoneBound}
\langle \Delta E_\alpha \rangle \leq 0 \Longrightarrow P\left(  E_\alpha  \geq E_{\rm target} \right) \leq \frac{E_\alpha(\Psi)}{E_{\rm target}} =: P_\alpha^{\rm REE}.
\end{equation}
We further extend the bound for $\alpha = 0$ and $\alpha = 1$. For $E_0(\Psi) = \log d < E_{\rm target}$, $P_\alpha^{\rm GEE}$ reaches zero when $\alpha$ approaches zero, implying that no SLOCC can increase the Schmidt rank. For the distillation of the maximally entangled state $\ket{\Phi_d}_{AB} \propto \sum_{i=1}^d \ket{ii}_{AB}$, i.e., $E_{\rm target} = \log d$, we have $\displaystyle \lim_{\alpha \rightarrow 0}P_\alpha^{\rm GEE} = d e^{ (1/d) {\rm Tr} [\log \rho_B]} $, which coincides with the bound given by the $G$-concurrence monotone \cite{Gour05}. Although this bound does not reach the optimal rate found in Ref.~\cite{Jonathan99}, it can be obtained with less information without knowing all the Schmidt coefficients. When $\alpha$ approaches $1$, the bound reduces to $\displaystyle \lim_{\alpha \rightarrow 1} P_\alpha^{\rm GEE} ={E_S(\Psi)}/{E_{\rm target}}$.

By noting that $E_\alpha \leq E_\beta$ for $\alpha \geq \beta$, Eq.~\eqref{Eq:PalphaBound} can be extended to bounds between the REEs of different orders, $P \left(  E_\alpha  \geq E_{\rm target} \right) \leq P \left(  E_\beta  \geq E_{\rm target} \right) \leq P_\beta^{\rm GEE}$. For example, the success probability for raising the entanglement entropy can be bounded as $P(E_S> E_{\rm target})  \leq P_{1/2}^{\rm GEE}$, which can be expressed in terms of the entanglement negativity.
%,  we have
%\begin{equation}
%\label{Eq:PalphabetaBound}
%P\left(  E_\alpha  \geq E_{\rm target}\right) \leq \frac{ e^{ \left( \frac{1-\beta}{\beta} \right) E_\beta(\Psi) } -1 } { e^{  \left( \frac{1-\beta}{\beta} \right) E_{\rm target} }-1},
%\end{equation}
Optimising over all possible $\beta$ leads to
\begin{equation}
\label{Eq:PBoundOpt}
P \left(  E_\alpha  \geq E_{\rm target} \right) \leq \min_{0 \leq \beta \leq \alpha} P_\beta^{\rm GEE} =:P_{\alpha, {\rm opt}}^{\rm GEE}.
%\left[ \frac{ e^{ \left( \frac{1-\beta}{\beta} \right) E_\beta(\Psi)} -1 }{e^{\left( \frac{1-\beta}{\beta} \right) E_{\rm target}} -1 } \right].
\end{equation}
This shows a strong limitation for entanglement distillation via SLOCC as the upper bound on success probabilities exponentially decreases as 
$P( \Delta E_\alpha \geq x) = P( E_\alpha \geq E_\alpha(\Psi) + x) \leq e^{- \left( \frac{1-\alpha}{\alpha} \right) x}$
for $0< \alpha <1$. For the entanglement entropy, we can always find $K >0$ and $k >0$ such that
$$
P( \Delta E_S \geq x) \leq K e^{- k x}.
$$
We note that $P_{\alpha}^{\rm GEE}$ and $P_{\alpha, {\rm opt}}^{\rm GEE}$ are valid bounds for any $\alpha$, while $P_{\alpha}^{\rm REE}$ is not a valid bound for $\alpha>1$.

As an illustrative example, we choose the initial state $\ket{\chi(r)}_{AB}  \propto \sum_{i=1}^d \sqrt{r^i} \ket{ii}_{AB}$ with $r=0.86$ and $d=500$ to compare the probability bounds on raising the entanglement entropy. Figure~\ref{Fig:CompareBounds} shows a significant gap between the bounds $P_{\alpha}^{\rm REE}$ and $P_{\alpha, {\rm opt}}^{\rm GEE}$ when $\Delta E_S$ becomes large. Even though evaluation of the optimal probability is computationally challenging, we can consider some SLOCC protocols that raise entanglement with nonzero success probabilities. A well-known example is the distillation of the $k$-level maximally entangled state $\ket{\Phi_k}_{AB}  \propto \sum_{i=1}^k \ket{ii}_{AB}$ \cite{Lo01}. We also introduce a more efficient protocol by iteratively mixing the maximum and minimum Schmidt coefficients, $\lambda_1^\downarrow$ and $\lambda_d^\downarrow$, which provides a higher success probability than the $\Phi_k$-distillation protocol (for details of the protocol we make use of, see the Supplemental Material \cite{Suppl}). As mixing the Schmidt coefficients always increases entanglement, such protocol always gives the nonzero success probabilities for any pairs of the initial state and target entanglement, unless $E_{\rm target} > \log d$. Figure~\ref{Fig:CompareBounds} also compares $P_{\alpha}^{\rm REE}$ and $P_{\alpha, {\rm opt}}^{\rm GEE}$ for randomly generated initial states and target entanglement entropies.
\begin{figure}[t]
\includegraphics[width=0.9\linewidth]{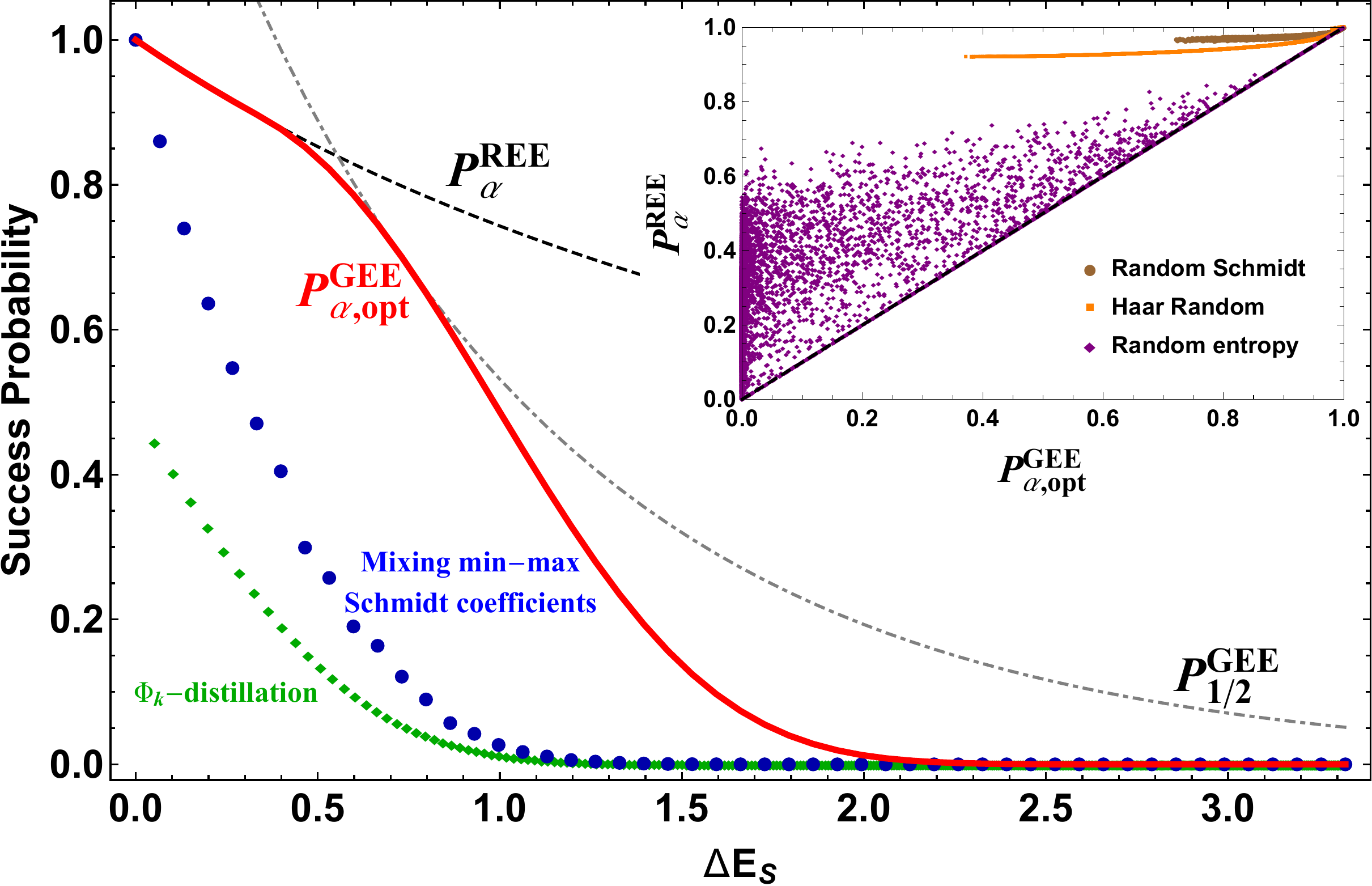}
\caption{Probability bounds given by $P_\alpha^{REE}$ (dashed line), $P_{1/2}^{GEE}$ (dot-dashed line), and $P_{\alpha, {\rm opt}}^{GEE}$ (solid line) for raising the entanglement entropy $(\alpha = 1)$ of the initial state $\ket{\chi(r)}$. Green diamond points represent the distillation probabilities of $\ket{\Phi_k}_{AB}$, while blue circular points represent the success probabilities of the protocol mixing the maximum and minimum Schmidt coefficients. Inset: Comparison between $P_\alpha^{REE}$ and $P_{\alpha, {\rm opt}}^{GEE}$ for randomly generated samples of initial states ($d=500$) and $E_{\rm target}$ \cite{note2}.}
\label{Fig:CompareBounds}
\end{figure}

The probability bound on raising entanglement can also be derived when SLOCC is performed to multiple copies of the initial state, i.e., $\ket{\Psi}^{\otimes n}_{AB} \xrightarrow{{\cal E}_{\rm SLOCC}} \{ p_m, \ket{\Psi_m}_{A' B'} \}$, where $A'$ and  $B'$ are subsystems of $A^{\otimes n}$ and  $B^{\otimes n}$, respectively. By defining the raised amount of the REE per copy as $\Delta \epsilon_\alpha(\Psi_m) = 
[E_\alpha (\Psi_m) - E_\alpha (\Psi^{\otimes n})] /n$, we obtain $P(\Delta \epsilon_\alpha \geq x) \leq e^{ - n \left( \frac{1-\alpha}{\alpha} \right) x}$  for $0<\alpha <1$, implying that the probability bound exponentially decreases as the number of copies increases.

{\it Estimating entanglement from the REE distribution via SLOCC}.--- Until now, we have seen that the lower order REE restricts the success probabilities for raising the higher order REE via SLOCC. Conversely, we show that the distribution of the higher order REE can provide a new method to estimate the lower order REE after applying local measurements. From Theorem~\ref{Thm:REE}, we obtain the following inequality:
\begin{proposition}
\label{Prop:EntBound}
%For $(\alpha, s) \in \Omega_{\rm concave}$ and $ \alpha \leq \beta$, 
For $ 0 < \alpha \leq \gamma$, the REE under any SLOCC transformation satisfies
\begin{equation}
\label{Eq:EntBound}
E_\alpha(\Psi) \geq  \left( \frac{\alpha}{1-\alpha} \right) \log \langle e^{ \left( \frac{1-\alpha}{\alpha} \right) E_\gamma } \rangle.
\end{equation}
%where the right-hand-side of the inequality is maximised when $s=1/\alpha$.
\end{proposition}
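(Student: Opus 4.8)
The plan is to read this bound straight off Theorem~\ref{Thm:REE} specialised to the boundary choice $s=1/\alpha$, and then feed in the fact recalled earlier in the Letter that the REE is non-increasing in its order, so that $E_\gamma(\Psi_m)\le E_\alpha(\Psi_m)$ whenever $\gamma\ge\alpha$. The whole argument is a short chain of monotone manipulations of the single moment-generating inequality in Eq.~\eqref{Eq:22}, with the only subtlety being careful sign-tracking across the two regimes $0<\alpha<1$ and $\alpha>1$.

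First I would set $s=1/\alpha$ in Eq.~\eqref{Eq:22}, noting that $(\alpha,1/\alpha)\in\Omega$ for every $\alpha\in(0,1)\cup(1,\infty)$ and that the prefactor in the exponent becomes $s(1-\alpha)=\frac{1-\alpha}{\alpha}$. Taking $0<\alpha<1$ for concreteness, Eq.~\eqref{Eq:22} reads $\langle e^{\left(\frac{1-\alpha}{\alpha}\right)\Delta E_\alpha}\rangle\le1$. Substituting $\Delta E_\alpha(\Psi_m)=E_\alpha(\Psi_m)-E_\alpha(\Psi)$ and factoring out the fixed input value $E_\alpha(\Psi)$ from the average gives
\[
\langle e^{\left(\frac{1-\alpha}{\alpha}\right)E_\alpha(\Psi_m)}\rangle \le e^{\left(\frac{1-\alpha}{\alpha}\right)E_\alpha(\Psi)}.
\]
For $\alpha>1$ the same steps hold with the inequality reversed, since Eq.~\eqref{Eq:22} is then a $\ge1$ statement.

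Next I would replace $E_\alpha(\Psi_m)$ on the left by $E_\gamma(\Psi_m)$ using $E_\gamma(\Psi_m)\le E_\alpha(\Psi_m)$, which requires watching the sign of $\frac{1-\alpha}{\alpha}$: for $0<\alpha<1$ it is positive, the exponential is increasing, so the substitution only lowers the left-hand side and preserves the $\le$; for $\alpha>1$ it is negative, the exponential is decreasing, the substitution raises the left-hand side and preserves the $\ge$. In both cases one then takes logarithms (monotone) and multiplies through by $\frac{\alpha}{1-\alpha}$. This last factor is negative precisely when $\alpha>1$, so it flips the inequality exactly in the regime where Eq.~\eqref{Eq:22} had already flipped; the two reversals cancel and both regimes collapse onto the single stated bound $E_\alpha(\Psi)\ge\left(\frac{\alpha}{1-\alpha}\right)\log\langle e^{\left(\frac{1-\alpha}{\alpha}\right)E_\gamma}\rangle$. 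The borderline case $\alpha=1$ follows by continuity: expanding $e^{\left(\frac{1-\alpha}{\alpha}\right)E_\gamma}\approx1+\left(\frac{1-\alpha}{\alpha}\right)E_\gamma$ as $\alpha\to1$ reduces the bound to $E_S(\Psi)\ge\langle E_\gamma\rangle$.

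The main obstacle I anticipate is purely the bookkeeping of these sign changes: the regime-dependent direction of Eq.~\eqref{Eq:22}, the sign of $\frac{1-\alpha}{\alpha}$ governing the monotonicity of the exponential in the substitution step, and the sign of the same factor in the final rescaling. The mathematical content is a few lines, so the only genuine risk is mismatching these signs and landing on the wrong inequality direction in one of the two regimes; confirming that they cancel consistently is what lets the clean statement hold uniformly for all $0<\alpha\le\gamma$.
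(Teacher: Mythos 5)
Your proposal is correct and follows essentially the same route as the paper's own proof: both arguments rearrange the moment inequality of Theorem~\ref{Thm:REE} into $E_\alpha(\Psi) \geq \frac{1}{s(1-\alpha)}\log\langle e^{s(1-\alpha)E_\alpha}\rangle$ and then invoke $E_\gamma \leq E_\alpha$ for $\gamma \geq \alpha$, with the sign of $(1-\alpha)$ tracked through both regimes exactly as you describe. The only difference is that the paper proves the bound for all $(\alpha,s)\in\Omega$ and separately shows $s=1/\alpha$ maximises it, whereas you specialise to $s=1/\alpha$ from the start, which suffices for the statement as given.
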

We apply our bounds on the REE to estimate entanglement from the distribution of experimentally observable quantities after applying SLOCC. We focus on the case of $\gamma >1$, especially $\gamma = 2$, where its detection requires less resources than those for $\gamma < 1$ \cite{Ekert02, Calabrese04}. In particular, $E_2$ can be detected from a single copy of a quantum state by using cross-correlations between randomised local measurements \cite{Brydges19}. While the direct measurement of $E_2(\Psi)$ lower bounds the REEs of order $\alpha \leq 2$ as $E_\alpha(\Psi) \geq E_2(\Psi)$, further information about $E_\alpha(\Psi)$ can be extracted from the outcome distribution of entanglement $\{ p_m,  E_2(\Psi_m) \}$ after performing a one-way SLOCC, consisting of local positive-operator valued measurements (POVMs). Based on Proposition~\ref{Prop:EntBound}, we obtain $E_\alpha (\Psi) \geq \hat{E}_\alpha (\{ p_m, E_2(\Psi_m) \}) := \left( \frac{\alpha}{1-\alpha} \right) \log  \big[ \sum_m p_m e^{ \left( \frac{1-\alpha}{\alpha} \right) E_2 (\Psi_{m}) } \big]$. For instance, a lower bound on $E_{1/2} (\Psi)$, equivalent to the logarithmic negativity \cite{Plenio05} and the maximum overlap between the maximally entangled state \cite{Konig09} for pure states, can be obtained as $E_{1/2} (\Psi) \geq \log \left\langle  e^{E_2} \right\rangle$. Such entanglement measure has been recognised as a useful correlation quantifier of quantum many-body systems, especially in quantum quenching dynamics \cite{Gray18,  Alba19, Gruber20}. While the direct detection of $E_{1/2} (\Psi)$ remains challenging, its lower bound can be efficiently estimated by our protocol. By optimising over all possible POVMs on one of the subsystems, $\hat{E}_\alpha$ bounds $E_\alpha(\Psi)$ always tighter than the direct measurement of $E_2(\Psi)$, i.e., $E_\alpha (\Psi) \geq \hat{E}_\alpha \geq E_2(\Psi)$ for all $0 < \alpha \leq 2$. We note that the required number of experimental runs to achieve this advantage is significantly smaller than that to obtain $E_\alpha(\Psi)$ from a tomographic reconstruction of the quantum state, without making additional assumptions \cite{Note3}. When $\alpha$ approaches zero $\displaystyle \lim_{\alpha \rightarrow 0} \hat{E}_\alpha(\{ p_m,  E_2(\Psi_m) \}) = \max\{ E_2(\Psi_m) \}$, and the gap between $\hat{E}_\alpha$ and $E_2(\Psi)$ monotonically decreases to zero as $\alpha$ approaches $2$.
%is comparable to that for measuring $E_2$  \cite{Note3}. 

\begin{figure}[t]
\centering
\includegraphics[width=.9\linewidth]{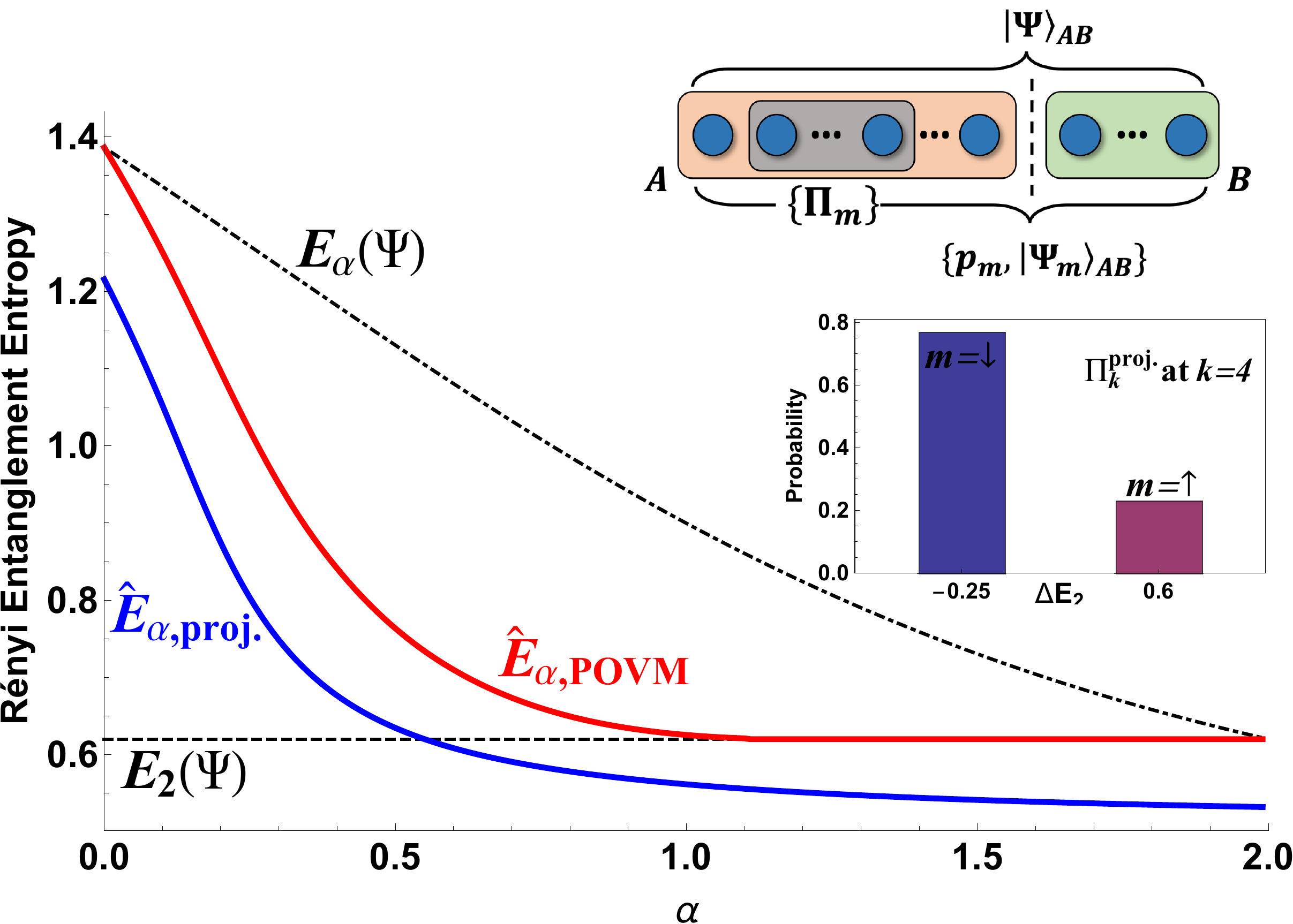}
\caption{Estimation of the REE for the Neel state evolved under the Heisenberg hamiltonian. $\hat{E}_{\alpha, {\rm POVM}}$  (solid-red line) is obtained by optimising over dichotomic POVMs on the subsystem. $\hat{E}_{\alpha, {\rm proj.}}$ (solid-blue line) is evaluated from the outcome REE distribution (inset) after projection measurements on the spin at $k=4$. $E_\alpha(\Psi)$ (dot-dashed line) and $E_2(\Psi)$ (dashed line) are obtained from the exact diagonalisation.}
\label{Fig:Spin}
\end{figure}

As a physical example, we consider the Heisenberg model in an $N$-spin system with the hamiltonian $H = - J \sum_{j=1}^{N} {\vec \sigma}^{(j)} \cdot \vec \sigma^{(j+1)}$ with a periodic boundary condition. Here, $J$ is the interaction strength and $\vec\sigma^{(j)} = (\sigma_x^{(j)}, \sigma_y^{(j)}, \sigma_z^{(j)})$ is the vector of Pauli matrices acting on the $j$th spin. We suppose that the system is initially prepared in the Neel state $\left | \downarrow \uparrow \downarrow \cdots \uparrow \right\rangle$, where  $\left | \uparrow \right \rangle$ and $\left | \downarrow \right \rangle$ are the eigenstates of $\sigma_z$ with the eigenvalues $+1$ and $-1$, respectively. After the state evolves to $\ket{\Psi} = e^{- i H \tau /\hbar} \left | \downarrow \uparrow \downarrow \cdots \uparrow \right\rangle$, we investigate entanglement between two parties with bipartition $N_A$ and $N_B = N- N_A$. Figure~\ref{Fig:Spin} illustrates the REE estimation protocol (POVMs can be performed either on $A$ or $B$) and presents the performance for a specific choice of $\tau = 4.1$, $N=8$ and $N_A=6$. In this case, POVMs on the subsystem $B$ can be realised as collective measurements on two spins as $N_B = 2$. By optimising over dichotomic POVMs on the subsystem, the improvement of the REE estimation for $\alpha =1/2$ is evaluated as $|\hat{E}_{1/2} - E_2(\Psi)| = 0.14$ for the given model and parameters. The protocol can be further simplified by performing single-spin projection measurements $\Pi_k^{\rm proj.} = \{ \left | \uparrow \right\rangle_k \left \langle \uparrow \right |, \left | \downarrow \right\rangle_k \left \langle \downarrow \right | \}$ on the $k$th site, which has been studied in the context of many-body measurement quench dynamics \cite{Bayat18}. While being easier to be implemented, the REE estimation with a single-spin measurement is effective for a limited range of $\alpha$ depending on the physical models and parameters. For our model with $\tau=4.1$, the regime where $\hat{E}_\alpha$ has a better performance than the direct measurement of $E_2(\Psi)$ is numerically verified as $0 \leq \alpha \leq 0.55$.

Our estimation protocol can also be applied for the case when the pure quantum state $\ket{\Psi}_{AB}$  is subject to decoherence and becomes a mixed state $\tilde\rho_{AB} = (1- z) \ket{\Psi}_{AB} \bra{\Psi} +  z \sigma_{AB}$. When $\sigma_{AB}$ is known, e.g., for global depolarisation $\sigma_{AB} \propto {\mathbb 1}_{AB}$, the REE of the uncontaminated state $E_\alpha (\Psi) $ can be estimated from the distribution of the R\'enyi-2 entropy of the subsystem after applying POVMs, similarly to the pure state case. An additional error term due to the decoherence can be calculated based on the continuity of the R\'enyi entropy \cite{Hanson17}. Further discussions on the effect of decoherence along with another physical example, the transverse Ising model in the thermodynamic limits \cite{Vidal03, Latorre04} can be found in the Supplemental Material \cite{Suppl}.

We also note that the right-hand side of Eq.~\eqref{Eq:EntBound} for $\gamma = \alpha$ has been recently studied as a measure for accessible entanglement of indistinguishable particles \cite{Barghathi18} by considering a projection onto the Hilbert space with a fixed particle number in the subsystem.

{\it Generalisation for mixed states}.---We consider generalisation of our results to a mixed state. The GEE for a bipartite mixed state $\rho$ can be constructed as
\begin{equation}
{\rm co}{E}_{(\alpha,s)} (\rho) := \min_{ \{ q_\mu,\psi_\mu \} } \sum_\mu q_\mu E_{(\alpha,s)} (\psi_\mu),
\end{equation}
based on the convex roof construction $\displaystyle {\rm co} f(\rho) := \min_{\{q_\mu, \psi_\mu\}} \sum_\mu q_\mu f(\psi_\mu)$, where $\rho = \sum_\mu q_\mu \ket{\psi_\mu}\bra{\psi_\mu}$. When $s \rightarrow 0$, ${\rm co}E_{(\alpha, s)}$ becomes ${\rm co}E_\alpha$, where its evaluation has been studied \cite{Kim10, Wang16} for some classes of mixed states. For the case of $\alpha \rightarrow 1$, ${\rm co}E_{(\alpha,1/\alpha)}$ becomes the entanglement of formation \cite{Bennett96}.

Meanwhile, a general SLOCC protocol including transforms between mixed bipartite states can be described as a LOCC instrument \cite{Chitambar14}, in which a coarse-grained outcome consists of fine-grained SLOCC outcomes describing transformations between pure states. The following then holds for mixed states:
\begin{proposition} \label{Prop:Mixed}
Suppose that an initial bipartite mixed state $\rho$ transforms by SLOCC. Then, the following inequality holds:
$$
{\rm co}{E}_{(\alpha,s)} (\rho) \geq \frac{1}{s(1-\alpha)}\left[ \langle  e^{ s(1-\alpha) {\rm co}{E}_\gamma} \rangle -1\right]
$$
for $0 < \alpha <1$, $\alpha \leq \gamma$, and $s \leq 1/\alpha$. 
%The right-hand-side of the inequality is maximised when $s=1/\alpha$.
From this, the success probability of raising ${\rm co}E_\alpha$ is upper bounded as 
$$
\begin{aligned}
P\left(  {\rm co}E_\alpha  \geq E_{\rm target} \right)
&\leq \min_{0 \leq  \beta \leq \alpha} \left[ \frac{ \left(\frac{1-\beta}{\beta} \right) {\rm co}{E}_{(\beta,1/\beta)} (\rho) }{e^{\left(\frac{1-\beta}{\beta} \right) E_{\rm target}}-1} \right].
\end{aligned}
$$
\end{proposition}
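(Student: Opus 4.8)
The plan is to reduce the mixed-state statement to the pure-state monotonicity already established in Proposition~\ref{Prop:AntiNorm}, using the convex-roof construction together with the LOCC-instrument (fine-graining) picture. First I would establish that ${\rm co}E_{(\alpha,s)}$ is itself an entanglement monotone obeying strong monotonicity on average: for the SLOCC outcomes $\{p_m,\rho_m\}$,
\[
{\rm co}E_{(\alpha,s)}(\rho) \geq \sum_m p_m\, {\rm co}E_{(\alpha,s)}(\rho_m) = \langle {\rm co}E_{(\alpha,s)} \rangle .
\]
This is the standard fact (Vidal's construction) that the convex roof of a pure-state entanglement monotone is a full mixed-state monotone: taking an optimal pure-state decomposition of $\rho$, applying the SLOCC to each component to obtain the fine-grained pure outcomes, invoking the pure-state monotonicity $\langle \Delta E_{(\alpha,s)} \rangle \le 0$ of Proposition~\ref{Prop:AntiNorm} (valid since $(\alpha,s)\in\Omega$ for $0<\alpha<1$, $s\le 1/\alpha$), and regrouping the fine-grained outcomes inside each coarse-grained $m$, which form a valid decomposition of $\rho_m$ and hence over-estimate ${\rm co}E_{(\alpha,s)}(\rho_m)$.

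Next I would convert the right-hand side into a statement about ${\rm co}E_\gamma$. Writing $g(x):=\frac{1}{s(1-\alpha)}\big[e^{s(1-\alpha)x}-1\big]$, which is strictly increasing and convex because $s(1-\alpha)>0$, I would use $E_\alpha \geq E_\gamma$ for $\alpha \leq \gamma$ to get the pointwise bound $E_{(\alpha,s)}(\psi)=g(E_\alpha(\psi))\geq g(E_\gamma(\psi))$. Evaluating ${\rm co}E_{(\alpha,s)}(\rho_m)$ on its optimal decomposition $\{r_\nu,\phi_\nu\}$, applying Jensen's inequality to the convex $g$, and then using that any decomposition over-estimates the convex roof ${\rm co}E_\gamma(\rho_m)$ together with $g$ being increasing yields
\[
{\rm co}E_{(\alpha,s)}(\rho_m) \;\geq\; \sum_\nu r_\nu\, g(E_\gamma(\phi_\nu)) \;\geq\; g\big({\rm co}E_\gamma(\rho_m)\big).
\]
Substituting into the monotonicity inequality and using $\sum_m p_m=1$ gives precisely the first displayed inequality of the Proposition.

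For the probability bound I would specialise the first part to $s=1/\beta$, $\gamma=\alpha$, relabelling the monotone order as $\beta\in(0,\alpha]$, so that $s(1-\beta)=\frac{1-\beta}{\beta}>0$ and the conditions $0<\beta<1$, $\beta\le\gamma=\alpha$, $s\le 1/\beta$ all hold. This gives $\frac{1-\beta}{\beta}{\rm co}E_{(\beta,1/\beta)}(\rho)\geq \langle e^{\frac{1-\beta}{\beta}{\rm co}E_\alpha}\rangle - 1$. The key manipulation is to write the right-hand side as $\sum_m p_m\big(e^{\frac{1-\beta}{\beta}{\rm co}E_\alpha(\rho_m)}-1\big)$ and bound it term by term: outcomes with ${\rm co}E_\alpha(\rho_m)\ge E_{\rm target}$ each contribute at least $e^{\frac{1-\beta}{\beta}E_{\rm target}}-1$, while all remaining outcomes contribute at least $0$ since ${\rm co}E_\alpha\ge 0$ and $\frac{1-\beta}{\beta}>0$. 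This yields $P({\rm co}E_\alpha\ge E_{\rm target})\big(e^{\frac{1-\beta}{\beta}E_{\rm target}}-1\big)\le \frac{1-\beta}{\beta}{\rm co}E_{(\beta,1/\beta)}(\rho)$; dividing through and minimising over $\beta\in(0,\alpha]$ gives the claimed bound.

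I expect the main obstacle to be the first step, namely rigorously justifying the strong monotonicity of the convex-roof GEE under a genuine mixed-state SLOCC. The care lies in the fine-graining: an SLOCC acting on a mixed input must be cast as an LOCC instrument whose coarse-grained outcomes $\rho_m$ decompose into fine-grained pure-state transitions, and one must verify that the pure-state monotonicity of Proposition~\ref{Prop:AntiNorm} lifts correctly through this regrouping and that the resulting fine-grained ensembles are admissible decompositions of the $\rho_m$. The remaining ingredients (Jensen's inequality, the increasing and convex properties of $g$, and the termwise probability estimate) are routine once this monotonicity is in place.
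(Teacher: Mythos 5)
Your proposal is correct and takes essentially the same route as the paper's proof: both reduce the claim to the pure-state monotonicity of Proposition~1 by casting the SLOCC as an LOCC instrument whose fine-grained outcomes are pure-state transformations of an optimal decomposition of $\rho$, regrouping those fine-grained outcomes as admissible decompositions of the coarse-grained outcome states, and then invoking Jensen's inequality for the exponential together with the ordering $E_\alpha \geq E_\gamma$, followed by the same Proposition~2-style termwise estimate with $s=1/\beta$ for the probability bound. The only difference is organizational: you factor the argument into two standalone lemmas (strong monotonicity of ${\rm co}E_{(\alpha,s)}$ and the pointwise bound ${\rm co}E_{(\alpha,s)}(\sigma) \geq \frac{1}{s(1-\alpha)}[e^{s(1-\alpha){\rm co}E_\gamma(\sigma)}-1]$), whereas the paper inlines both into a single chain of inequalities and switches from $\alpha$ to $\gamma$ only at the final step.
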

\noindent Although evaluating ${\rm co}E_{(\alpha,1/\alpha)}$ for a general mixed state is an open question, these bounds can be utlised to study the REEs of the Werner and isotropic states as ${\rm co}E_{(1/2,2)}$ of those states have a less complicated form \cite{Lee03} than the entanglement of formation \cite{Bennett96}.

{\it Remarks}.--- We have shown that under any SLOCC process, there exist refined conditions on the outcome distribution of the REE, beyond its mean value. To this end, we have introduced a new family of entanglement measures based on the generalised entropy, the monotonicity of which involves the contribution of the higher-order moments of the outcome REE distribution. Our work provides a fundamental limitation for stochastic entanglement distillation, namely that its success probability exponentially decreases as the distilled amount of entanglement increases. The refined condition can also be utilised to obtain a lower bound on the initial state's $E_\alpha$ from the distribution of the outcome $E_\gamma$ with $\gamma \geq \alpha$, for instance $E_2$ which can more readily be measured in experiments.

An interesting direction for future research is applying our results to other entanglement quantifiers related to the REE, such as the conditional R\'enyi entropy \cite{Muller13} and $\alpha$-logarithmic negativity \cite{Wang19}. Generalisation of our work to nondeterministic manipulation of multi-partite entanglement \cite{Sauerwein18} could also be an intriguing topic as there exist distinct classes of entangled states that are not interconvertible by SLOCC \cite{Dur00}. 

\begin{acknowledgments}
The authors thank Soojoon Lee, Seok Hyung Lie, and Hongzheng Zhao for helpful discussions. This work is supported by the KIST Open Research Program, the QuantERA ERA-NET within the EU Horizon 2020 Programme, the EPSRC (EP/R044082/1) and the EPSRC Centre for Doctoral Training in Controlled Quantum Dynamics. M. S. K. acknowledges the Royal Society, KIAS visiting professorship and Samsung GRO and GRP grants.
%the Royal Society, Samsung GRO grant and KIAS visiting professorship.
\end{acknowledgments}

\newpage
\newpage
\widetext
\section{Supplemental Material}

\section{I. Proof of Proposition 1} 
\begin{proposition} 
\label{Prop:AntiNorm}
The generalised entanglement entropy (GEE) defined as
\begin{equation}
E_{(\alpha, s)} (\Psi) := \frac{1}{s(1-\alpha)} \left[ e^{ s(1-\alpha) E_\alpha (\Psi) } -1 \right]
\end{equation}
is an entanglement monotone for $(\alpha, s) \in \Omega := \{ (\alpha, s) | 0 < \alpha < 1~{\rm and}~  s \leq 1/\alpha \} \cup \{ (\alpha, s) | \alpha > 1~{\rm and}~ s \geq 1/\alpha \}$, satisfying the following conditions:\\
(i) $E_{(\alpha,s)} (\Psi) = 0$ \textit{if and only if} $\ket{\Psi}_{AB}$ is separable,\\
(ii) $\langle \Delta E_{(\alpha,s)} \rangle \leq  0$ via stochastic local operations and classical communication (SLOCC).\\
When $s \rightarrow 0$, $E_{(\alpha,s)}$ becomes $E_\alpha$, whereas $s = 1$ gives the entanglement measure based on the Tsallis entropy \cite{Tsallis88}.
\end{proposition}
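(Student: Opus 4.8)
The plan is to recognise the GEE as the generalised (unified) entropy of the reduced state and then to invoke Vidal's characterisation of pure-state entanglement monotones. Since $e^{(1-\alpha)E_\alpha(\Psi)} = \Tr[\rho_B^\alpha]$ for $\rho_B = \Tr_A \ket{\Psi}\bra{\Psi}$, I would first rewrite the definition as $E_{(\alpha,s)}(\Psi) = [(\Tr[\rho_B^\alpha])^s - 1]/[s(1-\alpha)] =: S_\alpha^{(s)}(\rho_B)$, exhibiting the GEE as a fixed unitarily-invariant function of $\rho_B$. Vidal's result \cite{Vidal00} guarantees that any $E(\Psi)=f(\rho_B)$ with $f$ symmetric (depending only on the eigenvalues of $\rho_B$) and concave on density matrices is an entanglement monotone. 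The symmetry of $S_\alpha^{(s)}$ is automatic, so the entire problem reduces to proving concavity of $S_\alpha^{(s)}$ on positive semi-definite matrices for $(\alpha,s)\in\Omega$, after which conditions (i) and (ii) and the limiting cases follow by short arguments.

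For condition (ii) I would make the concavity-to-monotonicity mechanism explicit. Local unitaries leave the spectrum of $\rho_B$ invariant, hence fix $E_{(\alpha,s)}$. A local measurement on $A$ with Kraus operators $\{M_k\}$ sends $\ket{\Psi}$ to outcomes $\ket{\Psi_k}$ with probabilities $p_k$ whose reduced states satisfy $\sum_k p_k \rho_B^{(k)} = \rho_B$, because $\sum_k M_k^\dagger M_k = I$. Matrix concavity of $S_\alpha^{(s)}$ together with Jensen's inequality then yields $\sum_k p_k S_\alpha^{(s)}(\rho_B^{(k)}) \leq S_\alpha^{(s)}(\rho_B)$, i.e. $\langle \Delta E_{(\alpha,s)}\rangle \leq 0$; measurements on $B$ are handled identically using the symmetry $E_{(\alpha,s)}(\Psi)=S_\alpha^{(s)}(\rho_A)$, and composing single steps extends the bound to any SLOCC protocol.

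The substantive step is the concavity itself, and I would establish it through the Schatten norm / anti-norm homogeneity of $N_\alpha(\rho):=(\Tr[\rho^\alpha])^{1/\alpha}$, which is positively homogeneous of degree one and satisfies $(\Tr[\rho^\alpha])^s = N_\alpha(\rho)^{\alpha s}$. For $0<\alpha<1$, $N_\alpha$ is a concave anti-norm on positive semi-definite matrices \cite{Bourin11}; since $0<s\leq 1/\alpha$ gives $0<\alpha s\leq 1$, the scalar power $t\mapsto t^{\alpha s}$ is concave and nondecreasing, so $N_\alpha^{\alpha s}$ is concave and division by $s(1-\alpha)>0$ preserves concavity. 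For $\alpha>1$, $N_\alpha$ is the Schatten $\alpha$-norm and hence convex; now $s\geq 1/\alpha$ gives $\alpha s\geq 1$, so $N_\alpha^{\alpha s}$ is convex and division by the negative factor $s(1-\alpha)$ again returns a concave $S_\alpha^{(s)}$. The two branches of $\Omega$ are precisely the regimes in which this sign bookkeeping works out, which is the content of the generalised-entropy concavity quoted from \cite{Hu06, Rastegin11, Bourin11}. I expect this to be the main obstacle: the naive observation that $\Tr[\rho^\alpha]$ is concave for $0<\alpha<1$ does not survive being raised to a power $s>1$, and it is the nontrivial anti-norm concavity of $(\Tr[\rho^\alpha])^{1/\alpha}$ (a Bourin--Hiai-type matrix inequality, not an elementary scalar fact) at the boundary $s=1/\alpha$ that makes the composition argument legitimate.

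Condition (i) and the limits are then routine. Because $\lambda^\alpha \geq \lambda$ (respectively $\leq$) on $[0,1]$ for $0<\alpha<1$ (respectively $\alpha>1$), with equality only at $\lambda\in\{0,1\}$, one has $\Tr[\rho_B^\alpha]=1$ if and only if every eigenvalue of $\rho_B$ lies in $\{0,1\}$, i.e. $\rho_B$ is pure, equivalently $\ket{\Psi}_{AB}$ is a product state. Since $S_\alpha^{(s)}(\rho_B)=0 \iff (\Tr[\rho_B^\alpha])^s=1 \iff \Tr[\rho_B^\alpha]=1$, this gives (i), and a sign check of the numerator against $s(1-\alpha)$ on each branch of $\Omega$ shows $E_{(\alpha,s)}\geq 0$. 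Finally, $s\to 0$ recovers $\lim_{s\to0}S_\alpha^{(s)}=\frac{1}{1-\alpha}\log\Tr[\rho_B^\alpha]=E_\alpha$ by L'Hôpital, while $s=1$ collapses $S_\alpha^{(1)}$ to the Tsallis form $\frac{1}{1-\alpha}(\Tr[\rho_B^\alpha]-1)$, completing the stated identifications.
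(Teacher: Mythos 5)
Your proposal is correct and follows essentially the same route as the paper's own proof: rewrite $E_{(\alpha,s)}$ as the spectral function $\frac{1}{s(1-\alpha)}\left[ \| \rho_B \|_\alpha^{\alpha s} - 1 \right]$ of the reduced state, reduce SLOCC monotonicity to concavity on positive semidefinite matrices via Vidal's characterisation, and obtain that concavity by composing the Schatten anti-norm concavity (for $0<\alpha<1$) or norm convexity (for $\alpha>1$) with the scalar power $t \mapsto t^{\alpha s}$ while tracking the sign of $s(1-\alpha)$. The only differences are cosmetic: the paper includes a short self-contained proof of the norm/anti-norm concavity lemma where you cite Bourin--Hiai, while you spell out the Kraus-operator mechanism behind Vidal's theorem and condition (i), which the paper leaves implicit.
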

\begin{proof}
As every entanglement monotone for a pure state $\ket{\Psi}_{AB}$ is defined by a concave function on the local density matrix $\rho_B = {\rm Tr}_A \ket{\Psi}_{AB}\bra{\Psi}$ \cite{Vidal00}, it sufficient to show that 
$$
S_{(\alpha, s)} (\rho_B) :=  \frac{1}{s(1-\alpha)} \left[ e^{s(1-\alpha) S_\alpha (\rho_B) } -1 \right],
$$
is an operator concave function for positive semidefinite matrices. This has been proven in Ref.~\cite{Hu06} based on Minkowski's inequality, and we show a simplified version of the proof. We first show the following lemma, regarding 
the Schatten norm $\| X \|_\alpha := \left[ {\rm Tr} (\sqrt{X^\dagger X})^\alpha \right]^{1/\alpha}$.
\begin{lemma} [Concavity (convexity) of $\| X\|_\alpha $ \cite{Hu06, Bourin11}] For positive semidefinite matrices $X$ and $Y$ and $0 \leq \mu \leq 1$,
$$
\| \mu X + (1-\mu) Y \|_\alpha 
\begin{cases}
\geq \mu \| X \|_\alpha + (1-\mu) \| Y \|_\alpha & (0 < \alpha \leq 1)\\
\leq \mu \| X \|_\alpha + (1-\mu) \| Y \|_\alpha & (\alpha \geq 1)
\end{cases}.
$$
\end{lemma}
\begin{proof}
Let us define $\tilde X = X / \| X \|_p$ and $\tilde Y = Y / \| Y \|_p$. Then we have 
$$
\frac{\| \mu X  + (1-\mu) Y \|_\alpha}{\mu \| X \|_\alpha  + (1-\mu) \| Y \|_\alpha }=  \left[ \Tr (\tilde\mu \tilde X  + (1-\tilde\mu) \tilde Y)^\alpha \right]^{1/\alpha},
$$
where $\tilde \mu = \mu \| X\|_\alpha / (\mu \| X \|_\alpha + (1-\mu) \| Y\|_\alpha)$. By noting that $f(t) = t^p$ is concave for $0 < p <1$ and convex for $ p>1$ and by using $\Tr [\tilde X^\alpha ] = 1= \Tr [\tilde Y^\alpha]$, we verify that the right-hand-side of the equation is greater or equal to $1$ for $0 < \alpha <1$ while less or equal to $1$ for $\alpha > 1$.
\end{proof}
Now we show the main proof by noting that 
$S_{(\alpha, s)} (\rho_B) = \frac{1}{s(1-\alpha)} \left[ \| \rho_B \|_\alpha^{\alpha s} -1 \right]$. For $0 < \alpha <1$ and $s \leq 1/\alpha$, $\| \rho_B \|_\alpha^{\alpha s}$ is concave since $\| \rho_B \|_\alpha$ is a monotone increasing function and concave for $0 < \alpha <1$ and $f(t) = t^{\alpha s}$ is a concave function for $\alpha s \leq 1$. Conversely, for $\alpha >1$ and $s \geq 1/\alpha$, $\| \rho_B \|_\alpha^{\alpha s}$ is convex since $\| \rho_B \|_\alpha$ is convex for $\alpha >1$ and $f(t) = t^{\alpha s}$ is a convex function for $\alpha s \geq 1$. By taking into account the term $(1-\alpha)$, which is positive (negative) for $0< \alpha <1$ ($\alpha >1$), we conclude that $S_{(\alpha, s)} (\rho_B)$ is concave on positive semidefinite matrices for $(\alpha, s) \in \Omega$.

We derive the limiting cases of GEEs for $s\rightarrow 0$ and $s=1$ by noting that
$$
\lim_{s\rightarrow 0} S_{(\alpha,s)}(\rho_B) = S_\alpha (\rho_B)
$$
and that the Tsallis entropy is defined as $T_\alpha (\rho_B) :=  \frac{1}{1-\alpha} \left( \Tr\rho^\alpha_B -1 \right) = \frac{1}{1-\alpha} \left[ e^{(1-\alpha) S_\alpha (\rho_B)} -1 \right] = S_{(\alpha,1)}$. In particular when $\alpha$ approaches $1$, $E_{(\alpha, s)}$ becomes the entanglement entropy $E_S$ for any $s \neq 0$ as
$$
\lim_{\alpha \rightarrow 1} \frac{1}{s(1-\alpha)} \left[ e^{s(1-\alpha) S_\alpha (\rho_B) } -1 \right]= S(\rho_B) = E_S(\Psi).
$$
\end{proof}

\newpage Based on the monotonicity of the GEE we derive our main theorem, a refined statistical property that the REEs obey under SLOCC.
\section{II. Proof of Theorem 1}
We start with the following inequality given by the monotonicity of $E_{(\alpha, s)} $,
$$
\frac{1}{s(1-\alpha)} \left[ \sum_m p_m e^{ s (1-\alpha) E_{\alpha} (\Psi_m)} -1\right] \leq \frac{1}{s(1-\alpha)} \left[ e^{ s( 1-\alpha )E_{\alpha} (\Psi)} -1 \right],
$$
which can be rearranged to
$$
\frac{1}{s(1-\alpha)} \langle e^{ s (1-\alpha) \Delta E_{\alpha}}  \rangle \leq \frac{1}{s(1-\alpha)} .
$$
We then have the desired inequality, depending on the sign of $(1-\alpha)$ ,
\begin{equation}
\label{Eq:Appd1}
\langle e^{ s (1-\alpha) \Delta E_{\alpha}} \rangle 
\begin{cases}
  \leq 1 & (0 < \alpha < 1~{\rm and}~  s \leq \frac{1}{\alpha} ) \\
  \geq 1 & ( \alpha > 1~{\rm and}~ s \geq \frac{1}{\alpha})
\end{cases}.
\end{equation}

\section{III. Optimal success probability of raising entanglement via SLOCC}
We show that for an entanglement monotone $E$, the optimal success probability of raising entanglement more than a desired value $E_{\rm target}$ via SLOCC is given by
\begin{equation}
\sup_{{\cal E}_{\rm SLOCC}} P \left(  E  \geq E_{\rm target} \right) = \max_{\Psi'} \left[ \min_{l \in \{ 1,2, \cdots, d\} } \left( \frac{\sum_{i=l}^d \lambda_i^\downarrow(\Psi)} {\sum_{i=l}^d \lambda_i^\downarrow(\Psi')} \right) \vline E(\Psi') = E_{\rm target} \right],
\end{equation}
where $\lambda_i^\downarrow(\Psi)$ are the Schmidt coefficients of $\ket{\Psi}_{AB}$ in descending order. 
\begin{proof}
We first show that a single state transformation is enough to achieve the optimal probability. In order to prove this, we use that the necessary and sufficient condition  \cite{Jonathan99} for a pure state transition under an SLOCC protocol:
\begin{equation}
\label{Eq:LOCCcond}
\ket{\Psi}_{AB} \xrightarrow{{\cal E}_{\rm SLOCC}} \{ p_m, \ket{\Psi_m}_{AB} \}	\Longleftrightarrow
 \sum_{i=l}^d \lambda^\downarrow_{i} (\Psi) \geq \sum_m p_m \left( \sum_{i=l}^d \lambda_{i}^\downarrow(\Psi_m) \right)~\forall l = 1, 2, \cdots, d.
\end{equation}

Let us suppose that there exists an SLOCC protocol that transforms the initial bipartite state $\ket{\Psi}_{AB} = \sum_{i=1}^d \sqrt{\lambda_i^\downarrow(\Psi)} \ket{ii}_{AB}$ into $ \{ p_m, \ket{\Psi_m}_{AB} \}$, where $\ket{\Psi_1}_{AB} = \sum_{i=1}^d \sqrt{\lambda_i^\downarrow{(\Psi_1)}} \ket{ii}_{AB}$ and $\ket{\Psi_2}_{AB} = \sum_{i=1}^d \sqrt{\lambda_i^\downarrow{(\Psi_2)}} \ket{ii}_{AB}$ satisfy $E(\Psi_1) \geq E_{\rm target}$ and $E(\Psi_2) \geq E_{\rm target}$. Without loss of generality, we assume that all states have the same Schmidt basis since this could be achieved via local unitary operations.
We then define the following bipartite state from $\ket{ \Psi_1}_{AB}$ and $\ket{\Psi_2}_{AB}$:
$$
\ket{\Psi'}_{AB} = \sum_{i=1}^d \sqrt{ \left( \frac{p_1}{p_1+p_2} \right) \lambda_i^\downarrow{(\Psi_1) + \left( \frac{p_2}{p_1+p_2} \right) \lambda_i^\downarrow{(\Psi_2) }}} \ket{ii}_{AB} = \sum_{i=1}^d \sqrt{\lambda_i^\downarrow(\Psi')} \ket{ii}_{AB},
$$
where $\lambda_i^\downarrow(\Psi')$ is also given in descending order.
We note that any entanglement monotone $E$ of a pure bipartite state $\ket{\Psi}_{AB}$ can be expressed by using a concave function $V$ on its (positive-semidefinite) local density matrix $\rho_{B}$ as $E(\Psi) = V(\rho_B)$ \cite{Vidal00}. We than have 
$$
\begin{aligned}
E(\Psi') &= V \left(  \left( \frac{p_1}{p_1+p_2} \right) \rho_{B1} + \left( \frac{p_2}{p_1+p_2} \right)  \rho_{B2} \right)\\
& \geq  \left( \frac{p_1}{p_1+p_2} \right) V(\rho_{B1})+ \left( \frac{p_2}{p_1+p_2} \right) V(\rho_{B2})\\
& =  \left( \frac{p_1}{p_1+p_2} \right) E(\Psi_1)+ \left( \frac{p_2}{p_1+p_2} \right) E(\Psi_2)\\
&\geq E_{\rm target},
\end{aligned}
$$
where the first inequality comes from the concavity of $V$ on a set of positive-semidefinite matrices. Here, $\rho_{B1}$ and  $\rho_{B2}$ are reduced density matrices of $\ket{\Psi_1}_{AB}$ and $\ket{\Psi_2}_{AB}$, respectively. Also, from the necessary and sufficient condition for an SLOCC transformation given by Eq.~\eqref{Eq:LOCCcond}, we have
$$
\begin{aligned}
 \sum_{i=l}^d \lambda_i^\downarrow (\Psi) &\geq  p_1 \left( \sum_{i=l}^d \lambda_i^\downarrow(\Psi_1) \right) + p_2 \left( \sum_{i=l}^d \lambda_i^\downarrow(\Psi_2) \right) + \sum_{m \neq 1,2} p_m \left( \sum_{i=l}^d \lambda_{i}^\downarrow(\Psi_m) \right)\\
 &= (p_1 + p_2) \left[ \sum_{i=l}^d \left( \frac{p_1}{p_1+p_2} \right) \lambda_i^\downarrow{(\Psi_1) + \left( \frac{p_2}{p_1+p_2} \right) \lambda_i^\downarrow{(\Psi_2) }} \right] + \sum_{m \neq 1,2} p_m \left( \sum_{i=l}^d \lambda_{i}^\downarrow(\Psi_m) \right)\\
 &= (p_1 + p_2) \left( \sum_{i=l}^d \lambda_i^\downarrow(\Psi') \right)+ \sum_{m \neq 1,2} p_m \left( \sum_{i=l}^d \lambda_{i}^\downarrow(\Psi_m) \right)
\end{aligned}
$$
for all $l = 1, 2, \cdots, d$. Therefore, there exists an SLOCC protocol that transforms $\ket{\Psi}_{AB}$ into $\ket{\Psi'}_{AB}$ with success probability $p_1 + p_2$.

Next, by noting that $E(\Psi') \geq E_{\rm target}$, we can always find a deterministic LOCC protocol ${\cal E}_{\rm LOCC}$ such that $E({\cal E}_{\rm LOCC}(\Psi')) = E_{\rm target}$. By combining these two protocols, we observe that for any SLOCC protocol that transforms $\ket{\Psi}_{AB}$ into two outcome states $\ket{\Psi_1}_{AB}$ and $\ket{\Psi_2}_{AB}$ satisfying $E (\Psi_{1(2)})\geq E_{\rm target}$ with probabilities $p_1$ and $p_2$, we can always find an SLOCC protocol that transforms $\ket{\Psi}_{AB}$ into $\ket{\Psi'}_{AB}$ satisfying $E (\Psi') = E_{\rm target}$ with probabilities $p_1 + p_2$, i.e. with the same accumulated probability. This can be generalised for multiple outcome states of $\ket{\Psi_m}_{AB}$ satisfying $E (\Psi_m) \geq E_{\rm target}$ by applying multiple rounds of the pairwise merging process described above. Therefore, optimisation of the accumulated probability over all possible outcome states can be reduced to optimisation over a transition probability to a single state $\ket{\Psi'}_{AB}$ such that $E (\Psi') = E_{\rm target}$.
\end{proof}

\section{IV. Entanglement manipulation protocol by varying the maximum and minimum Schmidt coefficients}
We introduce a simple protocol that can give a higher success probability than the distillation of the $k$-level maximally entangled state $\ket{\Phi_k}_{AB}  \propto \sum_{i=1}^k \ket{ii}_{AB}$. Suppose that the Schmidt decomposition of the initial bipartite state is given by $\ket{\Psi}_{AB} = \sum_{i=1}^d \sqrt{\lambda_i^\downarrow(\Psi)} \ket{ii}_{AB}$. 
We find the target state $\ket{\Psi'}_{AB}$ by the following steps:

For the first round, we vary the values of the maximum and minimum Schmidt coefficients $\lambda_1^\downarrow (\Psi)$ and $\lambda_d^\downarrow (\Psi) $ to $\lambda_1^\downarrow (\Psi) - \epsilon$ and $\lambda_d^\downarrow (\Psi) + \epsilon$, while keeping the others unchanged. In order to make all the Schmidt coefficients non-negative, $0 \leq \epsilon \leq \min \{ \lambda_1^\downarrow, 1- \lambda_d^\downarrow \}$. Then we have two possible situations:

(i) If there exists $\epsilon$ such that the outcome entanglement reaches $E_{\rm target}$, we update the state to $\ket{\Psi'}_{AB}$ having the same Schmidt coefficients as the initial state $\ket{\Psi}_{AB}$, except the two elements $\lambda_1^\downarrow (\Psi) - \epsilon$ and $\lambda_d^\downarrow (\Psi) + \epsilon$. Then entanglement of the target state becomes $E(\Psi') = E_{\rm target}$, and we finish the protocol.

(ii) If there is no $\epsilon$ that can reach $E_{\rm target}$ from varying $\lambda_1^\downarrow (\Psi) \rightarrow \lambda_1^\downarrow (\Psi) -\epsilon$ and $\lambda_d^\downarrow (\Psi) \rightarrow \lambda_d^\downarrow (\Psi) + \epsilon$, we update both coefficients to $(\lambda_1^\downarrow (\Psi)  + \lambda_d^\downarrow (\Psi) ) /2$. 

After each round of mixing the Schmidt coefficients, the degree of entanglement always increases, i.e., $E(\Psi') \geq E(\Psi)$. We repeat this using the updated state $\ket{\Psi'}_{AB} = \sum_{i=1}^d \sqrt{\lambda_i^\downarrow(\Psi')} \ket{ii}_{AB}$ recursively until the entanglement of the final state reaches $E_{\rm target}$. We note that for the extreme case, $E_{\rm target} = \log d$, the target state $\ket{\Psi'}_{AB}$ ends up with the maximally entangled state after running sufficiently many rounds. 

Finally, by using Eq.~\eqref{Eq:LOCCcond}, we obtain the SLOCC transformation probability from $\ket{\Psi}_{AB}$ to $\ket{\Psi'}_{AB}$, which is nonzero when the Schmidt ranks of $\ket{\Psi}_{AB}$ and $\ket{\Psi'}_{AB}$ are the same. Therefore, the protocol always gives the nonzero success probabilities for any pairs of the initial state and target entanglement, unless $E_{\rm target} > \log d$.

\section{V. Proof of Proposition 2}
We provide the extended version of Proposition 2 and its proof.
\begin{proposition} 
The accumulated success probability of achieving the outcome $E_\alpha$ larger than $E_{\rm target}$ is upper bounded by
\begin{equation}
\label{Eq:PalphaBound}
P\left(  E_\alpha  \geq E_{\rm target}\right) \leq \frac{ e^{ s (1-\alpha) E_\alpha(\Psi) } -1 } { e^{ s( 1-\alpha) E_{\rm target} }-1},
\end{equation}
for all $(\alpha, s) \in \Omega$. Furthermore, the right-hand-side of the inequality is minimised when $s=1/\alpha$.
\end{proposition}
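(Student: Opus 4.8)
The plan is to obtain the bound \eqref{Eq:PalphaBound} as a generalised Markov inequality from the monotonicity of the GEE, and then to settle the optimal choice $s=1/\alpha$ as a one-dimensional calculus problem in the combined variable $c:=s(1-\alpha)$. I would start from $\langle\Delta E_{(\alpha,s)}\rangle\le0$ of Proposition 1, cleared of the prefactor $1/[s(1-\alpha)]$, in the form $\sum_m p_m[e^{cE_\alpha(\Psi_m)}-1]\le e^{cE_\alpha(\Psi)}-1$ when $c>0$ and with the reverse inequality when $c<0$ (multiplying by $c$ flips the sign). Writing $g(x):=e^{cx}-1$ and abbreviating $a:=E_\alpha(\Psi)$, $b:=E_{\rm target}$, I would then split the expectation over $\Gamma=\{m:E_\alpha(\Psi_m)\ge b\}$ and its complement.

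For $0<\alpha<1$ (so $c>0$, with $g$ increasing and non-negative on $[0,\infty)$) I would discard the non-negative complementary terms and use $g(E_\alpha(\Psi_m))\ge g(b)$ on $\Gamma$, yielding $g(b)\,P(\Gamma)\le\langle g\rangle\le g(a)$; as $g(b)>0$ this rearranges to $P(\Gamma)\le g(a)/g(b)$. For $\alpha>1$ (so $c<0$, with $g$ decreasing and non-positive) I would instead bound each summand from above --- $g\le0$ off $\Gamma$ and $g(E_\alpha(\Psi_m))\le g(b)$ on $\Gamma$ --- to get $\langle g\rangle\le g(b)\,P(\Gamma)$, then combine with $\langle g\rangle\ge g(a)$ and $g(b)<0$ to reach the same ratio $P(\Gamma)\le g(a)/g(b)=(e^{ca}-1)/(e^{cb}-1)$, which is precisely \eqref{Eq:PalphaBound}.

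For the optimality of $s=1/\alpha$, I would view the bound as $h(c):=(e^{ca}-1)/(e^{cb}-1)$ in the single variable $c$, in the distillation regime $a<b$, and show $h$ is strictly decreasing in $c$. Differentiating gives $\tfrac{d}{dc}\log h=\tfrac1c[G(ca)-G(cb)]$ with $G(x):=x/(1-e^{-x})$, and $G$ is strictly increasing on $\mathbb{R}\setminus\{0\}$ since the numerator of $G'$ equals $1-(1+x)e^{-x}=e^{-x}(e^x-1-x)>0$ for $x\ne0$. For $c>0$ one has $ca<cb$, so $G(ca)<G(cb)$; for $c<0$ one has $ca>cb$ but the factor $1/c<0$ again yields $\tfrac{d}{dc}\log h<0$. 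Finally I would note that $\Omega$ constrains $c$ identically in both regimes --- $s\le1/\alpha$ with $1-\alpha>0$, and $s\ge1/\alpha$ with $1-\alpha<0$, each give $c\le(1-\alpha)/\alpha$ --- so the decreasing $h$ attains its minimum at the upper endpoint $c=(1-\alpha)/\alpha$, i.e.\ $s=1/\alpha$. This also shows the $s\to0$ limit recovers the weaker bound $a/b=P_\alpha^{\rm REE}$.

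The main obstacle I anticipate is the consistent bookkeeping of the sign of $c=s(1-\alpha)$: the GEE inequality after clearing $c$, the Markov truncation, and the final division by $g(b)$ each flip direction between the $\alpha<1$ and $\alpha>1$ regimes. The one genuinely non-routine ingredient is verifying that $G$ is monotone on all of $\mathbb{R}\setminus\{0\}$ rather than merely on the positive axis, since this is exactly what makes both sign cases collapse to the single conclusion $s=1/\alpha$.
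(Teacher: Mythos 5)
Your proposal is correct and follows essentially the same route as the paper's proof: a Markov-type truncation of the Theorem~1 (GEE-monotonicity) inequality over the set $\Gamma$ and its complement, followed by a monotonicity-in-$s$ calculus argument whose key ingredient --- that $G(x)=x/(1-e^{-x})$ is increasing on all of $\mathbb{R}$ --- is exactly the paper's observation that $x e^{sx}/(e^{sx}-1)$ is monotonically increasing in $x$. Your only departure is organizational but tidy: the change of variables $c=s(1-\alpha)$ merges the paper's two sign cases (bound decreasing on $0<s\le 1/\alpha$ for $0<\alpha<1$, increasing on $s\ge 1/\alpha$ for $\alpha>1$) into the single statement that $h(c)$ is decreasing subject to the unified constraint $c\le(1-\alpha)/\alpha$, which moreover covers negative $s$ in $\Omega$ that the paper's explicit restriction to $s>0$ leaves implicit.
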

\begin{proof}
We note that 
$$
\begin{aligned}
\langle e^{s(1-\alpha) \Delta E_\alpha} \rangle &= \sum_m p_m e^{ s (1-\alpha) (E_{\alpha} (\Psi_m) - E_{\alpha} (\Psi))} \\
&= e^{ -s (1-\alpha) E_{\alpha} (\Psi)} \left[  \sum_{E_\alpha(\Psi_m)  \geq E_{\rm target}}p_m e^{ s (1-\alpha)  E_{\alpha} (\Psi_m)}  +  \sum_{E_\alpha(\Psi_m) < E_{\rm target}} p_me^{ s (1-\alpha)  E_{\alpha} (\Psi_m)} \right] \\
&\begin{cases}
\geq e^{ -s (1-\alpha) E_{\alpha} (\Psi)} \left[ P(E_\alpha \geq E_{\rm target}) e^{ s (1-\alpha) E_{\alpha} (\Psi_m)} + P(E_\alpha < E_{\rm target}) \right] & (0 < \alpha < 1)\\
\leq e^{ -s (1-\alpha) E_{\alpha} (\Psi)} \left[ P(E_\alpha \geq E_{\rm target}) e^{ s (1-\alpha) E_{\alpha} (\Psi_m)} + P(E_\alpha < E_{\rm target}) \right] & (\alpha > 1)
\end{cases},
\end{aligned}
$$
where the inequality comes from the fact that $E_\alpha(\Psi_m)  \geq E_{\rm target}$ for the first term and $E_\alpha(\Psi_m)  \geq 0$ for the second term depending on the sign of $(1-\alpha)$. Then using $P(E_\alpha < E_{\rm target}) = 1- P(E_\alpha \geq E_{\rm target})$ and combining with the condition given in Eq.~\eqref{Eq:Appd1}, we have 
$$
\begin{cases}
e^{ -s (1-\alpha) E_{\alpha} (\Psi)} \left[ P(E_\alpha \geq E_{\rm target}) (e^{ s (1-\alpha) E_{\alpha} (\Psi_m)} -1 ) + 1\right] \leq \langle e^{s(1-\alpha) \Delta E_\alpha} \rangle  \leq 1 & (0 < \alpha < 1~{\rm and}~  s \leq \frac{1}{\alpha} ) \\
e^{ -s (1-\alpha) E_{\alpha} (\Psi)} \left[ P(E_\alpha \geq E_{\rm target}) (e^{ s (1-\alpha) E_{\alpha} (\Psi_m)} -1 ) + 1\right] \geq \langle e^{s(1-\alpha) \Delta E_\alpha} \rangle  \geq 1 & ( \alpha > 1~{\rm and}~ s \geq \frac{1}{\alpha})
\end{cases}.
$$
Finally be rearranging the inequalities and noting that $e^{ s (1-\alpha) E_{\alpha} (\Psi_m)} -1 \geq 0 $ for $0<\alpha< 1$ and $e^{ s (1-\alpha) E_{\alpha} (\Psi_m)} -1 \leq 0 $ for $\alpha < 1$, we obtained the desired inequality.

Next, we show that $s=1/\alpha$ gives the minimum bound for any $\alpha \in (0, \infty)$. It is enough to consider the case $E_{\rm target} \geq E_\alpha (\Psi)$, otherwise $P(E_\alpha \geq E_{\rm target}) \geq 1$ only gives a trivial bound. Let us define a function
$$
f(s) := \frac{e^{sx} - 1}{e^{sy} -1}
$$
for given values of $x$ and $y$. We then note that for all $s>0$,
$$
\frac{d f(s)}{ds} = \left( \frac{e^{sx}-1}{e^{sy}-1} \right) \left[ \frac{x e^{sx}}{e^{sx}-1} -  \frac{y e^{sy}}{e^{sy}-1} \right]
\begin{cases}
\geq 0 & (0 \geq x \geq y) \\
\leq 0 & (0 \leq x \leq y)
\end{cases},
$$
since $ \frac{x e^{sx}}{e^{sx}-1} $ is a monotonically increasing function on $x \in (-\infty, \infty)$ for  $s>0$. Then by taking $x= (1-\alpha) E_\alpha (\Psi)$ and $y = (1-\alpha)E_{\rm target}$, we can observe that the bound is monotonically decreasing on $0<s \leq 1/\alpha$ when $0< \alpha <1$, thus the minimum value is achieved for $s=1/\alpha$. Conversely, for $\alpha >1$, the bound is monotonically increasing on  $s \geq 1/\alpha$, so the minimum value is again given by $s = 1/\alpha$.
\end{proof}

\section{VI. Probability bounds for $\alpha = 0$ and $\alpha = 1$}
We show the limiting cases of the probability bound
$$
P_\alpha^{\rm GEE} := \frac{ e^{ \left( \frac{1-\alpha}{\alpha} \right) E_\alpha(\Psi) } -1 } { e^{ \left( \frac{1-\alpha}{\alpha} \right) E_{\rm target} }-1}
$$
when $\alpha$ approaches $0$ and $1$. We first consider the case $\alpha \rightarrow 0$. We can rewrite the bound as
$$
\lim_{\alpha \rightarrow 0} P_\alpha^{\rm GEE} = \lim_{\alpha \rightarrow 0} \left[ \frac{e^{\left( \frac{1-\alpha}{\alpha} \right)(E_\alpha (\Psi) - E_{\rm target})} - e^{-\left( \frac{1-\alpha}{\alpha} \right) E_{\rm target}}}{ 1- e^{-\left( \frac{1-\alpha}{\alpha} \right) E_{\rm target}}}\right],
$$
then it is straightforward to see that
$$
\lim_{\alpha \rightarrow 0} P_\alpha^{\rm GEE} =
\begin{cases}
0 & (E_0(\Psi) = \log d < E_{\rm target})\\
\infty & (E_0(\Psi) = \log d > E_{\rm target})
\end{cases}.
$$
For $E_0(\Psi) = \log d = E_{\rm target}$, we note that
$$
\lim_{\alpha \rightarrow 0 } \left( \frac{1-\alpha}{\alpha} \right) \left[ E_\alpha(\Psi) - E_0(\Psi) \right]  = \lim_{\alpha \rightarrow 0 } \left[  \frac{ \frac{\partial (E_\alpha(\Psi) - E_0(\Psi)) }{\partial \alpha}} {\frac{\partial \left( \frac{\alpha}{1-\alpha} \right) }{\partial \alpha}} \right] = \lim_{\alpha \rightarrow 0 } \left[ - S( \tilde \rho_{B,\alpha} \| \rho_B ) \right],
$$
where $S(\sigma \| \rho) = \Tr \sigma (\log \sigma - \log \rho)$ is the relative entropy and $\tilde \rho_{B,\alpha} = \rho_B^\alpha / (\Tr \rho_B^\alpha )$. For $\alpha \rightarrow 0$, $\tilde \rho_{B, \alpha}$  becomes ${\mathbb 1}_B /d$, then $\lim_{\alpha \rightarrow 0 } \left( \frac{1-\alpha}{\alpha} \right) \left[ E_\alpha(\Psi) - E_0(\Psi) \right]  = - S({\mathbb 1}_B /d \| \rho) = \log d + (1/d) \Tr [\log \rho_B]$. Hence, the probability bound becomes 
$$
\lim_{\alpha \rightarrow 0} P_\alpha^{\rm GEE}  =
\begin{cases}
0 & (E_0(\Psi) = \log d < E_{\rm target})\\
 d e^{ (1/d) {\rm Tr} [\log \rho_B]} & (E_0(\Psi) = \log d = E_{\rm target})\\
 \infty & (E_0(\Psi) = \log d > E_{\rm target})
 \end{cases}.
$$

When $\alpha$ approaches $1$, we can rewrite the bound as
$$
\lim_{\alpha \rightarrow 1 } P_\alpha^{\rm GEE} = \lim_{\alpha \rightarrow 1 }\frac{ \left( \frac{\alpha}{1-\alpha} \right) \left[ e^{ \left( \frac{1-\alpha}{\alpha} \right) E_\alpha(\Psi) } -1 \right] } { \left( \frac{\alpha}{1-\alpha} \right) \left[ e^{ \left( \frac{1-\alpha}{\alpha} \right) E_{\rm target} }-1 \right]} = \frac{E_S(\Psi)}{E_{\rm target}},
$$
since $\displaystyle \lim_{\alpha \rightarrow 1 } \left( \frac{\alpha}{1-\alpha} \right) \left[ e^{ \left( \frac{1-\alpha}{\alpha} \right) E_\alpha(\Psi) } -1 \right] = E_S(\Psi)$ and $\displaystyle \lim_{\alpha \rightarrow 1 } \left( \frac{\alpha}{1-\alpha} \right) \left[ e^{ \left( \frac{1-\alpha}{\alpha} \right) E_{\rm target}} -1 \right] = E_{\rm target}$.

\section{VII. Proof of Proposition 3}
We provide the extended version of Proposition 3 and its proof.
\begin{proposition}
\label{Prop:EntBound}
For $(\alpha, s) \in \Omega$ and $ \alpha \leq \gamma$, the REE under any SLOCC transformation satisfies
\begin{equation}
\label{Eq:EntBound}
E_\alpha(\Psi) \geq  \frac{1}{s(1-\alpha)} \log \langle e^{ s( {1-\alpha})  E_\gamma } \rangle,
\end{equation}
where the right-hand-side of the inequality is maximised when $s=1/\alpha$.
\end{proposition}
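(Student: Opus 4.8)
The plan is to derive the inequality directly from Theorem~\ref{Thm:REE} together with the monotonicity of the REE in its order, and then to optimise over $s$ by a convexity argument on a log-moment-generating function. First I would start from the condition established in Theorem~\ref{Thm:REE}: for $(\alpha,s)\in\Omega$ the quantity $\langle e^{s(1-\alpha)\Delta E_\alpha}\rangle = e^{-s(1-\alpha)E_\alpha(\Psi)}\langle e^{s(1-\alpha)E_\alpha(\Psi_m)}\rangle$ is $\leq 1$ when $0<\alpha<1$ and $\geq 1$ when $\alpha>1$. Rearranging gives $\langle e^{s(1-\alpha)E_\alpha(\Psi_m)}\rangle \leq e^{s(1-\alpha)E_\alpha(\Psi)}$ for $0<\alpha<1$ and the reversed inequality for $\alpha>1$. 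The key substitution is then to replace $E_\alpha(\Psi_m)$ inside the average by $E_\gamma(\Psi_m)$, using that the REE is non-increasing in its order, so $E_\gamma(\Psi_m)\leq E_\alpha(\Psi_m)$ for $\alpha\leq\gamma$. Because the map $x\mapsto e^{s(1-\alpha)x}$ is increasing for $0<\alpha<1$ (where $s(1-\alpha)>0$) and decreasing for $\alpha>1$ (where $s(1-\alpha)<0$), in both regimes this yields $\langle e^{s(1-\alpha)E_\gamma}\rangle$ bounded by $e^{s(1-\alpha)E_\alpha(\Psi)}$ in the direction needed. Taking logarithms and dividing by $s(1-\alpha)$ — remembering that this factor is positive for $0<\alpha<1$ and negative for $\alpha>1$, so that the inequality is preserved in the first case and flipped in the second — both regimes collapse onto the single claimed bound $E_\alpha(\Psi)\geq \frac{1}{s(1-\alpha)}\log\langle e^{s(1-\alpha)E_\gamma}\rangle$.

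For the optimality of $s=1/\alpha$, I would set $u:=s(1-\alpha)$ and study $g(u):=\frac{1}{u}\log\langle e^{uE_\gamma}\rangle=\Lambda(u)/u$, where $\Lambda(u):=\log\langle e^{uE_\gamma}\rangle$ is the cumulant generating function of the outcome distribution of $E_\gamma$. Since $\Lambda$ is convex, the auxiliary function $h(u):=u\Lambda'(u)-\Lambda(u)$ satisfies $h(0)=0$ and $h'(u)=u\Lambda''(u)$, hence $h(u)\geq 0$ for all $u$; this gives $g'(u)=h(u)/u^2\geq 0$, so $g$ is monotonically increasing in $u$. (Equivalently, this is Jensen's inequality applied to the power map relating two values of $u$.) It then remains to match the admissible range of $u$ against the boundary $s=1/\alpha$: for $0<\alpha<1$ the constraint $s\leq 1/\alpha$ with $1-\alpha>0$ makes $u$ largest at $s=1/\alpha$, while for $\alpha>1$ the constraint $s\geq 1/\alpha$ with $1-\alpha<0$ again makes $u$ largest at $s=1/\alpha$. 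As $g$ is increasing in $u$, the largest admissible $u$ — attained at $s=1/\alpha$ in both cases — maximises the right-hand side, giving the tightest lower bound.

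I expect the main obstacle to be the careful sign bookkeeping as $\alpha$ crosses $1$: the direction of the inequality from Theorem~\ref{Thm:REE}, the monotonicity direction of $x\mapsto e^{s(1-\alpha)x}$, the sign of the final division by $s(1-\alpha)$, and — for the optimisation — the identification of which endpoint of the admissible $s$-interval corresponds to the largest $u$ all flip simultaneously, and one must check that they conspire to give the same final statement in both regimes. The analytic content, namely the monotonicity of $\Lambda(u)/u$, is standard once the convexity of $\Lambda$ is invoked; the remaining work is bookkeeping.
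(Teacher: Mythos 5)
Your proposal is correct and follows essentially the same route as the paper: the inequality is obtained from Theorem~1 combined with $E_\gamma \leq E_\alpha$ for $\gamma \geq \alpha$ and careful sign bookkeeping, and the optimality of $s=1/\alpha$ follows from monotonicity in $s$ of the log-average bound. Your convexity argument for the cumulant generating function is the same fact the paper proves by direct differentiation (your $h(u) = u\Lambda'(u)-\Lambda(u)$ is precisely the relative entropy $H(\tilde p \| p)$ appearing in the paper's derivative), and your unified treatment via $u = s(1-\alpha)$ is a slightly cleaner packaging of the paper's two-case analysis.
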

\begin{proof}
It is straightforward to obtain the inequality by noting that 
$$
 \langle e^{s(1-\alpha) \Delta E_\alpha} \rangle = e^{-s(1-\alpha) E_\alpha (\Psi)} \langle e^{s(1-\alpha)E_\alpha} \rangle
 \begin{cases}
  \leq 1 & (0 < \alpha < 1~{\rm and}~  s \leq \frac{1}{\alpha} ) \\
  \geq 1 & ( \alpha > 1~{\rm and}~ s \geq \frac{1}{\alpha})
\end{cases},
$$
where the inequality comes from Theorem 1. By rearranging the inequality and taking into account the sign of $(1-\alpha)$ we obtain the desired inequality. We note that the right-hand-side of the inequality is a monotone on $E_\alpha (\Psi_m)$, thus we obtain
$$
E_\alpha(\Psi) \geq  \frac{1}{s(1-\alpha)} \log \langle e^{ s( {1-\alpha})  E_\alpha } \rangle \geq \frac{1}{s(1-\alpha)} \log \langle e^{ s( {1-\alpha})  E_\gamma } \rangle
$$
as $E_\gamma \leq E_\alpha$ for $0 < \alpha \leq \gamma$.

We now show that $s=1/\alpha$ gives the maximum values of the bound,
$$
g(s) := \frac{1}{s} \log \left( \sum_m p_m e^{s x_m} \right)
$$
for a probability distribution $\{p_m\}$ with outcome entities $x_m$. We then note that 
$$
\frac{d g(s)}{ds} = \frac{1}{s^2} \left[ \sum_m \left( \frac{ p_m e^{s x_m}}{\sum_{m'} p_{m'} e^{s x_{m'}}} \right) \log \left( \frac{e^{s x_m}}{\sum_{m''} p_{m''} e^{s x_{m''}}} \right) \right] = \frac{ H( \tilde p \| p)}{s^2} \geq 0,
$$
where $H(\tilde p \| p) = \sum_m \tilde p_m \log( \tilde p_m / p_m)$ is the (classical) relative entropy between two probability distributions $\{ \tilde p_m =  p_m e^{s x_m} / (\sum_{m'} p_{m'} e^{s x_{m'}}) \}$ and $\{ p_m \}$. From this result, we note that $\frac{1}{s(1-\alpha)} \log \langle e^{ s( {1-\alpha})  E_\gamma } \rangle$ is a monotonically increasing (decreasing) function of $s$ when $0 < \alpha <1$ ($\alpha >1$). Hence, $s = 1/\alpha$ gives the maximum bound for all $\alpha \in (0, \infty)$ and any given distribution $\{ p_m, E_\gamma (\Psi_m) \}$ after applying SLOCC.
\end{proof}

\section{VIII. Estimating the REE of quantum many-body systems}
\begin{figure}[b]
\includegraphics[width=.33\linewidth]{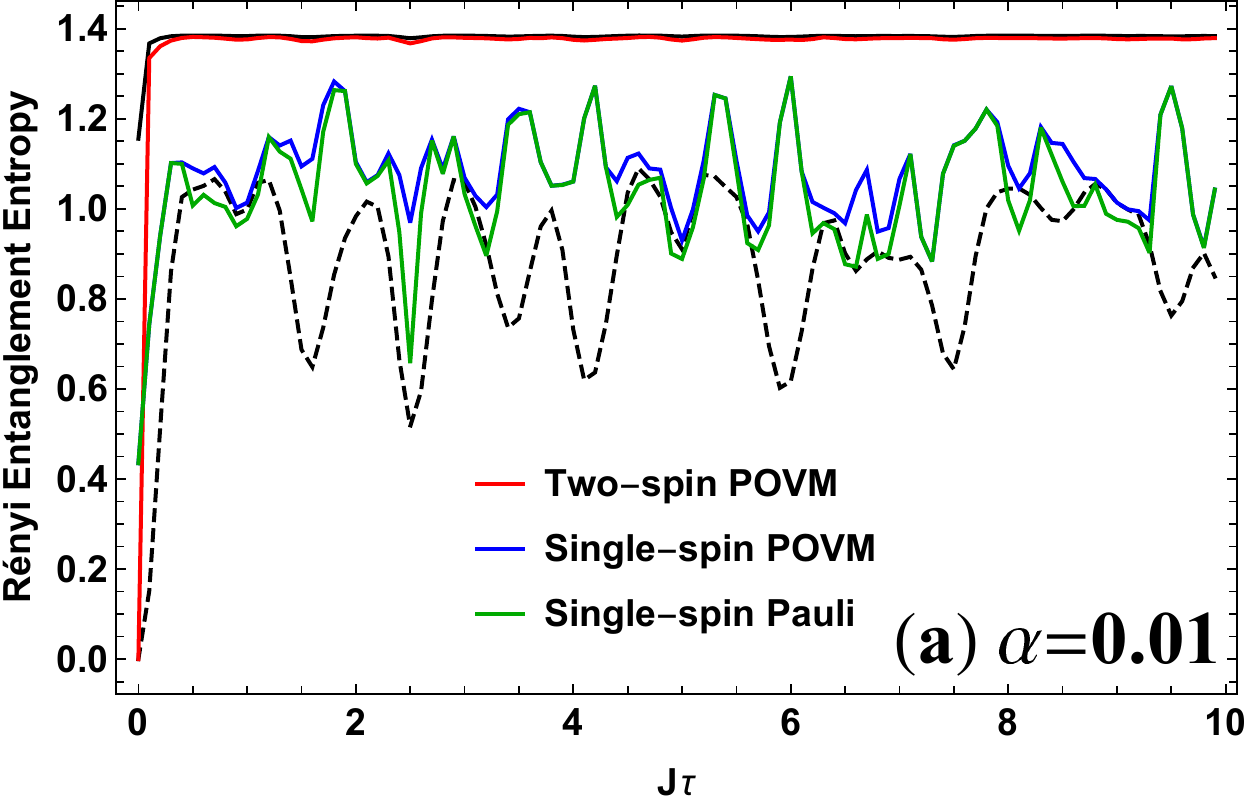}\qquad
\includegraphics[width=.33\linewidth]{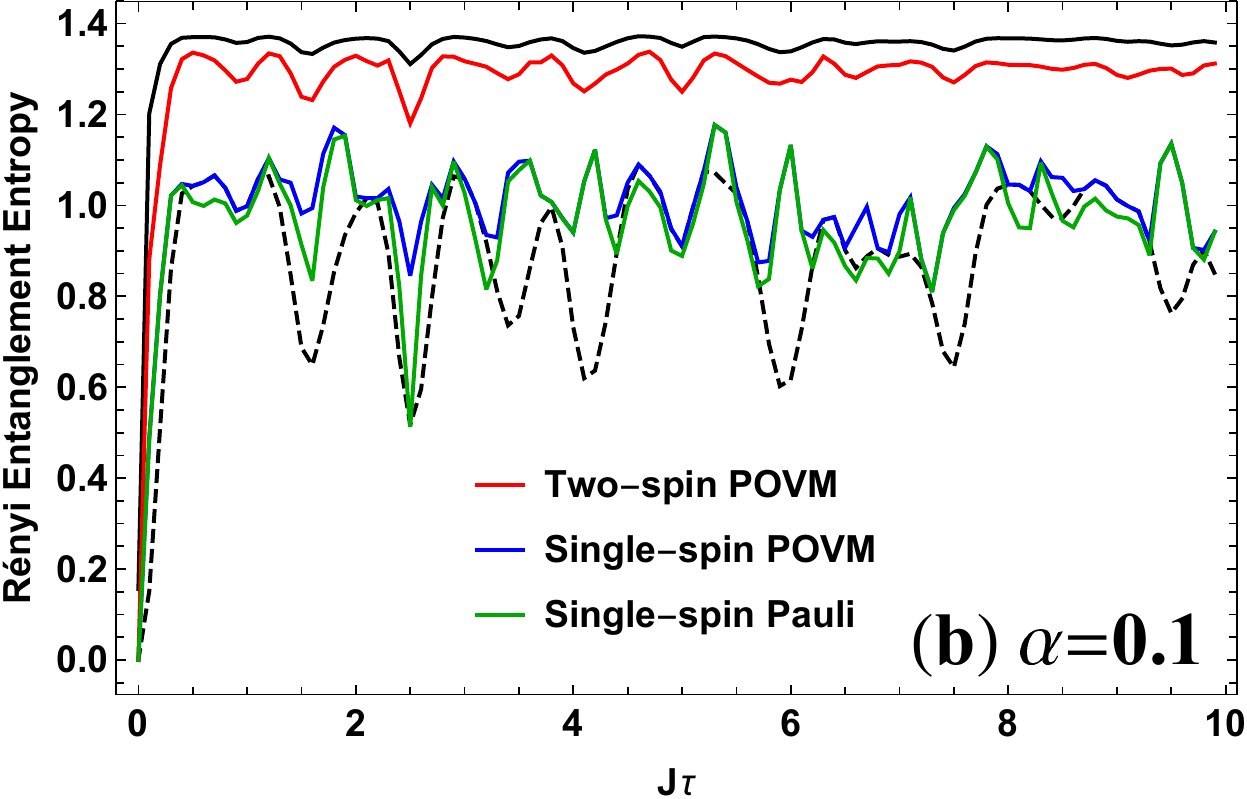}\\
\includegraphics[width=.33\linewidth]{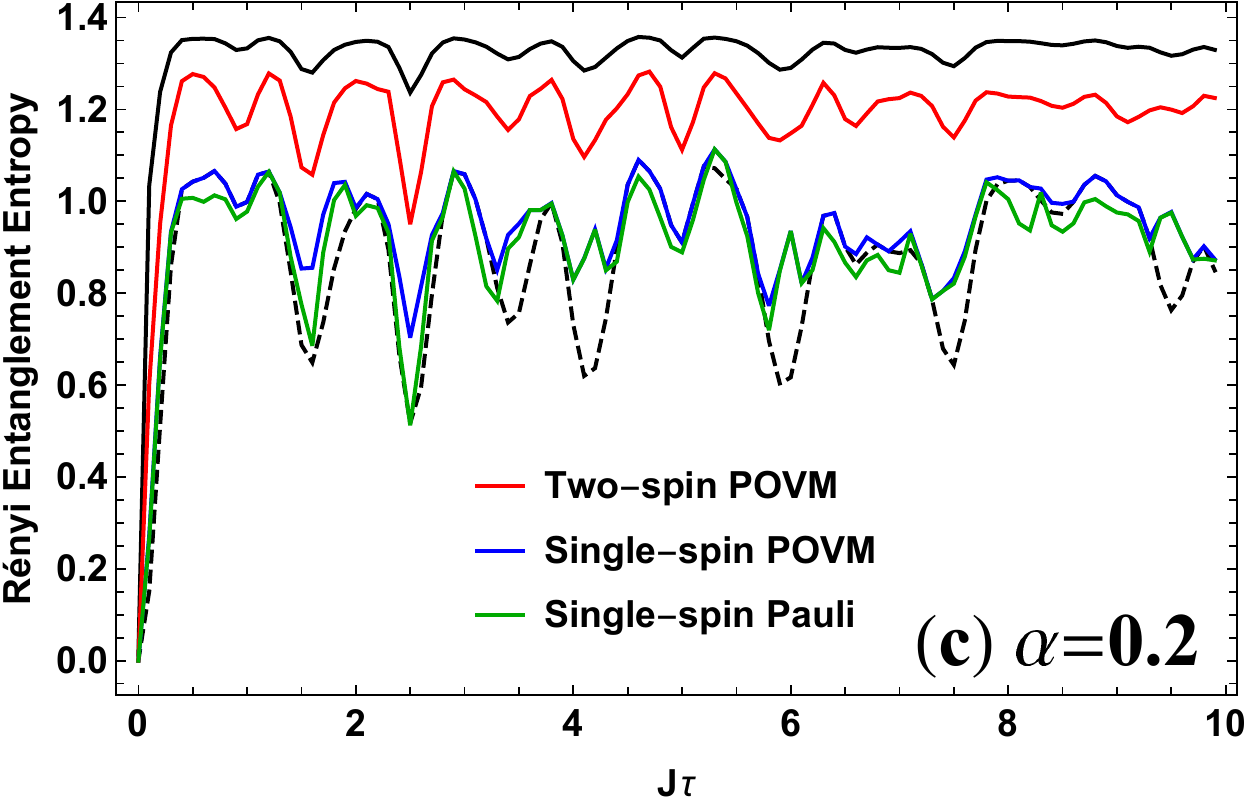}\qquad
\includegraphics[width=.33\linewidth]{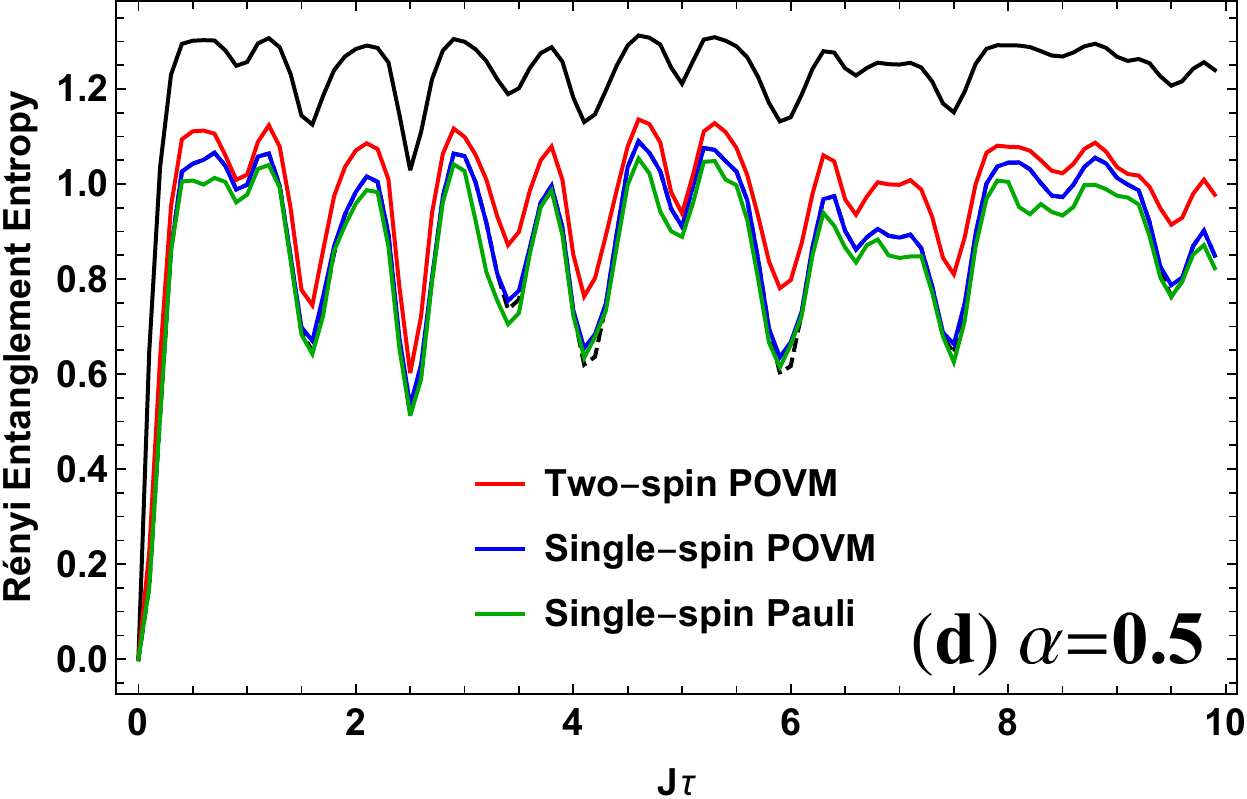}
\caption{Exact values of $E_\alpha(\Psi)$ (black-solid lines) and estimated lower bounds $\hat{E}_\alpha(\Psi)$ for (a) $\alpha = 0.01$, (b) $\alpha = 0.1$, (c) $\alpha = 0.2$, and (d) $\alpha=0.5$ for the Neel state evolving under the Heisenberg Hamiltonian, $\ket\Psi =  e^{- i H \tau /\hbar} \left | \downarrow \uparrow \downarrow \cdots \uparrow \right\rangle$. Dashed lines refer to the exact values of $E_2(\Psi)$.}
\label{Fig:Suppl1}
\end{figure}

\begin{figure}[t]
\includegraphics[width=.32\linewidth]{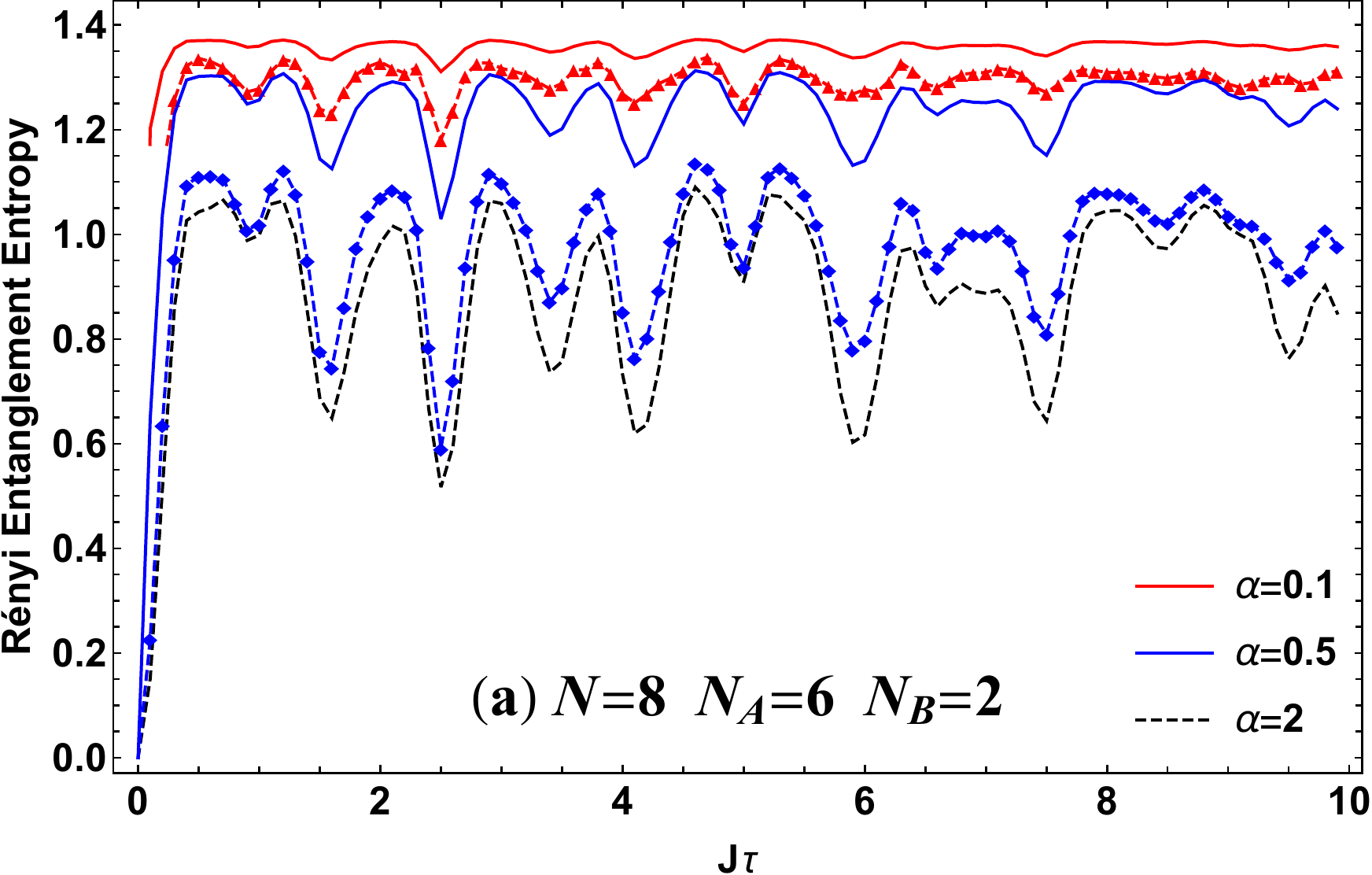}~
\includegraphics[width=.32\linewidth]{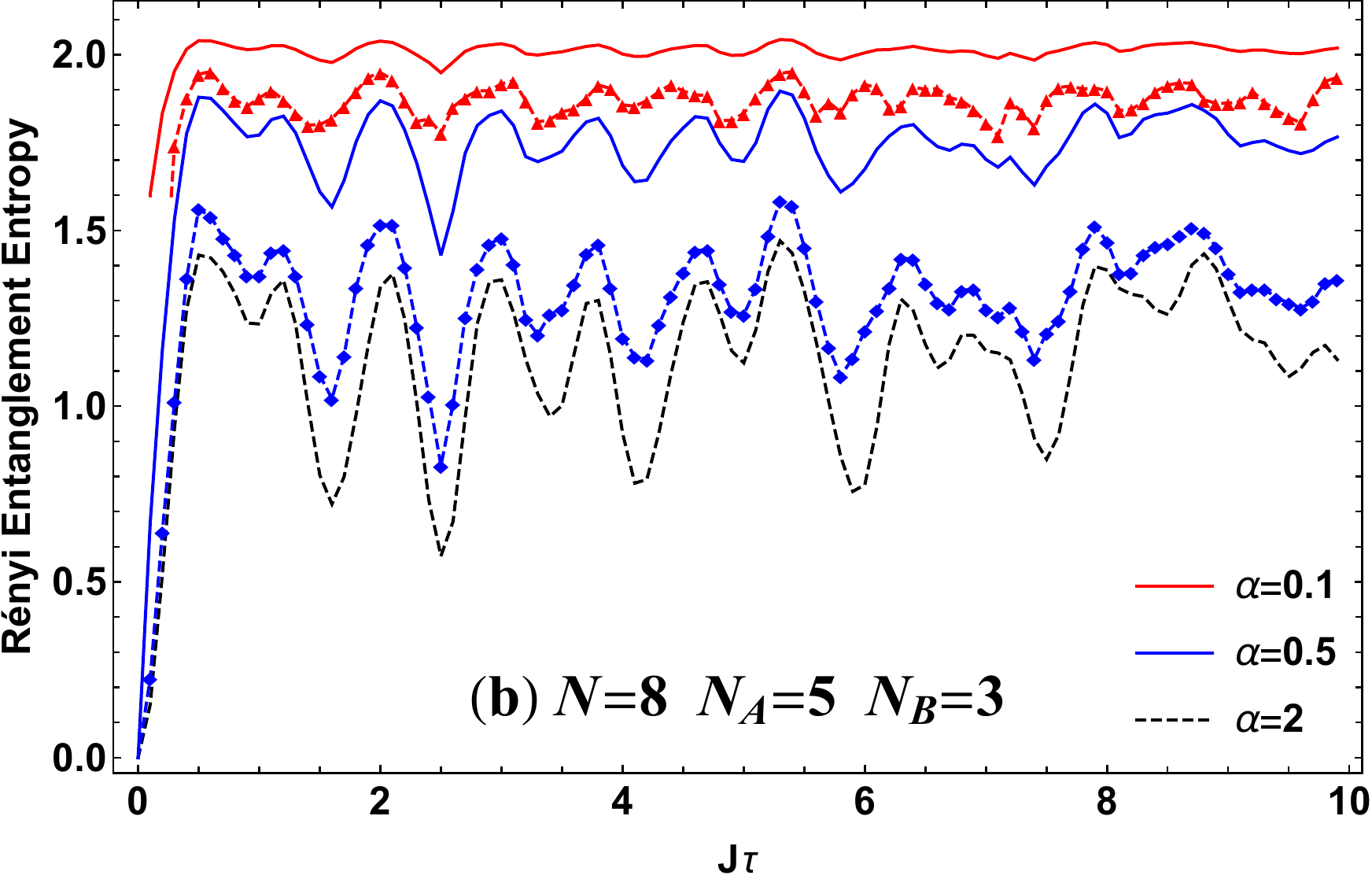}~
\includegraphics[width=.32\linewidth]{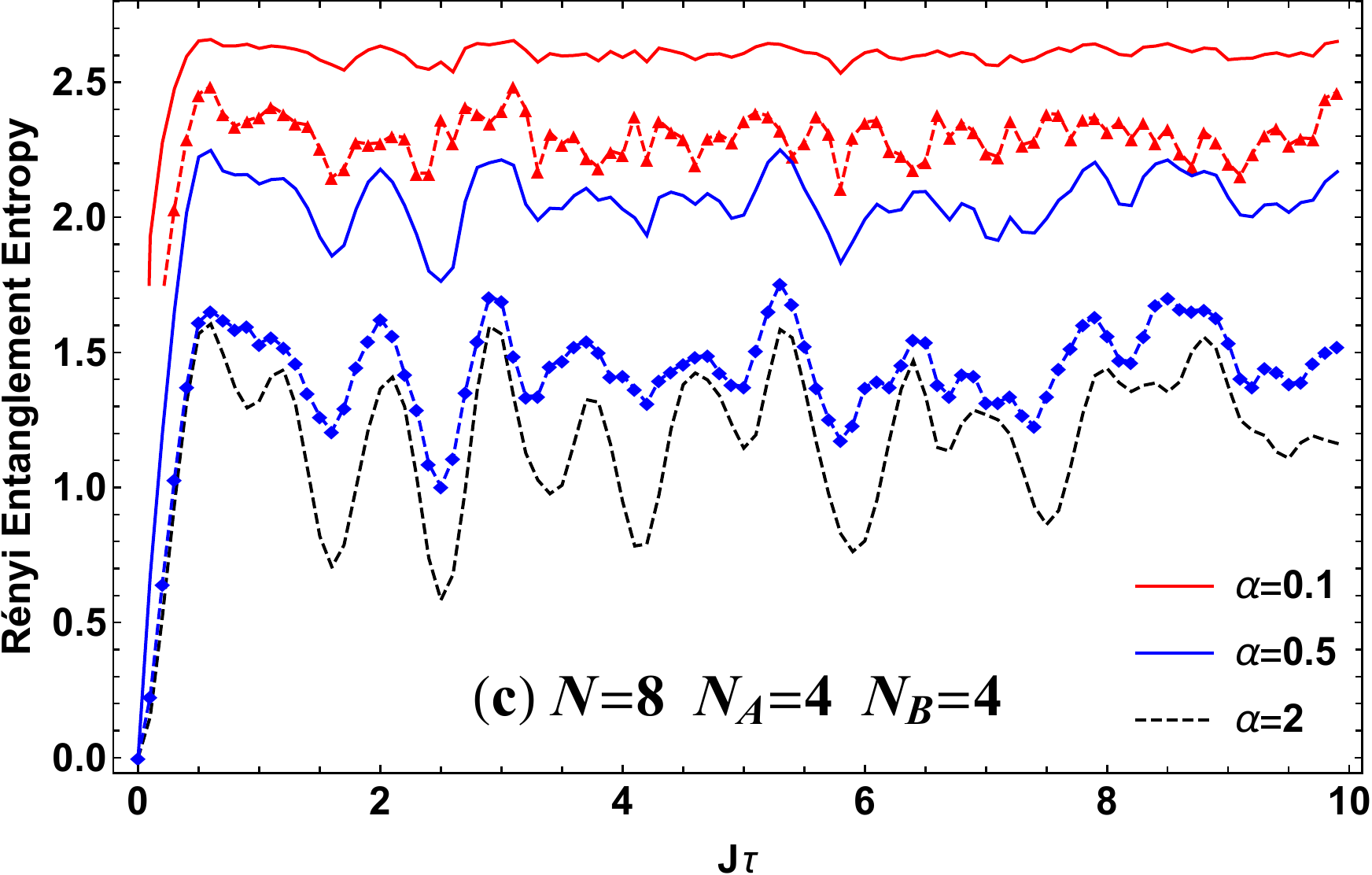}
\caption{Estimation of the REEs with $\hat{E}_{\alpha, {\rm POVM}} (\Psi)$ for $\alpha = 0.1$ (red triangle points) and $\alpha = 0.5$ (blue diamond points) by increasing the bipartition size $N_B$ from $2$ to $4$. Red and blue dashed lines interpolate between the triangle and diamond points, respectively. Solid lines refer to the exact values of $E_\alpha(\Psi)$ for $\alpha = 0.1$ (red solid lines) and $\alpha = 0.5$ (blue solid lines). Black dashed lines refer to the exact values of $E_2(\Psi)$.}
\label{Fig:Suppl2}
\end{figure}

As physical examples, we consider two different models in a $1$-D spin system. We focus on the case of measuring the REE of order $2$ after applying POVMs, which leads to the following REE estimation bound:
$$
E_\alpha(\Psi) \geq \hat{E}_\alpha (\Psi) := \left( \frac{\alpha}{1-\alpha} \right) \log \langle e^{  \left( \frac{1-\alpha}{\alpha} \right)  E_2 } \rangle,
$$
where $\langle e^{  \left( \frac{1-\alpha}{\alpha} \right)  E_2 } \rangle = \sum_m p_m e^{  \left( \frac{1-\alpha}{\alpha} \right)  E_2 (\Psi_m)}$ can be obtained from outcome statistics $\{ p_m, E_2(\Psi_m) \}$.
\subsection{A. Heisenberg model}
First, we consider a Heisenberg model whose Hamiltonian is given by 
$$
H = - J \sum_{j=1}^{N} {\vec \sigma}^{(j)} \cdot \vec \sigma^{(j+1)}
$$
with periodic boundary condition $\vec\sigma^{(N+1)} = \vec\sigma^{(1)}$. Let us suppose that the system is initially prepared in the Neel state $\left | \downarrow \uparrow \downarrow \cdots \uparrow \right\rangle$, which does not have entanglement. As the system undergoes the time evolution, the state $\ket\Psi =  e^{- i H \tau /\hbar} \left | \downarrow \uparrow \downarrow \cdots \uparrow \right\rangle$ becomes entangled after some $\tau$.

We first investigate entanglement between subsystems of an $N = 8$ spin system, divided into $N_A = 6$ and $N_B=2$ after time evolution $0 \leq J \tau \leq 10$ in units of $\hbar = 1$. We employ different types of POVMs acting on single and two-spin sites in the subsystem. For single-spin measurements, we apply two-different types of measurements on the $k$th spin: 1) Projection measurements in the Pauli-$X$ ($\{ \left| \uparrow_x \right\rangle_k \left\langle \uparrow_x \right|, \left| \downarrow_x \right\rangle_k \left\langle \downarrow_x \right| \}$), Pauli-$Y$ ($\{ \left| \uparrow_y \right\rangle_k \left\langle \uparrow_y \right|, \left| \downarrow_y \right\rangle_k \left\langle \downarrow_y \right| \}$), or Pauli-$Z$  ($\{ \left| \uparrow_z \right\rangle_k \left\langle \uparrow_z \right|, \left| \downarrow_z \right\rangle_k \left\langle \downarrow_z \right| \}$) basis
and 2) POVMs along the z-axis $\{ \Pi^{(k)}_{\uparrow_z}(\epsilon), \Pi^{(k)}_{\downarrow_z}((\epsilon) \}$, where $\Pi^{(k)}_{\uparrow_z}(\epsilon) = (1-\epsilon) \left| \uparrow_z \right\rangle_k \left\langle \uparrow_z \right| + \epsilon \left| \downarrow_z \right\rangle_k \left\langle \downarrow_z \right| $ and $\Pi^{(k)}_{\downarrow_z}(\epsilon)= \epsilon \left| \uparrow_z \right\rangle _k \left\langle \uparrow_z \right| + (1- \epsilon) \left| \downarrow_z \right\rangle_k \left\langle \downarrow_z \right| $. For both protocols, we evaluate the maximum value of $\hat{E}_\alpha(\Psi)$ among the measurements on $k = 1, \cdots, N$. For two-spin measurements, we optimise $\hat{E}_\alpha(\Psi)$ over all possible dichotomic POVMs. Note that in this example, two-spin POVMs are sufficient to express all the LOCC operation acting on the subsystem $B$ as $N_B=2$.

Figure \ref{Fig:Suppl1} shows that $\hat{E}^{\rm two-spin}_{\alpha, {\rm POVM}}(\Psi)$ obtained from the optimal two-spin dichotomic POVMs always provides an improved lower bound compared to the direct measurement of $E_2(\Psi)$ for all values of $0 \leq \alpha \leq 2$ and $0 \leq J \tau \leq 10$. In our example, $\hat{E}^{\rm two-spin}_{\alpha, {\rm POVM}}(\Psi)$ is close to the exact value $E_\alpha(\Psi)$ for $\alpha = 0.01$ and $\alpha = 0.1$, while some amount ($|\hat{E}^{\rm two-spin}_{\alpha, {\rm POVM}}(\Psi) - E_2(\Psi) | \sim 0.1$) of improvement can be seen for $\alpha = 0.5$ compared to the direct measurement of $E_2(\Psi)$. The gap between $\hat{E}^{\rm two-spin}_{\alpha, {\rm POVM}}(\Psi) $ and $E_2(\Psi)$ becomes smaller when $\alpha$ increases as $\hat{E}_\alpha(\Psi)$ is a monotone decreasing function on $\alpha$, and we note that $\hat{E}^{\rm two-spin}_{\alpha, {\rm POVM}}(\Psi)$ reduces to $E_2(\Psi)$ when $\alpha =2$. 

On the other hand, when we perform POVMs on a single spin, the improvement of the bound is observed only for the small values of $\alpha$. In our example, the gaps  between $\hat{E}_\alpha(\Psi)$, $E_\alpha(\Psi)$ and $E_2(\Psi)$ strongly depend on $J \tau$ for both single-spin Pauli and POVM measurements, where only a limited set of SLOCC can be implemented by those measurements. This implies that POVMs involving more than a single-spin are demanded for a better REE estimation performance in general cases. The costs for achieving improved REE bounds with general POVMs are the computational costs of optimising POVMs in a higher-dimensional Hilbert space and the implementation of collective measurements on multiple-spins, which could be more challenging compared to the single-spin measurements.

We study how the REE estimation bound changes by varying the bipartition size $N_B$. In the case of optimising over the dichotomic POVMs on the subsystem $B$, $\hat{E}_{\alpha, {\rm POVM}} (\Psi)$ shows better performances as the number of spins in the bipartition increases. Figure \ref{Fig:Suppl2} shows that the gap between $\hat{E}_{\alpha, {\rm POVM}} (\Psi)$ and $E_2(\Psi)$ becomes larger with increasing partition size $N_B$. Furthermore, when $N_B$ becomes larger, the behaviours of $\hat{E}_{\alpha, {\rm POVM}} (\Psi)$ with respect to $J \tau$ tend to follow the exact values $E_\alpha(\Psi)$ rather than $E_2(\Psi)$.

\begin{figure}[b]
\includegraphics[width=.32\linewidth]{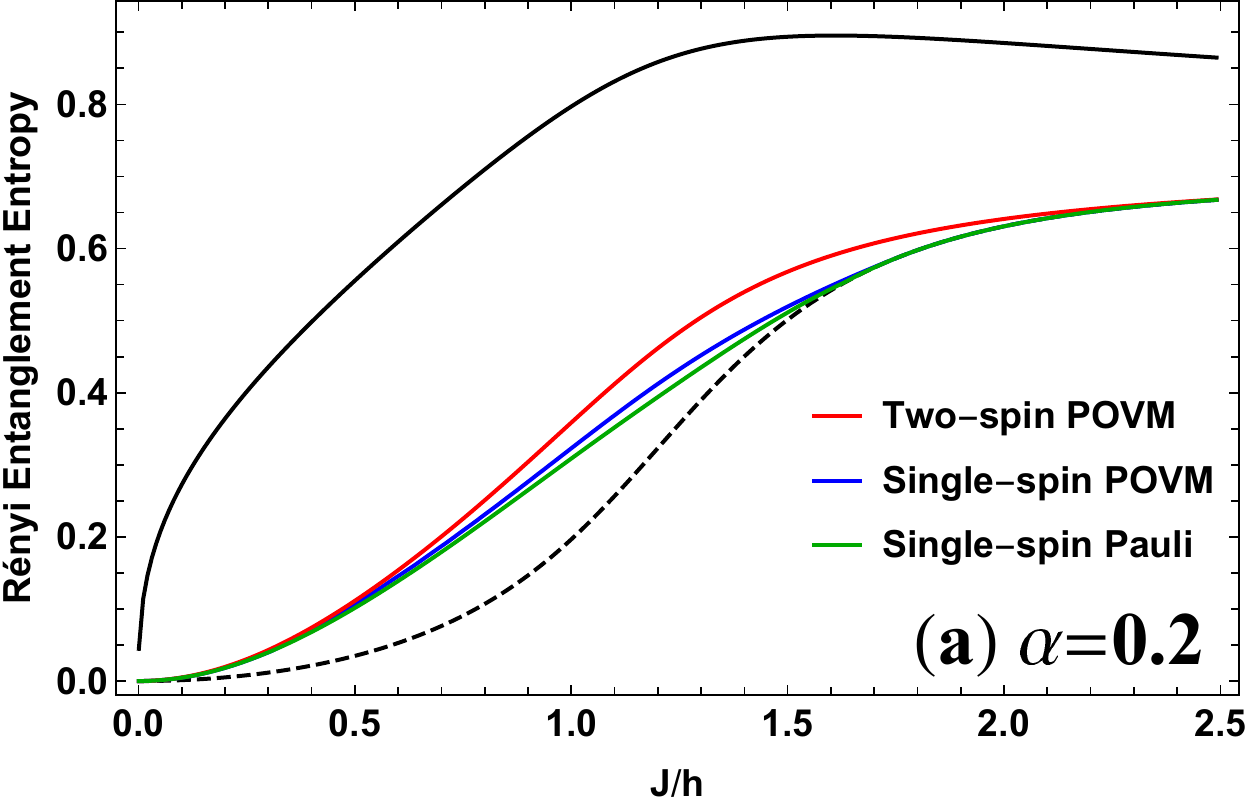}
\includegraphics[width=.32\linewidth]{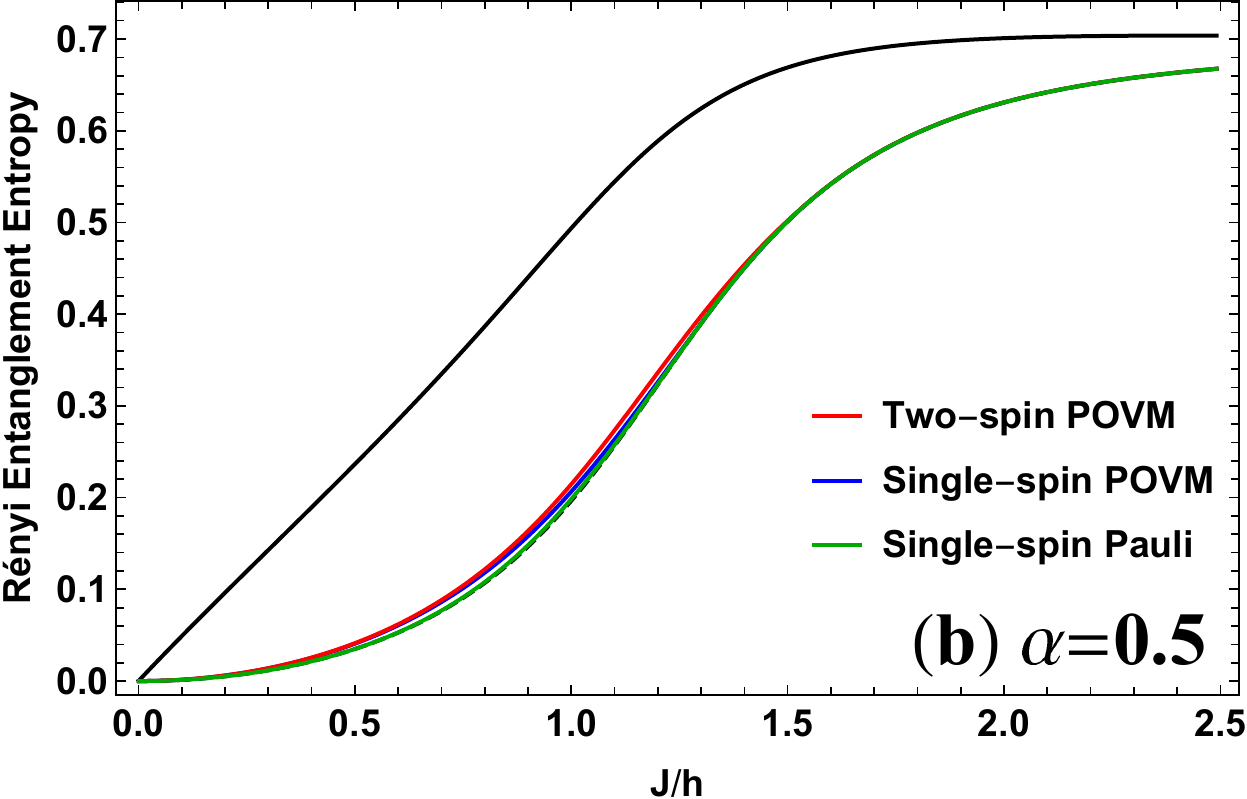}
\includegraphics[width=.32\linewidth]{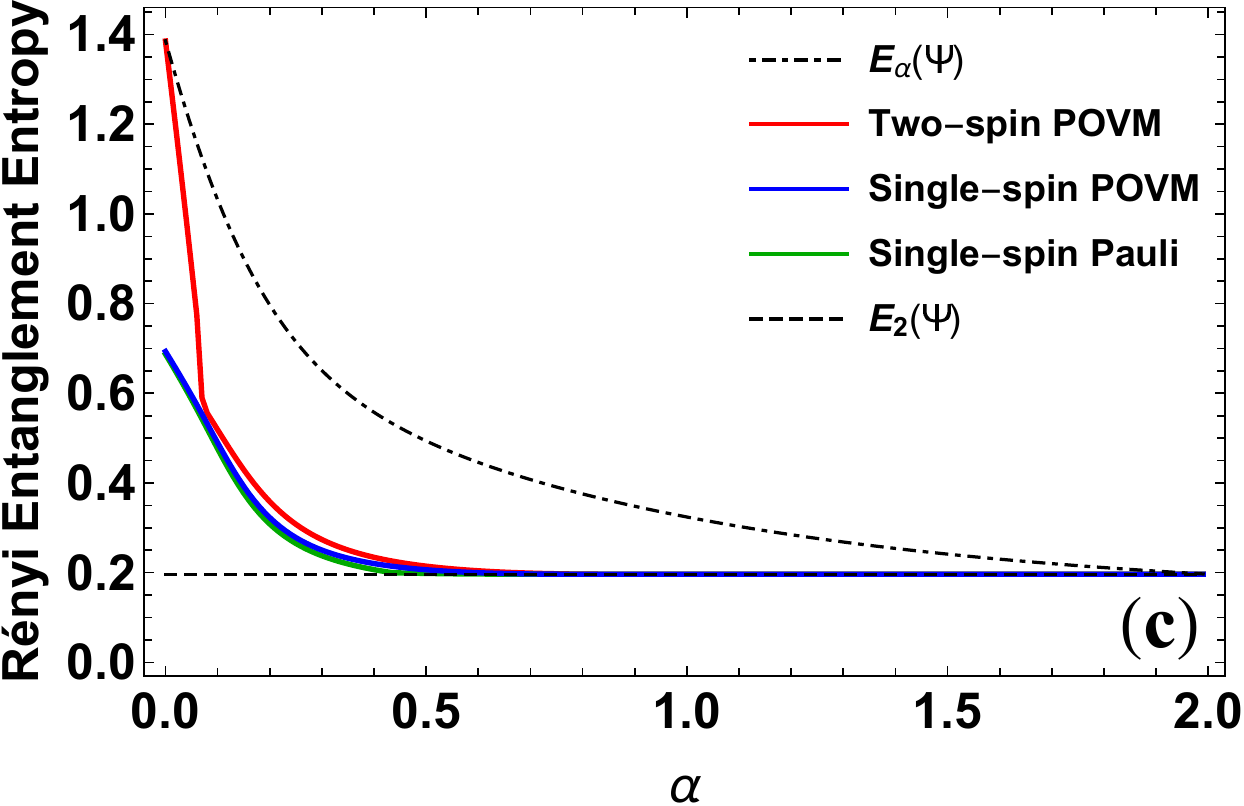}
\caption{$E_\alpha(\Psi)$ of the ground state of $H_{\rm Ising}$ at the critical point $J/h = 1$ and its lower bounds $\hat{E}_\alpha(\Psi)$ for (a) $\alpha = 0.2$ and (b) $\alpha = 0.5$. (c)  $E_\alpha(\Psi)$ and $\hat{E}_\alpha(\Psi)$ for $0 \leq \alpha \leq 2$. The subsystems are chosen as $N_A=6$ and $N_B=2$, where $N = N_A + N_B = 8$. Three different types of POVMs, single-spin Pauli, and single-spin POVM, and two-spin POVM are applied to obtain the lower bounds. $E_\alpha(\Psi)$ is obtained from direct diagonalisation of the Ising Hamiltonian $H_{\rm Ising}$ (black-solid lines). The black-dotted lines refer to $E_2(\Psi)$.}
\label{Fig:Suppl3}
\end{figure}

\subsection{B. Ising model}
We consider another physical model, the transverse Ising model with the following Hamiltonian
$$
H_{\rm Ising} = -h \sum_{j=1}^{N} \sigma_z^{(j)} - J \sum_{j=1}^{N-1} {\sigma}^{(j)}_x \sigma^{(j+1)}_x,
$$
where $J$ is the interaction strength and $h$ is the external magnetic field strength. Phase transition of this system occurs at $J/h = 1$. The ground state has no entanglement when $J=0$, while the system becomes more entangled as $J$ increases until it reaches the critical point $J/h=1$. After passing the critical point, entanglement tends to saturate at $\log 2$ as $J \rightarrow \infty$.

We first investigate entanglement for the ground state of $H_{\rm Ising}$ with $N_A=6$ and $N_B = N - N_A = 2$ by increasing the interaction strength $J/h$. Figure \ref{Fig:Suppl3} shows that $E_\alpha(\Psi)$ for $\alpha = 0.2$ increases faster than $E_2(\Psi)$ before the system reaches the critical point and changes more gradually than the higher order REE near the critical point. We also note that the single-spin POVMs are enough to observe this, while optimisation over POVMs only gives small improvement of the bound. However, $\alpha$ becomes larger as $\alpha = 0.5$, only a small amount of improvement in the REE estimation can be obtained. Furthermore, due to the finite size effect, it is hard to recognise the phase transition at $J/h =1$ based on the REEs. 

\begin{figure}[t]
\includegraphics[width=.43\linewidth]{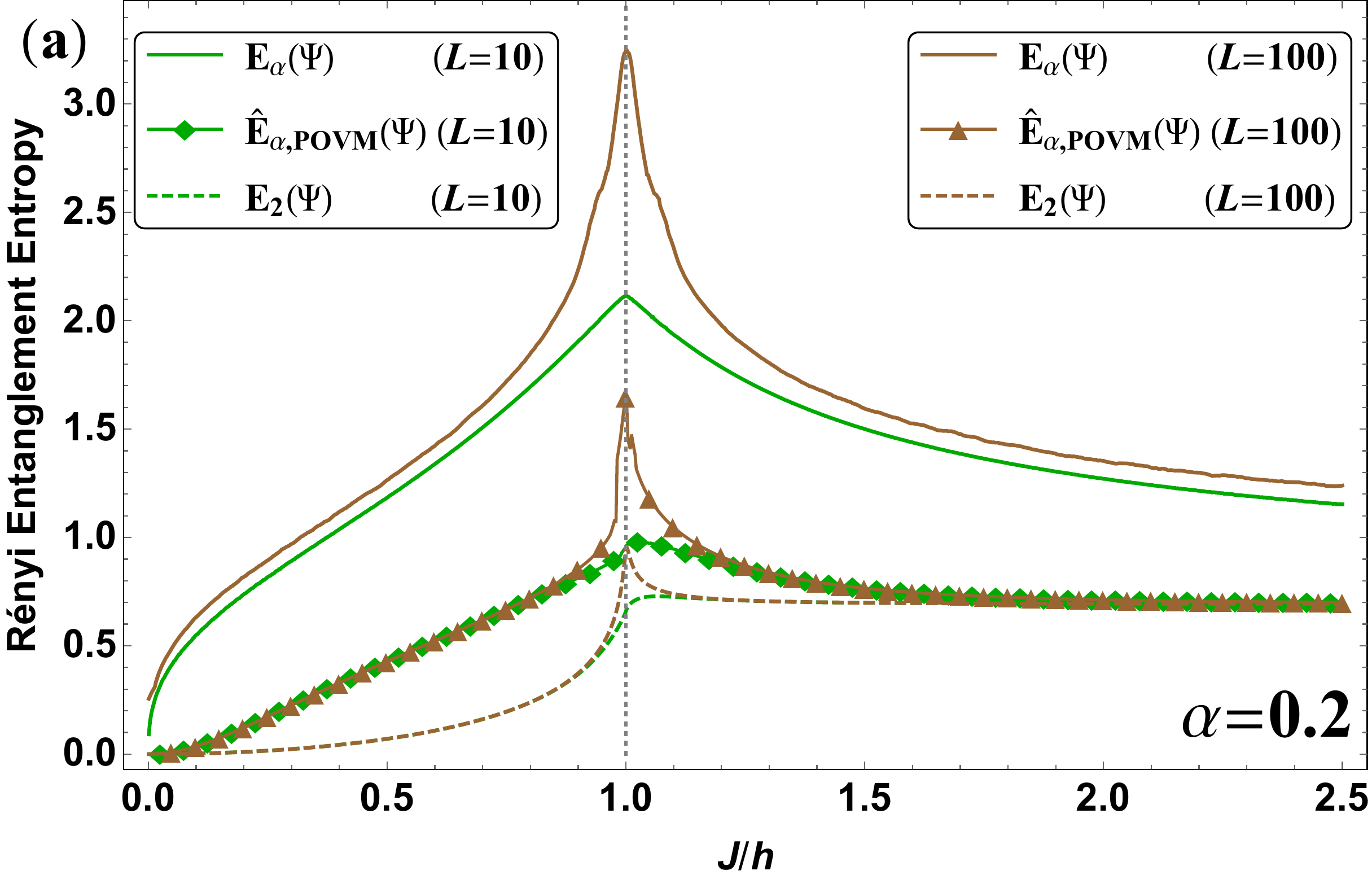}\qquad
\includegraphics[width=.43\linewidth]{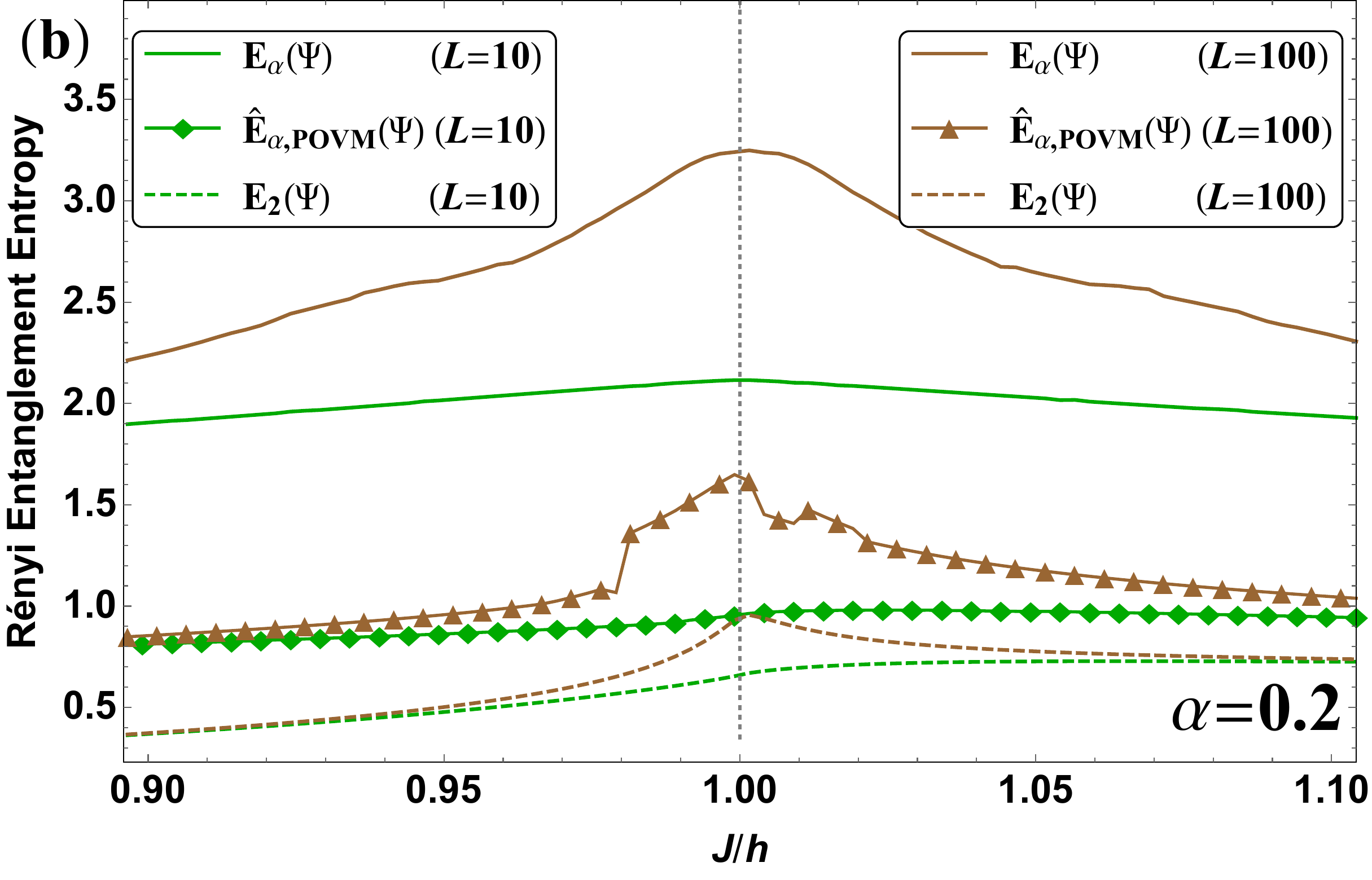}\\~\\
\includegraphics[width=.43\linewidth]{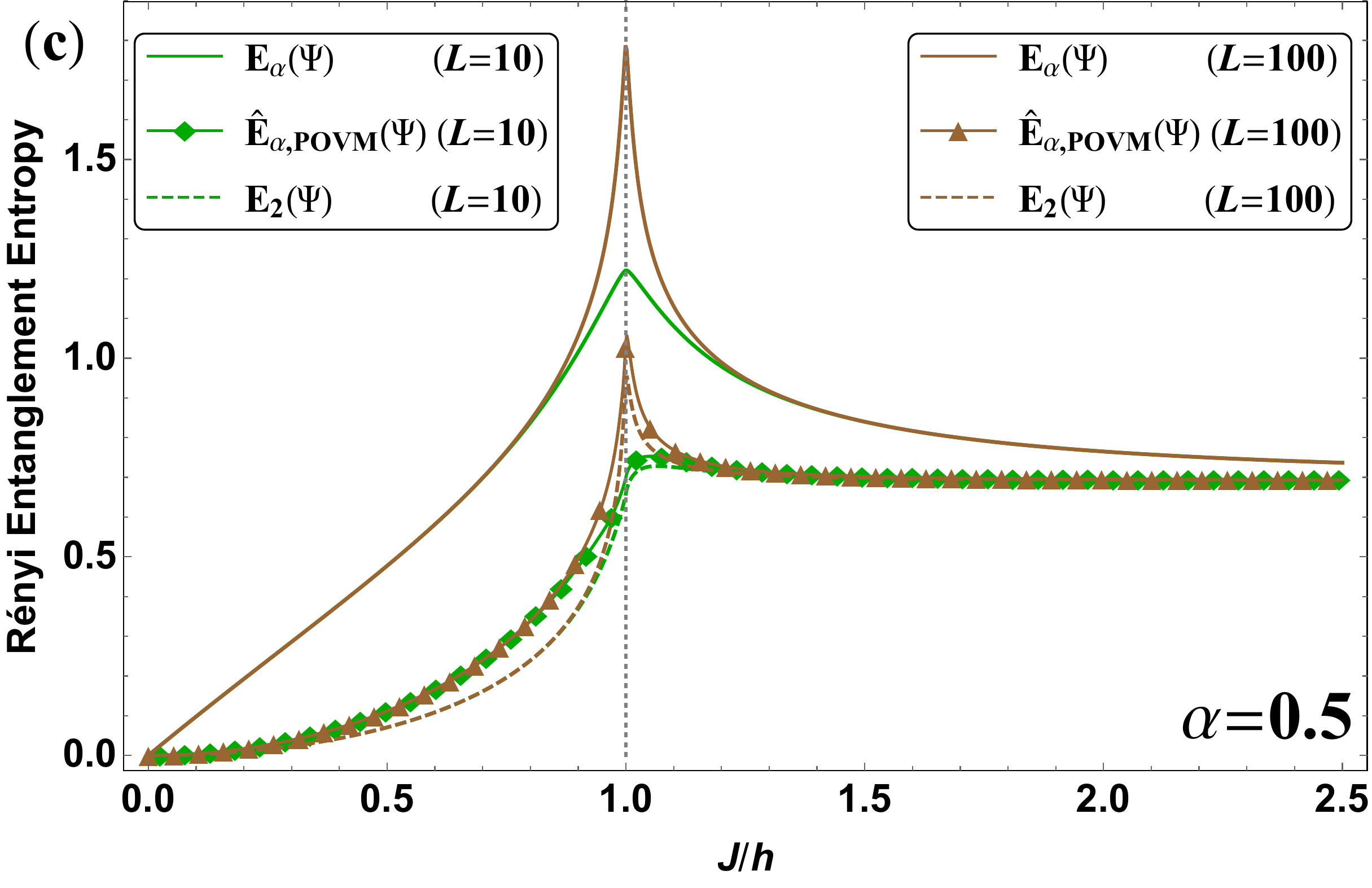}\qquad
\includegraphics[width=.43\linewidth]{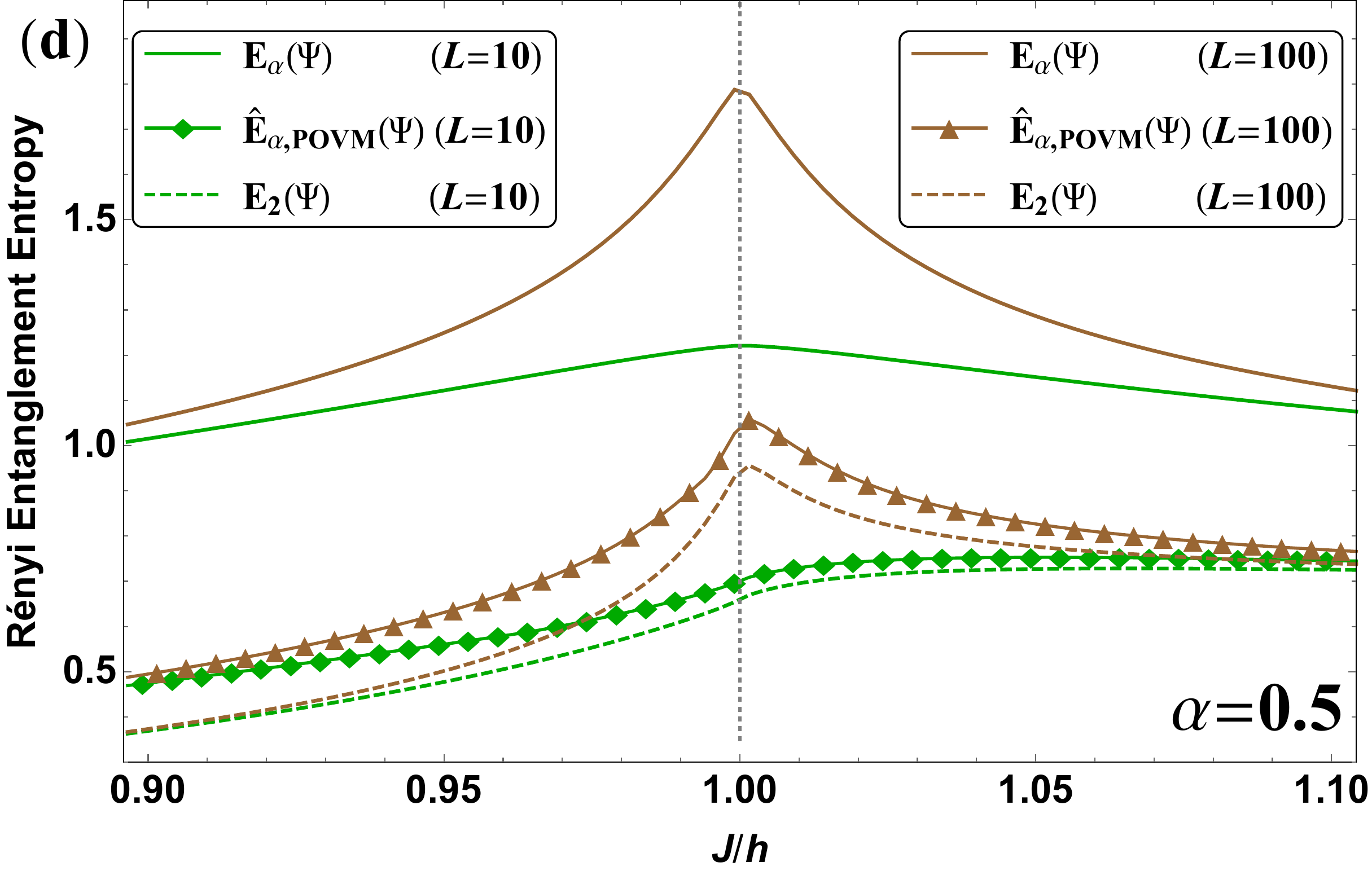}\\~\\
\includegraphics[width=.43\linewidth]{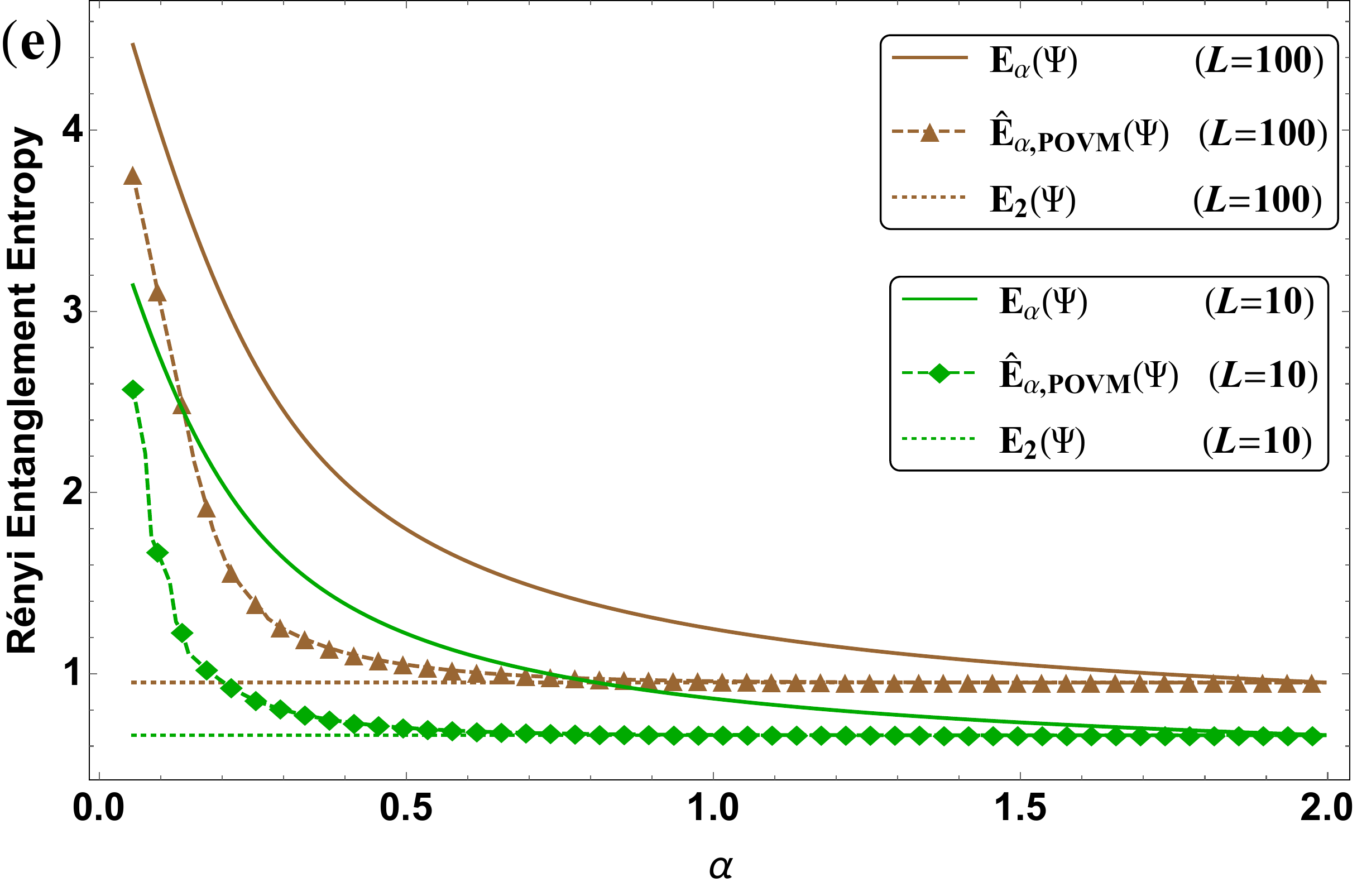}\qquad
\includegraphics[width=.43\linewidth]{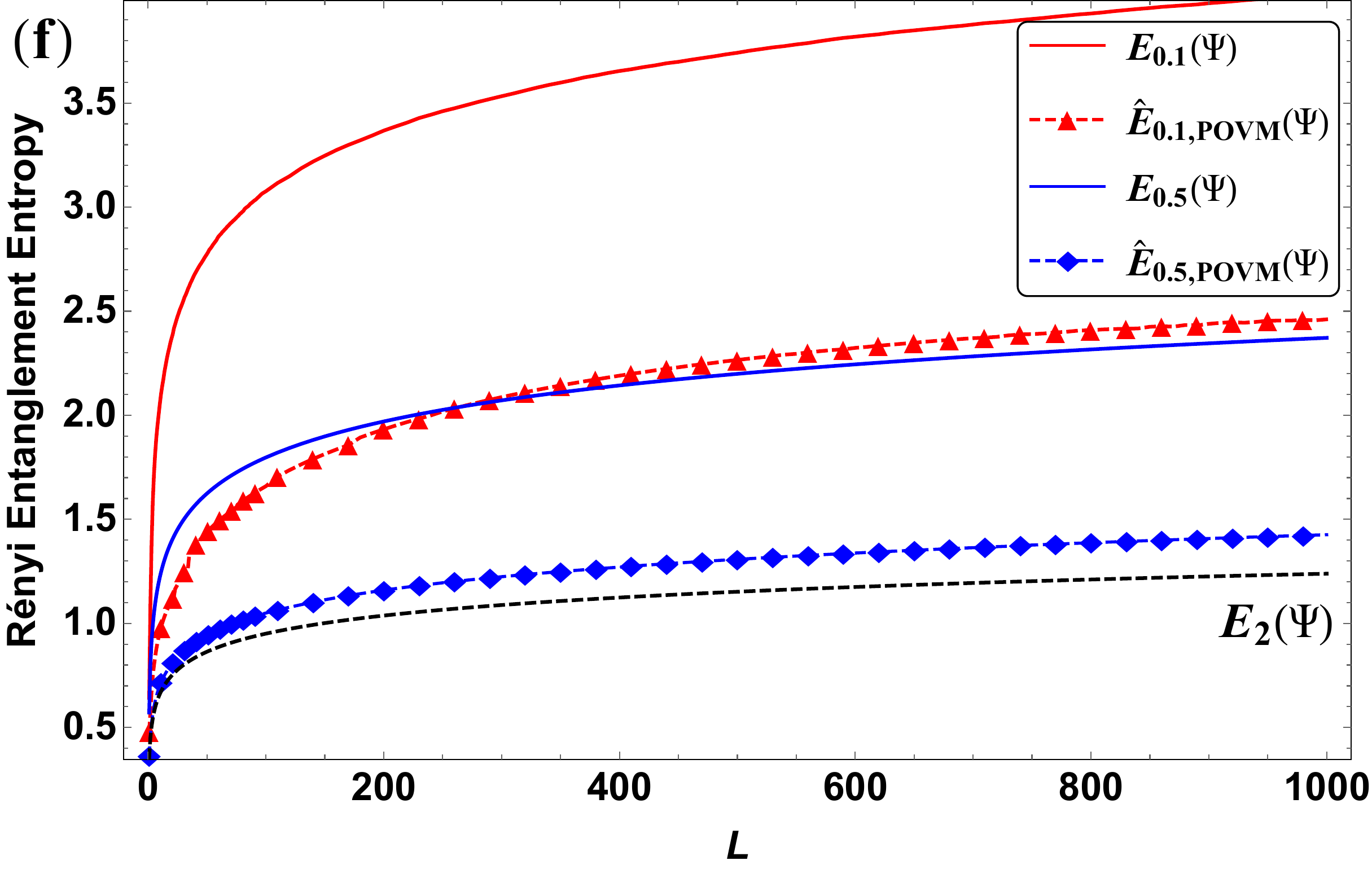}
\caption{REEs $E_\alpha(\Psi)$ for the ground state of the Ising Hamiltonian $H_{\rm Ising}$ in the thermodynamic limit ($N \rightarrow \infty$) and their lower bounds obtained from $\hat{E}_{\alpha, {\rm POVM}}(\Psi)$ by optimising over dichotomic POVMs on the local system $\rho_L = \bigotimes_{l=0}^{L-1} \rho_l$. The REEs and their bounds are obtained by varying $J/h$ from $0$ to $2.5$ for (a) $\alpha = 0.2$, and (c) $\alpha=0.5$, while (b) and (d) present those for the narrow range of $0.9 \leq J/h \leq 1.1$. (e) The REEs and their bounds for various values of $0 \leq \alpha \leq 2$ at the critical point $J/h = 1$. (f) Behaviours of the REE bounds by increasing the bipartition size $L$. The fermionic modes only with $\nu_l \leq 1- 10^{-10}$ have been taken into account for evaluating $\hat{E}_{\alpha, {\rm POVM}} (\Psi)$.}
\label{Fig:Suppl4}
\end{figure}

In order to handle these problems, we study the ground state REEs of the Ising model in the thermodynamic limit $N\rightarrow \infty$. The entanglement spectrum for the ground state of an XY Hamiltonian,
$$
H_{XY} = -\sum_{j=0}^{N-1} \left( \frac{a}{2}\left[ (1+\gamma) {\sigma}^{(j)}_x \sigma^{(j+1)}_x+ (1-\gamma) {\sigma}^{(j)}_y \sigma^{(j+1)}_y\right] +\sigma_x^{(j)} \right),
$$
can be evaluated in the thermodynamic limit \cite{Vidal03, Latorre04}. The ground state of the system can be characterised by using Majorana operators,
$$
c_{2l} = \left( \prod_{j=0}^{l-1}\sigma_z^{(l)}  \right) \sigma_x^{(l)}~{\rm and}~c_{2l+1} = \left( \prod_{j=0}^{l-1}\sigma_z^{(l)}  \right) \sigma_y^{(l)},
$$
satisfying the anticommutation relations $\{ c_k, c_l \} = 2 \delta_{kl}$. The reduced density matrix of the ground state having partition size $L$ can then be characterised by $\langle c_k c_l \rangle = \delta_{kl} + i \Gamma_{kl}$ as higher order moments can be evaluated by using Wick's theorem. In the thermodynamic limit $N \rightarrow \infty$, the bipartite partition with length $L$ can be evaluated by the following $2L \times 2L$ matrix:
$$
\Gamma_L = \left(
\begin{matrix}
\Pi_0 & \Pi_1 & \cdots & \Pi_{L-1} \\
\Pi_{-1} & \Pi_0 & \cdots & \Pi_{L-2} \\
\vdots & \vdots & \ddots & \vdots  \\
\Pi_{1-L} & \Pi_{2-L} & \cdots & \Pi_0
\end{matrix}
\right),
\qquad
\Pi_l = \left(
\begin{matrix}
0 &g_l \\
-g_{-l} & 0
\end{matrix}
\right),
$$
with real coefficients
$$
g_l = \frac{1}{2\pi} \int_0^{2\pi} d\phi e^{-i l\phi} \frac{a \cos\phi -1 - i a \gamma \sin \phi}{|a \cos\phi -1 - i a \gamma \sin \phi|} = \frac{1}{2\pi} \int_0^{2\pi} d\phi e^{-i l\phi} \frac{ (J/h) \cos\phi -1 - i (J/h)\sin \phi}{| (J/h) \cos\phi -1 - i (J/h) \sin \phi|}
$$
for the Ising model $\gamma = 1$ and $a = J/h$. Note that $\Gamma_L$ can be converted into a block-diagonal form
$$
\tilde\Gamma_L = V \Gamma_L V^T = \bigoplus_{l=0}^{L-1} \left(
\begin{matrix}
0&\nu_l \\
-\nu_l &0
\end{matrix}
\right),
$$
where $V$ is an element of $SO(2L)$. By using a new set of Majorana operators $d_m = \sum_{n=0}^{2L-1} V_{mn} c_n$, we can define fermionic modes
$$
b_l = (d_{2l} + i d_{2l+1})/\sqrt{2},
$$
satisfying $\langle b_l \rangle = 0 = \langle b_k b_l \rangle$ and $\langle b_k^\dagger b_l \rangle = \delta_{kl} \left( \frac{1+ \nu_l}{2} \right)$. Finally, the density matrix of the partition with length $L$ can be expressed as a product of the mixed state of each mode $l$,
$$
\rho_L = \bigotimes_{l=0}^{L-1} \rho_l,
$$
where $\rho_l$ has eigenvalues $\lambda(\rho_l) = \left\{ \frac{1+\nu_l}{2},  \frac{1-\nu_l}{2} \right\}$.
Thus, the REE of the bipartition can be evaluated as
$$
E_\alpha(\Psi) = S_\alpha(\rho_L) = \sum_{l=1}^{L-1} S_\alpha (\rho_l).
$$

In order to estimate the REEs with $\hat{E}_\alpha(\Psi)$, we apply POVMs acting on $\rho_L = \bigotimes_{l=0}^{L-1} \rho_l$. Although the $L$ fermionic modes are uncorrelated, we note that this does not mean that the local system does not factorise into local Hilbert spaces for $L$-spins as each fermionic mode has a non-local structure. Thus, collective measurements on the spin in the subsystem become a general form of such POVMs. Nevertheless, it is always possible to optimise over POVMs acting on the local system by expressing them in terms of the fermionic operators.

Figure~\ref{Fig:Suppl4} shows how the REEs and their bounds behave by varying the system parameter $J/h$ and the bipartition size $L$. When the bipartition size $L$ becomes larger, transition in the REEs near the critical point becomes more significant. We note that the REEs of lower order increase more sharply as the system approaches the critical point. This can be understood as when the system is near the critical point, some of the density matrices corresponding to the fermionic mode $l$ can have $\nu_l$  close to $1$ \cite{Latorre04}, which can be more sensitively captured by lower order REEs. The lower bound of REE $\hat{E}_{\alpha, {\rm POVM}}$ is obtained by employing optimised dichotomic POVMs on the local system. The bounds $\hat{E}_{\alpha, {\rm POVM}} (\Psi)$ can estimate REEs better than directly measuring $E_2(\Psi)$ without applying the POVMs. Similar to the case of the Heisenberg model, the gap between those bounds $| \hat{E}_{\alpha, {\rm POVM}} (\Psi) - E_2(\Psi)|$ near the critical point becomes larger when the bipartition size $L$ becomes larger, where more spins are involved in the POVMs.

\section{IX. Estimation of the REE under decoherence}
We show that our REE estimation protocol can be applied when the pure bipartite state $\ket{\Psi}_{AB}$ undergoes a decoherence process and becomes a mixed state
$$
\tilde\rho_{AB} = (1- z) \ket{\Psi}_{AB} \bra{\Psi} +  z \sigma_{AB},
$$
where $\sigma_{AB}$ can be determined based on a decoherence model. Here, $z$ indicates the degree of decoherence, as $z=0$ refers to the decoherence-free case $\ket{\Psi}_{AB}$, while the system is fully-decohered to $\sigma_{AB}$ when $z=1$.
After applying POVMs on the subsystem that can be represented using Kraus operators $\{ K_m \}$, the quantum state transforms into
$$
\tilde\rho_{AB} \rightarrow \{ \tilde q_m, \tilde\rho_{AB}^m\},
$$
where the state $\tilde\rho_{AB}^m$ refers to the quantum state that we obtain from the measurement outcome $m$. We then express $\tilde \rho_{AB}^m$ in terms of the decoherence-free term $\ket{\Psi_m}$ and the fully-decohered term $\sigma_{AB}$ as
$$
\begin{aligned}
\tilde\rho_{AB}^m &= \frac{1}{\tilde q_m} \left[ (1-z) K_m \ket{\Psi}_{AB} \bra{\Psi} K_m^\dagger + z K_m \sigma_{AB} K_m^\dagger \right]\\
&= \frac{1}{\tilde q_m} \left[  (1-z) p_m \ket{\Psi_m}_{AB} \bra{\Psi_m} + z r_m \sigma^m_{AB} \right].
\end{aligned}
$$
Here $\ket{\Psi_m} = K_m \ket{\Psi} / \sqrt{p_m} $ with $p_m =  \bra{\Psi} K_m^\dagger K_m \ket{\Psi}_{AB}$ and $\sigma_{AB}^m = K_m \sigma_{AB} K_m^\dagger / r_m$ with $r_m = \Tr \left[ \sigma_{AB} K_m^\dagger K_m \right]$ satisfying $\tilde q_m= (1-z)p_m + z r_m$. Subsequently, the reduced density matrix for the local system $B$ can be described as
$$
\tilde\rho_{B}^m = \frac{1}{\tilde q_m} \left[  (1-z) p_m \rho_B^m+ z r_m \sigma^m_{B} \right],
$$
where we denote $\rho_B^m = \Tr_A \ket{\Psi_m}_{AB} \bra{\Psi_m}$ and $\sigma_B^m = \Tr_A \sigma_{AB}^m$.

\begin{figure}[b]
\includegraphics[width=.45\linewidth]{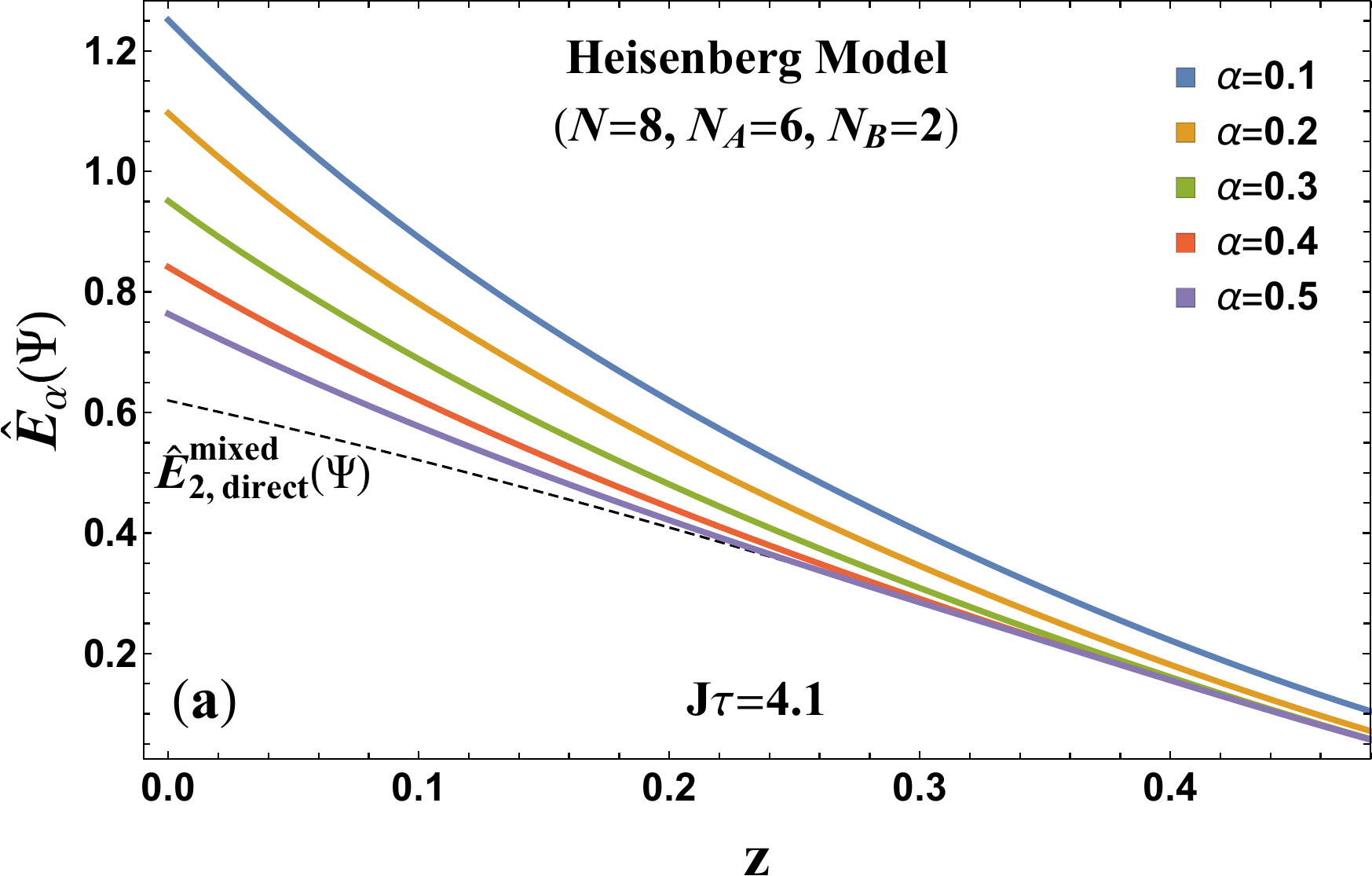} \qquad
\includegraphics[width=.45\linewidth]{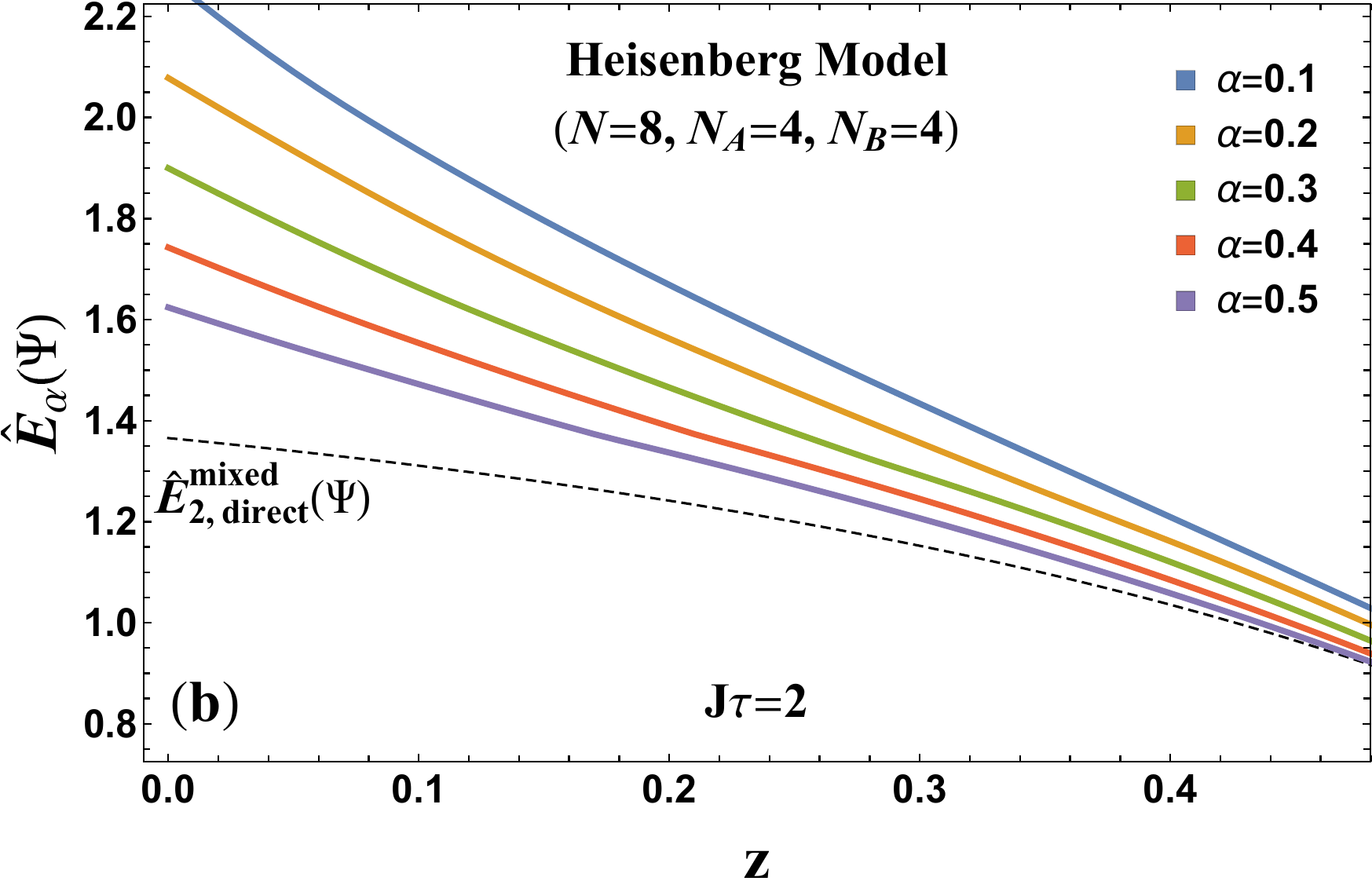}\\~\\~\\
\includegraphics[width=.45\linewidth]{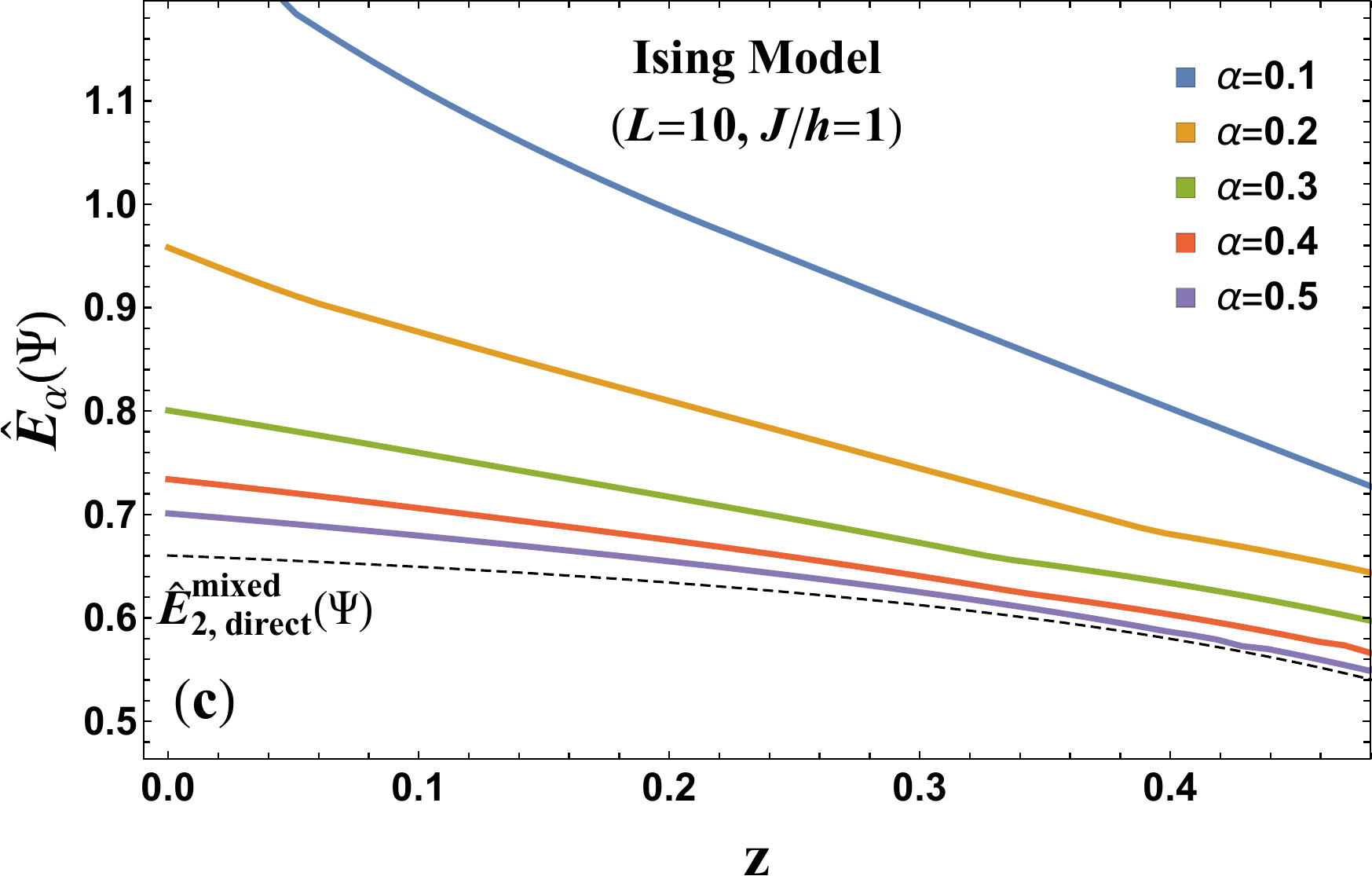}\qquad
\includegraphics[width=.45\linewidth]{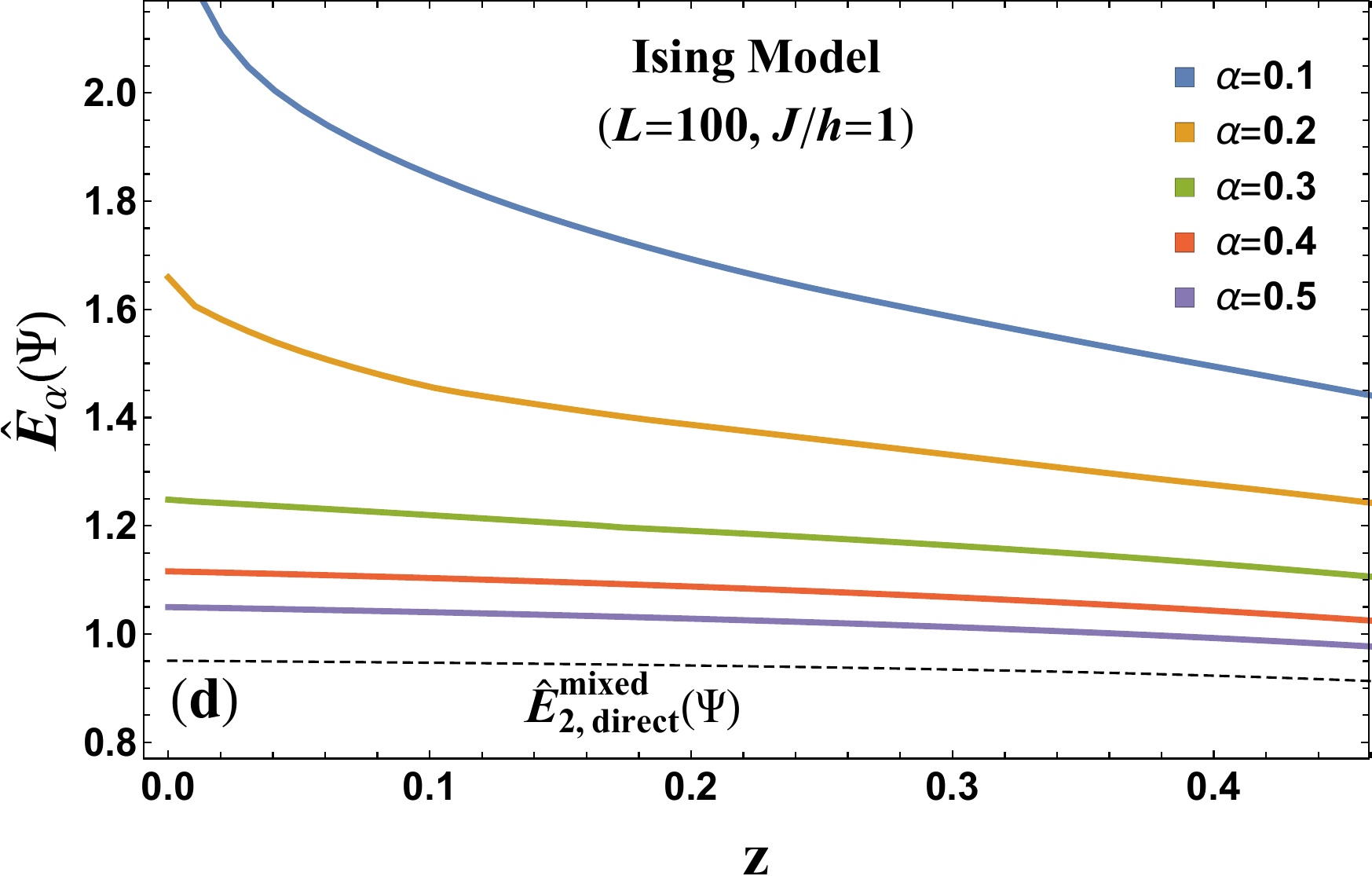}
\caption{$\hat{E}^{\rm mixed}_{\alpha, {\rm POVM}} (\Psi)$ under the global depolarising channel for the Heisenberg Model and Ising model with various parameters. The Heisenberg model refers to the state $\ket\Psi =  e^{- i H \tau /\hbar} \left | \downarrow \uparrow \downarrow \cdots \uparrow \right\rangle$ discussed in the previous section, while the Ising model refers to the ground state in the thermodynamic limits, $N\rightarrow \infty$ with bipartition size $L$. The local measurements applied to the subsystem are optimised over dichotomic POVMs for both cases.}
\label{Fig:Suppl5}
\end{figure}

We recall that the pure state REE estimation is given by
$$
E_\alpha(\Psi) \geq \hat{E}_\alpha(\Psi) = \left( \frac{\alpha}{1-\alpha} \right) \log \left[ \sum_m p_m e^{ \left( \frac{1-\alpha}{\alpha} \right) E_2(\Psi_m)} \right] = \left( \frac{\alpha}{1-\alpha} \right) \log \left[ \sum_m p_m e^{ \left( \frac{1-\alpha}{\alpha} \right) S_2(\rho^m_B)} \right],
$$
where $p_m = \frac{ \tilde q_m - z r_m}{1-z}$ can be obtained in experiments by determining $\tilde q_m$ and $z$ and evaluating $r_m$ from the known expression of $\sigma_{AB}$. We note that $S_2(\rho_B^m)$ can be estimated from the experimentally accessible value $S_2(\tilde\rho_B^m)$ by utilising the following Lemma recently shown by Hanson and Datta \cite{Hanson17}.
\begin{lemma} [Uniform continuity bound for the R\'enyi entropy \cite{Hanson17}] Suppose that the trace distance between two quantum states $\rho$ and $\sigma$ in a finite-dimensional Hilbert space $\cal H$ with $d = \dim{\cal H}$ is bounded as $\frac{1}{2} \|\rho - \sigma\|_1 \leq \epsilon$. The difference between the R\'enyi entropies of two states is then upper bounded by
\begin{equation}
| S_\alpha(\rho) - S_\alpha(\sigma)| \leq f_\alpha(\epsilon, d) =
\begin{cases}
\frac{1}{1-\alpha} \log \left[ (1-\epsilon)^\alpha +(d-1)^{1-\alpha} \epsilon^\alpha \right] &\epsilon \leq 1- \frac{1}{d} \\
\log d& \epsilon > 1- \frac{1}{d}
\end{cases}.
\end{equation}
\label{Lemma:Datta}
\end{lemma}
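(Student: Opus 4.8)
The plan is to strip away the operator structure and reduce the claim to an inequality for classical probability distributions, and then to solve the resulting optimisation over the simplex explicitly. First I would use that the R\'enyi entropy depends only on the spectrum, $S_\alpha(\rho) = S_\alpha(\lambda^\downarrow(\rho))$, where $\lambda^\downarrow(\rho)$ is the vector of eigenvalues in decreasing order and the right-hand side denotes the classical R\'enyi entropy. By the Lidskii--Wielandt (Mirsky) inequality for Hermitian matrices, $\sum_i |\lambda_i^\downarrow(\rho) - \lambda_i^\downarrow(\sigma)| \le \|\rho - \sigma\|_1$, so the sorted eigenvalue distributions $p := \lambda^\downarrow(\rho)$ and $q := \lambda^\downarrow(\sigma)$ are probability vectors on $d$ outcomes obeying $\tfrac{1}{2}\sum_i |p_i - q_i| \le \tfrac12 \|\rho-\sigma\|_1 \le \epsilon$. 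Hence it suffices to prove the purely classical inequality $|S_\alpha(p) - S_\alpha(q)| \le f_\alpha(\epsilon,d)$ whenever $\tfrac12\|p-q\|_1 \le \epsilon$.

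For the classical bound I would assume without loss of generality that $S_\alpha(p) \ge S_\alpha(q)$ and maximise $S_\alpha(p) - S_\alpha(q)$ over the constraint set. The key structural claim is that a maximiser can be taken with $q$ a point mass, $q = (1,0,\dots,0)$ after relabelling. Granting this, the constraint collapses to $1 - p_1^\downarrow \le \epsilon$, i.e.\ the largest component obeys $p_1^\downarrow \ge 1-\epsilon$, and since $S_\alpha(q) = 0$ the task reduces to maximising $S_\alpha(p)$ subject to $p_1^\downarrow \ge 1-\epsilon$. Schur-concavity of $S_\alpha$ (valid for every $\alpha>0$ once the sign of $1-\alpha$ is absorbed) then forces the remaining mass to spread uniformly, $p = (p_1, \tfrac{1-p_1}{d-1}, \dots, \tfrac{1-p_1}{d-1})$, leaving a one-variable maximisation of $g(p_1) = \tfrac{1}{1-\alpha}\log[\,p_1^\alpha + (1-p_1)^\alpha (d-1)^{1-\alpha}\,]$ over $p_1 \in [1-\epsilon, 1]$. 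Because $g$ peaks at the uniform point $p_1 = 1/d$, where $g(1/d)=\log d$, and is monotone on either side of it, the two regimes of the lemma appear automatically: for $\epsilon \le 1 - 1/d$ the constraint is active and the optimum sits at $p_1 = 1-\epsilon$, giving $f_\alpha(\epsilon,d)$; for $\epsilon > 1-1/d$ the uniform distribution becomes feasible and the optimum is $\log d$.

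The hard part will be establishing the structural claim that the optimal $q$ is a point mass, because concentrating $q$ lowers $S_\alpha(q)$ but simultaneously changes the trace distance and hence restricts how spread $p$ may be. I would control this through a perturbation argument on the coupling decomposition $p = r + a$ and $q = r + b$, where $r = \min(p,q)$ componentwise, $a,b \ge 0$ have disjoint support, and $\|a\|_1 = \|b\|_1 = \tfrac12\|p-q\|_1$: starting from any feasible pair, I would show that transporting the mass of $b$ onto a single atom while reinvesting the freed budget into flattening $p$ never decreases $S_\alpha(p) - S_\alpha(q)$, tracking both terms by Schur-concavity and maintaining feasibility throughout the transport. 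Making this monotonicity rigorous uniformly in $\alpha \in (0,1) \cup (1,\infty)$ --- in particular checking that the sign of $(1-\alpha)$ never reverses the direction of the bound --- is the delicate step; once it is secured, the explicit one-variable computation above closes the argument.
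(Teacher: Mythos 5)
First, a point of context: the paper never proves this lemma --- it is imported verbatim from Ref.~\cite{Hanson17} and used as a black box --- so there is no internal proof to compare against, and your proposal must stand on its own. It does not, and the reason is not merely the technical delicacy you flagged but something more basic: the statement you set out to prove, read (as you read it) for all $\alpha \in (0,1)\cup(1,\infty)$, is false for every $\alpha>1$. Take $p$ uniform on $d$ outcomes and $q$ obtained from $p$ by moving mass $\epsilon$ onto a single outcome, i.e.\ $q_1 = 1/d+\epsilon$ and $q_i = 1/d - \epsilon/(d-1)$ for $i\geq 2$, both realised as commuting (diagonal) density matrices. Then $\tfrac12\|p-q\|_1=\epsilon$ and $S_\alpha(p)=\log d$, while $S_\alpha(q) \leq \frac{\alpha}{\alpha-1}\log(1/\epsilon)$ because $\sum_i q_i^\alpha \geq q_1^\alpha \geq \epsilon^\alpha$ and $1-\alpha<0$. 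Hence $S_\alpha(p)-S_\alpha(q)$ grows like $\log d$, whereas for $\alpha>1$ the claimed bound stays bounded in $d$: $f_\alpha(\epsilon,d) \leq \frac{\alpha}{\alpha-1}\log\frac{1}{1-\epsilon}$. Numerically, for $\alpha=2$, $d=100$, $\epsilon=0.2$ one finds $|S_2(p)-S_2(q)|\approx 1.62$ against $f_2(\epsilon,d)\approx 0.45$. (Correspondingly, the theorem of Hanson and Datta concerns the concave range $\alpha\in(0,1]$; the paper's own application of the lemma at $\alpha=2$ in its Section IX inherits this problem, but that is not your doing.)

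The counterexample pinpoints the gap in your argument: it is exactly the "structural claim" that the extremiser $q$ may be taken to be a point mass, which you deferred as the hard part. For $\alpha>1$ that claim is false --- concentrating $q$ is not the worst case, because $\sum_i q_i^\alpha$ is inflated most violently by planting an atom of size $\epsilon$ on a \emph{flat} background, so the extremal pair is (near-uniform, tilted near-uniform) rather than (flat-tailed $p$, point-mass $q$) --- and your transport step, which asserts that moving all of $b$'s mass onto one atom never decreases the objective, is precisely where the sign of $(1-\alpha)$ reverses the inequality you need; the uniformity in $\alpha$ you hoped to secure cannot exist. The rest of your skeleton is sound and salvageable for $\alpha\in(0,1]$: the Mirsky/Lidskii--Wielandt reduction to sorted spectra is valid for all $\alpha$, Schur-concavity correctly forces the flat tail of $p$ given $q$ a point mass, and the one-variable maximisation of $g(p_1)$ on $[1-\epsilon,1]$ is correct. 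But note that in the range where the lemma is true, point-mass extremality of $q$ is not a lemma on the way to the theorem; it \emph{is} the theorem, since the bound equals the value at that configuration, and your sketch of it assumes essentially what must be proven --- this is where the concavity machinery of Ref.~\cite{Hanson17} enters. So: restrict the claim to $\alpha\in(0,1]$ and supply a genuine proof of the structural step there; as written, the proposal attempts to prove a false statement.
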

From Lemma~\ref{Lemma:Datta}, we obtain
$$
S_2(\rho^m_B) \geq S_2(\tilde\rho^m_B) - f_2 (\epsilon_m, d),
$$
where $\frac{1}{2} \|\rho_B^m - \tilde\rho_B^m \|_1 \leq \epsilon_m$. We can take $\epsilon_m$ to be $z (r_m / \tilde q_m)$ by the following inequality
$$
\begin{aligned}
\frac{1}{2} \|\rho_B^m - \tilde\rho_B^m \|_1 &= \frac{1}{2 \tilde q_m} \| \tilde q_m \rho_B^m - \left[ (1-z) p_m \rho_B^m + z r_m \sigma_B^m \right] \|_1\\
&= \left( \frac{z r_m}{\tilde q_m} \right) \frac{1}{2} \| \rho_B^m - \sigma_B^m \|_1 \\
&\leq \frac{z r_m}{\tilde q_m}.
\end{aligned}
$$
By assuming that $\epsilon_m = z (r_m / \tilde q_m) \leq 1- \frac{1}{d}$, we then have
$$
f_2(\epsilon_m, d) = - \log \left[ (1-\epsilon_m)^2 + \frac{\epsilon_m^2}{d-1}\right] \leq - 2 \log (1- z (r_m / \tilde q_m)).
$$
By combining these results, the REE of the uncontaminated pure state $\ket{\Psi}_{AB}$ is lower bounded by
\begin{equation}
\begin{aligned}
E_\alpha(\Psi) &\geq \left( \frac{\alpha}{1-\alpha} \right) \log \left[ \sum_m p_m e^{ \left( \frac{1-\alpha}{\alpha} \right) \left[ S_2(\tilde\rho^m_B) - f_2(\epsilon_m,d) \right]} \right] =: \hat{E}^{\rm mixed}_\alpha (\Psi).
\end{aligned}
\end{equation}
For the special case of $\alpha = 2$, the REE of $\Psi$ can be directly estimated by measuring the R\'enyi entropy $S_2(\tilde\rho_B)$ without applying POVMs as
\begin{equation}
E_2(\Psi) \geq S_2(\tilde\rho_B) - f_2(\epsilon,d) =: \hat{E}_{2, {\rm direct}}^{\rm mixed} (\Psi),
\end{equation}
where $\epsilon = \frac{1}{2} \| \rho_B - \tilde \rho_B \|_1$ with $\rho_B = \Tr_A \ket{\Psi}_{AB} \bra{\Psi}$ and $\tilde\rho_B = \Tr_A \tilde\rho_{AB}$.

We apply the extended protocol to a global depolarising channel, for which case $\sigma_{AB} = {\mathbb 1}_{AB} / d_{AB}$. In this case, $\hat{E}^{\rm mixed}_{\alpha, {\rm POVM}} (\Psi)$ after optimising over dichotomic POVMs on the subsystem can give a better estimation of the REEs compared to $\hat{E}_{2, {\rm direct}}^{\rm mixed} (\Psi)$ obtained by directly measuring the R\'enyi entropy of order $2$. The improvement of the REE bound can be obtained even with the presence of decoherence ($z \lesssim 0.5$) for both the Heisenberg and Ising models studied in the previous sections, regardless of the system parameters and bipartition size (see Fig.~\ref{Fig:Suppl5}).

\section{X. Proof of Proposition 4}
\begin{proposition} Suppose that an initial bipartite mixed state $\rho$ transforms by SLOCC. Then, the following inequality holds
$$
{\rm co}{E}_{(\alpha,s)} (\rho) \geq \frac{1}{s(1-\alpha)}\left[ \langle  e^{ s(1-\alpha) {\rm co}{E}_\gamma} \rangle -1\right]
$$
for $0 < \alpha <1$, $\alpha \leq \gamma$, and $s \leq 1/\alpha$. 
%The right-hand-side of the inequality is maximised when $s=1/\alpha$.
From this, the success probability of raising ${\rm co}E_\alpha$ is upper bounded as 
$$
\begin{aligned}
P\left(  {\rm co}E_\alpha  \geq E_{\rm target} \right)
&\leq \min_{0 \leq  \beta \leq \alpha} \left[ \frac{ \left(\frac{1-\beta}{\beta} \right) {\rm co}{E}_{(\beta,1/\beta)} (\rho) }{e^{\left(\frac{1-\beta}{\beta} \right) E_{\rm target}}-1} \right].
\end{aligned}
$$
\label{Prop:Mixed}
\end{proposition}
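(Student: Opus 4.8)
The plan is to reduce the mixed-state statement to the pure-state Theorem~\ref{Thm:REE} by exploiting the fine-grained structure of the LOCC instrument together with the convex-roof construction. First I would fix an optimal decomposition $\rho = \sum_\mu q_\mu \ket{\psi_\mu}\bra{\psi_\mu}$ achieving ${\rm co}E_{(\alpha,s)}(\rho)$, so that $1 + s(1-\alpha){\rm co}E_{(\alpha,s)}(\rho) = \sum_\mu q_\mu e^{s(1-\alpha)E_\alpha(\psi_\mu)}$. The SLOCC is described by a LOCC instrument whose coarse-grained outcome $m$ has CP map $\mathcal{E}_m$ with local Kraus operators $\{K_{m,n}\}$; since each $K_{m,n}$ is local it sends a pure state to a pure state, $K_{m,n}\ket{\psi_\mu} = \sqrt{r_{m,n,\mu}}\,\ket{\phi_{m,n,\mu}}$. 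By linearity of $\mathcal{E}_m$ this yields the explicit decomposition $\rho_m = \frac{1}{p_m}\sum_{\mu,n} q_\mu r_{m,n,\mu}\ket{\phi_{m,n,\mu}}\bra{\phi_{m,n,\mu}}$ with $p_m = \sum_{\mu,n} q_\mu r_{m,n,\mu}$, and for each fixed $\mu$ the collection $\{r_{m,n,\mu}, \ket{\phi_{m,n,\mu}}\}_{m,n}$ is precisely the outcome of a pure-state SLOCC applied to $\ket{\psi_\mu}$ (note $\sum_{m,n} r_{m,n,\mu}=1$ by trace preservation).

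The core of the argument is a chain of four inequalities that I would assemble on $\sum_m p_m e^{s(1-\alpha){\rm co}E_\gamma(\rho_m)}$. (i) Because ${\rm co}E_\gamma$ is a minimum over decompositions, the explicit fine-grained decomposition gives ${\rm co}E_\gamma(\rho_m) \leq \frac{1}{p_m}\sum_{\mu,n} q_\mu r_{m,n,\mu} E_\gamma(\phi_{m,n,\mu})$, and since $x \mapsto e^{s(1-\alpha)x}$ is increasing for $0<\alpha<1$, this bounds each exponential from above. (ii) Jensen's inequality applied to the convex exponential with the normalised weights $q_\mu r_{m,n,\mu}/p_m$ pulls the sum outside, cancelling the $1/p_m$ against the leading $p_m$. (iii) The monotonicity $E_\gamma \leq E_\alpha$ for $\gamma \geq \alpha$ replaces $E_\gamma(\phi_{m,n,\mu})$ by $E_\alpha(\phi_{m,n,\mu})$. (iv) For each fixed $\mu$, Theorem~\ref{Thm:REE} in the form $\sum_{m,n} r_{m,n,\mu} e^{s(1-\alpha)E_\alpha(\phi_{m,n,\mu})} \leq e^{s(1-\alpha)E_\alpha(\psi_\mu)}$ bounds the inner sum. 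Summing against $q_\mu$ gives exactly $\sum_\mu q_\mu e^{s(1-\alpha)E_\alpha(\psi_\mu)} = 1 + s(1-\alpha){\rm co}E_{(\alpha,s)}(\rho)$, and dividing by $s(1-\alpha)>0$ delivers the first inequality.

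For the probability bound I would apply the first inequality with monotone order $\beta \in (0,\alpha]$, the choice $s = 1/\beta$ (which saturates the constraint $s \leq 1/\beta$), and $\gamma = \alpha$, giving $\frac{1-\beta}{\beta}{\rm co}E_{(\beta,1/\beta)}(\rho) \geq \sum_m p_m e^{(\frac{1-\beta}{\beta}){\rm co}E_\alpha(\rho_m)} - 1$. Splitting the right-hand sum according to whether ${\rm co}E_\alpha(\rho_m) \geq E_{\rm target}$, and using $e^{(\frac{1-\beta}{\beta}){\rm co}E_\alpha(\rho_m)} \geq e^{(\frac{1-\beta}{\beta})E_{\rm target}}$ above threshold and $\geq 1$ below it (both because $\frac{1-\beta}{\beta}>0$), this lower bounds the sum by $1 + P({\rm co}E_\alpha \geq E_{\rm target})(e^{(\frac{1-\beta}{\beta})E_{\rm target}}-1)$. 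Rearranging and minimising over $0 \leq \beta \leq \alpha$ gives the stated bound; this step mirrors the pure-state proof of Proposition~2 once the first inequality is in hand.

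The main obstacle I anticipate is steps (i)--(ii): I must keep every inequality pointing the same way through the convex-roof \emph{minima}. The convex roof of the initial GEE enters as a fixed optimal decomposition (an equality), whereas the convex roof of the outcome REE is only \emph{upper} bounded by the induced fine-grained decomposition --- and this upper bound is useful precisely because it sits inside the increasing, convex exponential. The delicate bookkeeping is verifying that the physical outcomes $\{p_m,\rho_m\}$ of the instrument do not depend on which decomposition of $\rho$ we feed it, so that applying it to the optimal GEE decomposition is legitimate; this follows from linearity of $\mathcal{E}_m$, but it is the point where an incautious argument would break.
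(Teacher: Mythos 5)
Your proposal is correct and follows essentially the same route as the paper's proof: fix an optimal decomposition for ${\rm co}E_{(\alpha,s)}(\rho)$, use the fine-grained (pure-state-preserving) structure of the LOCC instrument to build an explicit decomposition of each outcome $\rho_m$, upper-bound ${\rm co}E_\gamma(\rho_m)$ by that decomposition inside the increasing exponential, apply Jensen, invoke the pure-state monotonicity (Theorem~1) branch by branch, and then run the Proposition~2 splitting argument with $s=1/\beta$ for the probability bound. The only cosmetic difference is that you replace $E_\gamma$ by $E_\alpha$ at the pure-state level inside the chain, whereas the paper proves the $\gamma=\alpha$ case first and then applies ${\rm co}E_\alpha \geq {\rm co}E_\gamma$ at the end --- these are equivalent.
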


\begin{proof}
We first show the inequality condition
\begin{equation}
\label{Eq:Appd2}
{\rm co}{E}_{(\alpha,s)} (\rho) \geq \frac{1}{s(1-\alpha)}\left[ \langle  e^{ s(1-\alpha) {\rm co}{E}_\alpha} \rangle -1\right],
\end{equation}
for $0 < \alpha <1$, $\alpha \leq \beta$, and $s \leq 1/\alpha$.
Let us suppose that $\{ q_\mu^*, \ket{\Psi_\mu^*} \}$, a pure state decomposition of $\rho$, minimising the left-hand-side of the inequality, i.e., ${\rm co}E_{(\alpha,s)} (\rho) = \sum_\mu q_\mu^* E_{(\alpha,s)} (\Psi_\mu^*) = \sum_\mu q_\mu^* \left[ \frac{1}{s(1- \alpha)} ( e^{s(1-\alpha) E_\alpha(\Psi_\mu^*)} -1)\right]$. 

Meanwhile, a general SLOCC protocol described by a coarse-grained LOCC instrument \cite{Chitambar14} can have a coarse-grained outcome $M$, which consists of fine-grained outcomes $m \in M$. In this case, the outcome state can be expressed as ${\cal E}^{(M)}_{\rm SLOCC} (\rho) = \sum_{m \in M} {\cal E}^{(m)}_{\rm SLOCC} (\rho) = p_M \rho_M$, where ${\cal E}^{(m)}_{\rm SLOCC}$ is a pure state SLOCC transformation that maps any pure state into another pure state and $p_M = \Tr[ {\cal E}^{(M)}_{\rm SLOCC} (\rho)] =  \sum_{m \in M} {\rm Tr}[{\cal E}^{(m)}_{\rm SLOCC} (\rho)]$. Then, it is possible to express the outcome state as $\rho_M = \frac{1}{p_M} \sum_{m \in M} {\cal E}^{(m)}_{\rm SLOCC} (\rho)  = \sum_{m \in M} \sum_\mu \left( \frac{q_\mu^* r_{\mu  m} }{p_M} \right) \ket{\Psi_\mu^m}\bra{\Psi_\mu^m}$ with ${\cal E}^{(m)}_{\rm SLOCC} (\ket{\Psi_\mu^*} \bra{\Psi_\mu^*} ) = r_{\mu m} \ket{\Psi_\mu^m} \bra{\Psi_\mu^m} $ satisfying $\sum_m r_{\mu m} = 1$. From the definition of the convex roof construction, we note that  ${\rm co}E_\alpha(\rho_M) \leq \sum_{m \in M} \sum_\mu \left( \frac{q_\mu^* r_{\mu m} }{p_M} \right) E_\alpha (\Psi_\mu^m)$. We then complete the proof as follows:
$$
\begin{aligned}
\frac{1}{s(1-\alpha)}\left[ \langle  e^{ s(1-\alpha) {\rm co}{E}_\alpha} \rangle -1\right] &=\frac{1}{s(1-\alpha)} \left[\sum_M p_M e^{ s (1-\alpha) {\rm co}{E}_\alpha (\rho_M)} -1 \right]\\
&\leq \frac{1}{s(1-\alpha)} \left[ \sum_{\mu,m}  q_\mu^* r_{\mu m} e^{ s (1-\alpha)  E_\alpha (\Psi_\mu^m) } -1 \right]\\
&\leq \frac{1}{s(1-\alpha)} \left[  \sum_\mu  q_\mu^* e^{ s (1-\alpha) E_\alpha (\Psi_\mu^*) } -1 \right]\\
& = {\rm co}E_{(\alpha,s)} (\rho).
\end{aligned}
$$
The first inequality is obtained from the convexity of the exponential function and the second inequality comes from the monotonicity of ${\rm co}E_{(\alpha,s)} (\rho)$. By noting that ${\rm co}{E}_\alpha \geq {\rm co}{E}_\gamma$ for $0 \leq \alpha \leq \gamma$, we obtain
$$
{\rm co}E_{(\alpha,s)} (\rho) \geq \frac{1}{s(1-\alpha)}\left[ \langle  e^{ s(1-\alpha) {\rm co}{E}_\alpha} \rangle -1\right] \geq \frac{1}{s(1-\alpha)}\left[ \langle  e^{ s(1-\alpha) {\rm co}{E}_\gamma} \rangle -1\right],
$$
which completes the proof. By taking $s=1/\alpha$, the probability bound,
$$
P\left(  {\rm co}E_\alpha  \geq E_{\rm target} \right) \leq \min_{0 \leq  \beta \leq \alpha^*} \left[ \frac{ \left(\frac{1-\beta}{\beta} \right) {\rm co}{E}_{(\beta,1/\beta)} (\rho) }{e^{\left(\frac{1-\beta}{\beta} \right) E_{\rm target}}-1} \right],
$$
is obtained by using the same argument in Proposition 2 and that ${\rm co}E_\beta \geq {\rm co}E_\alpha$ for $0 \leq \beta \leq \alpha$.
%Next, we show that the right-hand-side of the inequality is maximised when $s=1/\alpha$. We note that for any $x >0$,
%$$
%h(s) := \frac{1}{s} \left( e^{sx} - 1 \right)
%$$
%is monotonically increasing on $s$. Therefore, the right-hand-side of Eq.~\eqref{Eq:Appd2} is maximised for the largest value of $s$ within the valid regimes, which is $s = 1/\alpha$.
\end{proof}

\end{document}